\newif \iffull  \fulltrue
\newif \ifdraft \drafttrue
\newcommand\Small{\fontsize{8.2pt}{8.4pt}\selectfont}
\newcommand*\LSTfont{\Small\ttfamily\SetTracking{encoding=*}{-60}\lsstyle}
\newcommand{\lstt}[1]{\mbox{\LSTfont #1}}
\definecolor{DarkGreen}{rgb}{0.1,0.5,0.1}
\definecolor{DarkRed}{rgb}{0.5,0.1,0.1}
\definecolor{DarkBlue}{rgb}{0.1,0.1,0.5}
\def\Rextp{\ensuremath{\mathbb{S}}\xspace}
\def\R{\mathbb{R}}
\def\N{\mathbb{N}}
\def\st{\mid}
\def\umodels{\models^U}
\newcommand{\vsub}{\ensuremath{\models}}
\newcommand{\vd}{\ensuremath{\vdash}}
\newcommand{\vs}{\ensuremath{\vdash_{\mathcal{S}}}}
\newcommand{\req}[1]{{#1}_{\Box\uparrow}}
\def\pair#1#2{{\langle #1 , #2 \rangle}}
\def\amp{\mathrel{\binampersand}}
\def\lin{\multimap}
\def\lol{\multimap}
\def\scale{{!}}
\def\evtoign#1{ \hookrightarrow} 
\def\blet{\mathop{\bf let}}
\def\bin{\mathop{\bf in}}
\def\bcase{\mathop{\bf case}\nolimits}
\def\bof{\mathop{\bf of}}
\def\myrel#1\over#2{\mathrel{\mathop{\kern0pt#2}\limits_{#1}}}
\def\breturn{\mathop{\mathbf{return}}}
\def\bclub{\mathop{\mathbf{club}}}
\def\lb{\llbracket}
\def\rb{\rrbracket}
\def\tless{\sqsubseteq}
\def\esens{\ensuremath{\hat{R}}}
\def\smax#1#2{\ensuremath{\mathbf{max}({#1},{#2})}}
\def\ssup#1#2{\ensuremath{\mathbf{sup}({#1},{#2})}}
\def\scase#1#2#3#4{\ensuremath{\mathbf{case}(#1 , {#2}, {#3}, {#4})}}
\def\sbound{\ensuremath{\mathbf{bound}}}
\def\club#1{\ensuremath{\mathbf{club}{\{ #1 \}}}}
\def\csens#1#2#3{\ensuremath{(#1; #2; #3)}}
\def\cred{\mapsto}
\newcommand{\Fuzz}{{{\em Fuzz}}\xspace}
\newcommand{\DFuzz}{{{\em DFuzz}}\xspace}
\newcommand{\EDFuzz}{{{\em EDFuzz}}\xspace}
\newcommand{\UDFuzz}{{{\em UDFuzz}}\xspace}
\newcommand{\dlPCF}{$\mathsf{d}\ell\mathsf{PCF}$}
\newcommand{\PCF}{$\mathsf{PCF}$}
\newcommand{\cass}[3]{{#1} :_{[#2]} {#3}}
\newcommand{\cempty}[2]{{#1} :_{\Box} {#2}}
\def\contextone{\Gamma}
\def\contexttwo{\Delta}
\def\contextthree{\Sigma}
\def\ctxo{\contextone}
\def\ctxw{\contexttwo}
\def\ctxt{\contextthree}
\def\ctxskel{\ctxo^\bullet}
\def\ctxwskel{\ctxw^\bullet}
\newcommand{\iconstraintone}{\Phi}
\newcommand{\iconstrainttwo}{\Psi}
\def\cso{\iconstraintone}
\def\csw{\iconstrainttwo}
\newcommand{\icontextone}{\phi}
\newcommand{\icontexttwo}{\psi}
\def\tctxo{\icontextone}
\def\tctxw{\icontexttwo}
\def\ectxo{\mathcal{E}}
\def\cctxo{\mathcal{C}}
\def\typeone{\sigma}
\def\typetwo{\tau}
\def\typethree{\mu}
\def\tyo{\typeone}
\def\tyw{\typetwo}
\def\tyt{\typethree}
\def\rrb{\rrparenthesis}
\def\llb{\llparenthesis}
\def\sizeivarone{i}
\def\sensitermone{R}
\def\sizeitermone{S}
\def\tless{\sqsubseteq}
\def\varone{x}
\def\termone{e}
\def\tmo{\termone}
\def\realone{\mathrm{r}}
\def\natone{\mathrm{n}}
\newcommand\cclean[1]{{#1}^\bullet}
\newcommand\sclean[1]{\overline{#1}}
\newcommand{\rname}[1]{\quad #1}
\newcommand{\brname}[1]{(#1)\xspace}
\newcommand{\struletens}{\ensuremath{\brname{\tless.\otimes}}}
\newcommand{\struleamp}{\ensuremath{\brname{\tless.\amp}}}
\newcommand{\strulelin}{\ensuremath{\brname{\tless.\lin}}}
\newcommand{\struleforall}{\ensuremath{\brname{\tless.\forall}}}
\newcommand{\rulesubl}{\ensuremath{\brname{\mathrm{\tless.L}}}}
\newcommand{\rulesubr}{\ensuremath{\brname{\mathrm{\tless.R}}}}
\newcommand{\rulevar}{\ensuremath{\brname{\mathrm{Var}}}}
\newcommand{\ruleconst}{\ensuremath{\brname{\mathrm{Const}}}}
\newcommand{\ruleconstR}{\ensuremath{\brname{\mathrm{Const}_\R}}}
\newcommand{\ruleconstN}{\ensuremath{\brname{\mathrm{Const}_\N}}}
\newcommand{\rulefix}{\ensuremath{\brname{\mathrm{Fix}}}}
\newcommand{\ruleitens}{\ensuremath{\brname{\otimes I}}}
\newcommand{\ruleetens}{\ensuremath{\brname{\otimes E}}}
\newcommand{\ruleiamp}{\ensuremath{\brname{\amp I}}}
\newcommand{\ruleeamp}{\ensuremath{\brname{\amp E}}}
\newcommand{\ruleiapp}{\ensuremath{\brname{\multimap I}}}
\newcommand{\ruleeapp}{\ensuremath{\brname{\multimap E}}}
\newcommand{\ruleitapp}{\ensuremath{\brname{\forall I}}}
\newcommand{\ruleetapp}{\ensuremath{\brname{\forall E}}}
\newcommand{\rulenats}{\ensuremath{\brname{\mathrm{S}~I}}}
\newcommand{\rulenate}{\ensuremath{\brname{\N~E}}}
\newcommand{\semu}[1]{\llbracket #1\rrbracket}
\newcommand{\dom}{{\text{dom}}}
\newcommand{\bra}[1]{\{#1\}}
\newcommand{\vi}{\vec{i}}
\def\bin{\mathrel{\bf in}}
\def\blet{\mathop{\bf let}}
\def\breturn{\mathop{\bf return}}
\def\bcase{\mathop{\bf case}}
\def\bfix{\mathop{\bf fix}}
\def\bof{\mathop{\bf of}}
\def\bsucc{{\bf s}}
\DeclareMathSymbol{:}{\mathrel}{operators}{`:}%
\DeclareMathSymbol{!}{\mathord}{operators}{`!}%
\def\sftrue{\mathsf{true}}
\newcommand{\algin}[4]{ {#1}; {#2}; {#3}; {#4}}
\newcommand{\algout}[4]{ #3; #4}
\def\produces{\Longrightarrow}
\def\emptyctx{\emptyset}
\newcommand{\qand}{\quad \text{and} \quad}
\newtheorem{theorem}{Theorem}
\newtheorem{definition}[theorem]{Definition}
\newtheorem{lemma}[theorem]{Lemma}
\newtheorem{remark}[theorem]{Remark}
\def\eg{\egnote}
\def\jh{\jhnote}
\def\aa{\aanote}
\begin{document}

\iffull
\else

\setlength{\pdfpageheight}{\paperheight}
\setlength{\pdfpagewidth}{\paperwidth}

\conferenceinfo{IFL'14}{October 1st--3rd, 2014, Boston, MA, USA}
\copyrightyear{20yy}
\copyrightdata{978-1-4503-3284-2/14/10}
\doi{nnnnnnn.nnnnnnn}




\titlebanner{PREPRINT}        
\preprintfooter{PREPRINT}   
\fi

\title{Really Natural Linear Indexed Type Checking}

\iffull
\author{Arthur Azevedo de Amorim \qquad
  Emilio Jes\'us Gallego Arias \\
  Marco Gaboardi \qquad
  Justin Hsu}
\else
\authorinfo{Arthur Azevedo de Amorim}
           {University of Pennsylvania}{}
\authorinfo{Marco Gaboardi}
           {University of Dundee}{}
\authorinfo{Emilio Jes\'us Gallego Arias}
           {University of Pennsylvania}{}
\authorinfo{Justin Hsu}
           {University of Pennsylvania}{}
\fi

\maketitle

\begin{abstract}
  Recent works have shown the power of {\em linear indexed type systems} for
  enforcing complex program properties. These systems combine {\em linear types}
  with a language of {\em type-level indices}, allowing more fine-grained
  analyses. Such systems have been fruitfully applied in diverse domains,
  including implicit complexity and differential privacy.

  A natural way to enhance the expressiveness of this approach is by allowing
  the indices to depend on runtime information, in the spirit of dependent
  types.  This approach is used in \DFuzz, a language for differential privacy.
  The \DFuzz~type system relies on an index language supporting real and natural
  number arithmetic over constants and variables. Moreover, \DFuzz~uses a
  subtyping mechanism to make types more flexible. By themselves, linearity,
  dependency, and subtyping each require delicate handling when performing type
  checking or type inference; their combination increases this challenge
  substantially, as the features can interact in non-trivial ways.

  In this paper, we study the type-checking problem for \DFuzz.  We
  show how we can reduce type checking for (a simple extension of)
  \DFuzz~to constraint solving over a first-order theory of naturals
  and real numbers which, although undecidable, can often be handled
  in practice by standard numeric solvers.
\end{abstract}
\iffull\else
\category{F.3.3}{Studies of Program Constructs}{Type structure}


\keywords
type checking, type inference, linear types, subtyping, sensitivity analysis
\fi

\section{Introduction}

{\em Linear indexed type systems} have been used to ensure safety
properties of programs with respect to different kinds of resources; examples
include usage analysis~\citep{Wadler91,conf/sas/WrightB93},
implicit complexity~\citep{GSS92,conf/esop/LagoS10,DLG11lics},
sensitivity analysis~\citep{Fuzz,DFuzz}, automatic timing
analysis~\citep{conf/popl/GhicaS11,conf/esop/GhicaS14}, and more.
Linear indexed types use a type-level \emph{index language}
to describe resources and \emph{linear types} to reason about the
program's resource usage in a compositional way.

One limitation of such systems is that resource usage is inferred independently
of the control flow of a program---e.g. the typing rule for branching usually
approximates resources by taking the maximal usage of one of the branches, and
recursion imposes even greater restrictions.
To improve this scenario, some authors have proposed extending such systems with
dependent types, using type indices to capture both resource usage and the
\emph{size information} of a program's input. This significantly enriches the
resulting analysis by allowing resource usage to depend on runtime information.
Linear dependently typed systems have been used in several domains, including
implicit complexity~\citep{DLG11lics,conf/ppdp/LagoP12} and sensitivity
analysis~\citep{DFuzz}.

Of course, there is a price to be paid for the increase in expressiveness: type
checking and type inference become inevitably more complex.
In linear indexed type systems, these tasks are often done in two stages: a
standard Hindley-Milner-like pass, followed by a constraint-solving procedure.
In some cases, the generated constraints can be solved automatically by using
custom algorithms~\citep{conf/aplas/LagoS10} or off-the-shelf SMT
solvers~\citep{FPCDSL,conf/esop/GhicaS14}. However, the constraints are specific
to the index language, and richer index languages often lead to more complex
constraints.

\subsection*{Type-checking \DFuzz}
In this paper we will focus on the type-checking problem for a particular
programming language with linear dependent types: \DFuzz~\citep{DFuzz}, a
dependently-typed extension of the \Fuzz programming language~\citep{Fuzz}.

\Fuzz uses linear indexed types to reason about programs in the context of
differential privacy. Its indices are real numbers that provide upper bounds on
the \emph{sensitivity} of a program, a quantity that measures the distance
between outputs on nearby inputs. In this setting, type checking and inference
result in a simple but effective static analysis for function sensitivity.
Indeed, as shown by~\citet{FPCDSL}, both of these can be performed efficiently
by using an SMT solver to discharge the numeric proof obligations arising from
the type system.

While \Fuzz works well on a variety of simple programs, it has a fundamental
limitation: sensitivity information cannot depend on runtime information, such
as the size of a data structure.
This is what \DFuzz is designed to handle.  \DFuzz indices combine information
about the \emph{size} of data structures with information about the {\em
  sensitivity} of functions. Technically, this is achieved by considering an
index language with variables ranging over integers (to refer to runtime sizes)
and reals (to refer to runtime sensitivities). This richer index language,
combined with dependent pattern-matching and subtyping, achieves increased
expressiveness in the analysis, providing sensitivity bounds beyond \Fuzz's
capabilities.

However, adding variables to the index language has a significant impact on the
difficulty of type checking. Concretely, since the index language also supports
addition and multiplication, index terms are now {\em polynomials} over the
index variables. Instead of constraints between real constants like in \Fuzz,
type checking constraints in \DFuzz may involve general polynomials.

A natural first approach is to try to extend the algorithm proposed
by~\citet{FPCDSL} to work with the new index language by simply
generating additional constraints when dealing with the new language
constructs. This would be similar in spirit to the work of
\citet{DLP13popl} for type inference for \dlPCF, a linear
dependently typed system for complexity analysis. A crucial difference
between that setting and \DFuzz is that the index language of
\dlPCF~can be extended by arbitrary (computable) functions. This makes
the approach to type inference for \dlPCF~proposed by Dal Lago and
Petit the most natural, since such functions can be used as direct
solutions to some of the introduced constraints.

However, such an approach does not work as well for \DFuzz, which opts for a
much smaller index language. While it may be possible to extend \DFuzz's index
language with general functions, we opt to keep the index language simple.
Instead, since the type system of \DFuzz also supports subtyping, we consider a
different approach inspired by techniques from the literature on subtyping~\cite{DBLP:conf/procomet/SteffenP94} and
 on constraint-based type-inference
approaches~\cite{DBLP:journals/tapos/OderskySW99,Pottier-Remy/emlti,Heeren02generalizinghindley-milner}.%

The main idea is to type-check a program by inferring some set of
sensitivities for it, and then testing whether the resulting type is a subtype
of the desired type. To obtain completeness (relative to checking the subtype), one must ensure that the
inferred sensitivities are the ``best'' possible for that
term. Unfortunately, the \DFuzz index language is not rich enough for
expressing such sensitivities. For instance, some cases require taking
the maximum of two sensitivity expressions, something that cannot be
done in the language of polynomials. We solve this problem by
extending the index language with three syntactic constructs,
resulting in a new type system that we name \EDFuzz. This new system
has meta-theoretic properties that are similar to those of \DFuzz, but also
simplifies the search for minimal sensitivities. Using
these new constructs, we design a sensitivity-inference algorithm for
\EDFuzz~which we show sound and complete, modulo constraint
resolution.

We now face the problem of solving the constraints generated by our
algorithm. First, we show how to compile the constraints generated by
the algorithmic systems to constraints in the first-order theory over
mixed integers and reals. This way, we can still use a numeric solver
without resorting to custom symbolic resolution.  Unfortunately, the
presence of natural numbers in the constraints has important
consequences: we show that \DFuzz type-checking is undecidable by
reducing from Hilbert's tenth problem, a standard undecidable problem.

While this result shows that we can't have a terminating type-checker
that is both sound and complete, not everything is lost. We first show that
by approximating the constraints, we obtain a sound and computable method to
type-check \EDFuzz programs. We show that this procedure can successfully
type-check a fragment of \EDFuzz which we call \UDFuzz; almost all of the
examples proposed by \citet{DFuzz} belong to this class. Of course, \UDFuzz is a
strict subset of \EDFuzz, and it is not hard to come up with well-typed programs
in \EDFuzz that are invalid under \UDFuzz.

Finally, we present a constraint simplification procedure that can significantly
reduce the complexity of our translated constraints (measured by the number of
alternating quantifiers), even when checking full \EDFuzz.

\subsection*{Contributions}

We briefly overview the \DFuzz programming language in
\Cref{sec:dfuzz}, to move to an informal exposition of the main
challenges involved in \Cref{sec:challenge}. Then, we present the
main contributions of the paper:
\begin{itemize}
\item \EDFuzz: an extension of \DFuzz with a more expressive sensitivity
  language that gives programs more precise types (\Cref{sec:edfuzz});
\item a sound and complete algorithm that reduces type checking and
  sensitivity inference in \EDFuzz to constraint solving over the first-order
  theory of $\N$ and $\R$ (\Cref{sec:algorithmic} and
  \Cref{sec:compilation});
\item a proof of undecidability of type checking in \DFuzz (and
  \EDFuzz)
  (\Cref{sec:undecidability});
\item a sound translation from the previous type-checking constraints
  to the first-order theory of the real numbers, a decidable theory
  (\Cref{sec:compilationReals}); and
\item a simplification procedure to make the constraints more amenable to
  automatic solving (\Cref{sec:simplification}).
\end{itemize}

Additionally, we have developed a prototype implementation of the
above, which we discuss in~\Cref{sec:implementation}.




\section{The \DFuzz System}
\label{sec:dfuzz}

\DFuzz~\citep{DFuzz} is a language for writing and verifying
differentially private programs. At its core lies a type system for
tracking function \emph{sensitivity}:
\begin{definition}
  Given two metric spaces $X,Y$, the sensitivity (or Lipschitz
  constant) of a function $f : X \to Y$ is a number $k$ such that
  $d_Y(f(x),f(x'))\leq k d_X(x,x')$ for all $x,x' \in X$. In this
  case, we say that $f$ is $k$-sensitive (or $k$-Lipschitz
  continuous).
\end{definition}
The precise relationship between differential privacy and function sensitivity
is beyond the scope of this paper; we refer the reader to previous
work~\citep{Fuzz,DFuzz} for more information. What is important for present
purposes is that \DFuzz uses a linear dependently-typed system for analyzing
function sensitivity. Let us begin with a brief presentation of \DFuzz before
discussing the type-checking challenges.

\subsection{Syntax and Types}
\DFuzz is an extension of PCF with dependent indexed linear types. Indices
consist of numeric constants; index-level variables, which range over {\em
  sizes} (natural numbers) or {\em sensitivities} (positive reals extended with
$\infty$, denoted $\Rextp$); and addition and multiplication of indices.  The
syntax of \DFuzz, including types, terms, and the index language, is shown in
\Cref{fig:dfuzz-terms}, which we briefly overview. Here, we omit some features
of the original system to keep our presentation simple.

\begin{itemize}
\item Abstraction and application for index variables are captured by the
  $\Lambda i : \kappa. e$ and $e[R]$ terms, with $\kappa$ representing the kind
  for $i$. We refer to variables of kind $\natone$ as {\em size variables},
  while variables of kind $\realone$ are {\em sensitivity variables}.
%
  \item Singleton types \eg{citation needed} $\N[S]$ and $\R[R]$ are used to
    related type-level sizes and sensitivities with term-level sizes and
    sensitivities.
  \item Dependent pattern matching over $\N[S]$ types is captured by the
    $\bcase$ construction.
  \item Linear functions indexed by $R$ are written $!_R \tyo \lin \tyw$.
  \item Variable environments $\ctxo$ carry an additional annotation for
    assignments $\cass{x}{R}{\tyo}$, representing the current sensitivity $R$
    for the variable $x$.
  \item Index variable environments $\tctxo$ specify the kinding of index
    variables.
  \item Constraint environments $\Phi$ store assumptions introduced under
    dependent pattern matching. Often, we will think of a constraint environment
    as the conjunction of its constraints.
\end{itemize}

\subsection{Environment Operations}
As in many similar systems, \DFuzz defines operations on variable
environments. Specifically, we can add two environments $\ctxo, \ctxw$, and
scale a single environment $\ctxo$ can by a sensitivity expression $R$.  We
define environment multiplication $R \cdot \ctxo$ as the operation taking every
element $\cass{x_i}{r_i}{\tyo_i}$ of $\ctxo$ to $\cass{x_i}{R\cdot
  r_i}{\tyo_i}$. Environment addition is defined iff all the common assignments
of $\ctxo$, $\ctxw$ map to the same type, that is to say, forall $x_i$ in
$\dom(\ctxo) \cap \dom(\ctxw)$, $(\cass{x_i}{R_i}{\tyo_i}) \in \ctxo \iff
(\cass{x_i}{S_i}{\tyo_i}) \in \ctxw$, where we write $\dom(\ctxo)$ for the
domain of an environment. In this case:
\begin{equation*}
  \begin{array}{lcll}
    \ctxo + \ctxw &=& \bra{\cass{x_i}{R_i + S_i}{\tyo} &\mid x_i \in
      \dom(\ctxo) \cap \dom(\ctxw)} \\
    &\cup& \bra{\cass{x_j}{R_j}{\tyo_j} &\mid x_j \in \dom(\ctxo) -
      \dom(\ctxw)} \\
    &\cup& \bra{\cass{x_k}{R_k}{\tyo_k} &\mid x_k \in
      \dom(\ctxw) - \dom(\ctxo)}
  \end{array}
\end{equation*}
\begin{figure}
  \centering
  \begin{equation*}
    \begin{array}[t]{lrl@{\hspace{-1cm}}r}
      \kappa & ::= & \realone \st \natone & \text{(kinds)}\\
      \Rextp & ::= & \R^{\ge 0} \cup \bra{\infty} & \text{(extended positive reals)}\\
      S & ::= & i \st 0 \st S + 1 & \text{(sizes)} \\
      R & ::= & \Rextp \st i \st S \st R + R \st R \cdot R
      &
      \text{(sensitivities)}\\
      \tyo,\tyw & ::= &
              \R
              \st \R[R]
              \st \N[S]
              \st !_{{R}} \tyo \multimap \tyw
              & \text{(types)}
              \\ & \st &
              \forall i : \kappa.\; \tyo
              \st \tyo \otimes \tyw
              \st \tyo\amp\tyw
              \\
        \tmo   & ::= & x
                \st \N
                \st \bsucc\ \tmo
                \st \R^{\geq 0}
                \st \bfix~(x : \tyo) .e
                & \text{(expressions)}
              \\ & \st &
                \lambda \cass{x}{R}{\tyo}. \tmo
                \st e_1\; e_2
              \\ & \st &
                \Lambda i : \kappa .\; \tmo
                \st \tmo[R]
              \\ & \st &
                \pair{e_1}{e_2}
                \st \pi_i\; e
              \\ & \st &
                (e_1, e_2)
                \st \blet~(x, y) = e \bin e'
              \\ & \st &
                \bcase e \bof 0 \Rightarrow e_0 \mid n_{[i]} + 1 \Rightarrow e_s
                \\
       \ctxo, \ctxw & ::= & \emptyset   \st \ctxo, \cass{x}{R}{\tyo} & \text{(environments)}\\
       \tctxo, \tctxw & ::= & \emptyset \st \tctxo, i : \kappa & \text{(sens. environments)}\\
       \cso, \csw & ::= & \top   \st \cso, S = 0 \st \cso, S = i + 1 & \text{(constraints)}\\
  \end{array}
  \end{equation*}
  \caption{\DFuzz Types and Expressions}
  \label{fig:dfuzz-terms}
\end{figure}

\subsection{Subtyping}
\DFuzz has a notion of subtyping, which intuitively corresponds to a
standard property of function sensitivity: a $k$-sensitive
function is also $k'$-sensitive for all $k' \ge k$. Furthermore, subtyping in
\DFuzz is the mechanism that allows types to use information from the constraint
environment; in this use, subtyping allows a form of type coercion.
We consider here a slightly
  simpler definition of subtyping than the one used in
  \citet{DFuzz}. In the environments we require subtyping to preserve
  the internal type. This slight modification will allow us to simplify
  some rules of the type-checking algorithm.

The semantics of the subtying relation is defined by interpreting
sensitivity expressions as functions that produce sensitivity
values. Formally, let $R$ be a sensitivity expression, well-typed
under environment $\phi$, and $\rho$ a suitable variable
\emph{valuation} (i.e., a function that maps each variable $i :
\kappa$ in $\phi$ to an element of $\lb \kappa \rb$, with $\lb \natone
\rb = \N$ and $\lb \realone \rb = \Rextp$). We then define
$\semu{R}_\rho$ as follows:
\begin{equation*}
  \begin{array}{rll@{\quad}l}
    \lb 0 \rb_\rho &:=& 0 \\
    \lb S + 1 \rb_\rho &:=& \lb S \rb_\rho + 1 \\
    \lb i \rb_\rho &:=& \rho(i) & \text{$i$ a variable} \\
    \lb r \rb_\rho &:=& r       & \text{$r$ a constant} \\
    \lb R_1 + R_2 \rb_\rho &:=& \lb R_1 \rb_\rho + \lb R_2 \rb_\rho \\
    \lb R_1 \cdot R_2 \rb_\rho &:=& \lb R_1 \rb_\rho \cdot \lb R_2 \rb_\rho
  \end{array}
\end{equation*}

Then, the standard ordering $\ge$ on $\Rextp$ induces an ordering on index
terms, which we can then extend to a subtype relation $\tless$ on types and
environments; the rules can be found in \Cref{fig:dfuzz-sub}. Note that checking
happens under the current constraint environment $\Phi$, so subtyping may use
information recovered from a dependent match.

The leaves of the subtype derivation are assertions $\tctxo; \cso \models R_1
\ge R_2$. These are defined logically as
\begin{equation*}
  \forall \rho \in \mathsf{val}(\tctxo). \semu{\cso}_\rho \Rightarrow
  \semu{R_1}_\rho \ge \semu{R_2}_\rho,
\end{equation*}
where $\mathsf{val}(\tctxo)$ is the set of all valid valuations for environment
$\tctxo$, and $\semu{\cso}_\rho$ is the conjunction of the denotations of each
formula in $\cso$, defined the usual way.

%
\begin{figure}
  \centering
  \begin{mathpar}
  \inferrule
   { }
   { \tctxo;\cso \vsub \tyo \tless \tyo  }                    \rname{\text{$\tless$-Refl}} \and
  \inferrule
   { \tctxo;\cso \vsub \tyo' \tless \tyo \\ \tctxo;\cso \vsub\tyw \tless \tyw' }
   { \tctxo;\cso \vsub \tyo \amp \tyw \tless \tyo' \amp \tyw'  }                       \rname{\struleamp} \and
  \inferrule
   { \tctxo;\cso \vsub \tyo \tless \tyo' \\ \tctxo;\cso \vsub\tyw \tless \tyw' }
   { \tctxo;\cso \vsub \tyo \otimes \tyw \tless \tyo' \otimes \tyw'  }                     \rname{\struletens} \and
  \inferrule
   { \tctxo;\cso \models R \le R' \\\\
     \tctxo ; \cso \vsub \tyo' \tless \tyo \\ \tctxo;\cso\vsub\tyw \tless \tyw' }
   { \tctxo;\cso \vsub !_R \tyo \multimap \tyw \tless !_{R'} \tyo' \multimap \tyw'  }                    \rname{\strulelin} \and
  \inferrule
   { \tctxo, i:\kappa; \cso \vsub \tyo \tless \tyw \\ \text{$i$ fresh in $\tctxo$} }
   { \tctxo;\cso \vsub \forall i: \kappa.\; \tyo \tless \forall i: \kappa.\; \tyw  } \rname{\struleforall} \and
  \inferrule
   { \forall(\cass{x}{R'}{\tyo})\in\ctxw, \exists R, (\cass{x}{R}{\tyo})\in\ctxo \wedge
     \left(\tctxo; \cso \models R_i \ge R'_i\right) }
   { \tctxo;\cso \vsub \ctxo  \tless \ctxw  } \rname{\text{$\tless$-Env}} \and
  \end{mathpar}
  \caption{\DFuzz Subtyping Relation}
  \label{fig:dfuzz-sub}
\end{figure}
\subsection{Typing}
\label{sec:typing}
Typing judgments for \DFuzz are of the form
\begin{equation*}
  \tctxo;\cso \mid \ctxo \vd \tmo : \tyo
\end{equation*}
meaning that term $\tmo$ has type $\tyo$ under environments $\tctxo$
and $\ctxo$ and constraints $\cso$; full rules are shown in
\Cref{fig:dfuzz-typing}.



We highlight here just the most complex rule, the dependent pattern-matching
rule $\rulenate$, which allows each branch to be typed under different
assumptions on the type $\N[S]$ of the scrutinee ($e$).  The left branch $e_0$
is typed under the assumption $S = 0$, while the right branch $e_s$ is typed
under the assumption $S = i + 1$ for some $i$. Combined with the rule for
fixpoints \rulefix, this allows us to express programs whose sensitivity depends
on the number of iterations or number of input elements. These rules also
require implicitly that all sensitivity (and size) expressions be well-typed
under the appropriate environments, which we note $\tctxo \vdash R$. Readers
interested in more details can consult \citet{DFuzz}; we follow their
presentation closely except for a few points, which we detail in the Appendix.
\newcommand{\dfuzzsubl}{
  \inferrule
  { \tctxo;\cso\mid \ctxw \vd \tmo : \tyo \\ \tctxo;\cso \vsub \ctxo \tless \ctxw}
  { \tctxo;\cso\mid \ctxo \vd \tmo : \tyo }                    \rname{\rulesubl}
}
\newcommand{\dfuzzsubr}{
  \inferrule
  { \tctxo;\cso\mid \ctxo \vd \tmo : \tyo \\ \tctxo;\cso \vsub \tyo \tless \tyw}
  { \tctxo;\cso\mid \ctxo \vd \tmo : \tyw }                    \rname{\rulesubr}
}
\newcommand{\dfuzzconst}{
  \inferrule
  { r \in \R }
  { \tctxo;\cso\mid \ctxo \vd  r : \R }                    \rname{\ruleconstR}
}
\newcommand{\dfuzznatc}{
  \inferrule
  { n = \lb S \rb }
  { \tctxo;\cso\mid \ctxo \vd  n : \N[S] }                 \rname{\ruleconstN}
}
\newcommand{\dfuzzvar}{
  \inferrule
        {  }
  { \tctxo;\cso\mid \ctxo, \cass{x}{1}{\tyo} \vd  x : \tyo }                \rname{\rulevar}
}
\newcommand{\dfuzzfix}{
  \inferrule
  { \tctxo;\cso\mid \ctxo, \cass{x}{\infty}{\tyo} \vd \tmo : \tyo  }
  { \tctxo;\cso \mid \infty\cdot \ctxo \vd \bfix~(x : \tyo).\tmo : \tyo }            \rname{\rulefix}
}
\newcommand{\dfuzzitens}{
  \inferrule
  { \tctxo;\cso\mid \ctxo_1 \vd e_1 : \tyo \\ \tctxo;\cso\mid \ctxo_2 \vd e_2 : \tyw }
  { \tctxo;\cso\mid \ctxo_1 + \ctxo_2 \vd (e_1, e_2) : \tyo \otimes   \tyw } \rname{\ruleitens}
}
\newcommand{\dfuzzetens}{
  \inferrule
  { \tctxo;\cso\mid \ctxw \vd e : \tyo \otimes \tyw  \\
    \tctxo;\cso\mid \ctxo, \cass{x}{R}{\tyo}, \cass{y}{R}{\tyw} \vd e' : \tyt }
  { \tctxo;\cso\mid \ctxo + R \cdot \ctxw
    \vd \blet~(x, y) = e \bin e' : \tyt}         \rname{\ruleetens}
}
\newcommand{\dfuzziamp}{
  \inferrule
  { \tctxo;\cso\mid \ctxo \vd e_1 : \tyo \\ \tctxo;\cso\mid \ctxo \vd e_2 : \tyw }
  { \tctxo;\cso\mid \ctxo \vd \pair{e_1}{e_2} : \tyo \amp   \tyw } \rname{\ruleiamp}
}
\newcommand{\dfuzzeamp}{
  \inferrule
  { \tctxo;\cso\mid \ctxo \vd e : \tyo_1 \amp \tyo_2 }
  { \tctxo;\cso\mid \ctxo \vd \pi_i \;e : \tyo_i }                  \rname{\ruleeamp}
}
\newcommand{\dfuzziapp}{
  \inferrule
    { \tctxo;\cso\mid \ctxo, \cass{x}{R}{\tyo} \vd  e : \tyw }
  { \tctxo;\cso\mid \ctxo \vd  \lambda \cass{x}{R}{\tyo}. e : !_R \tyo \multimap \tyw } \rname{\ruleiapp}
}
\newcommand{\dfuzzeapp}{
  \inferrule
  { \tctxo;\cso\mid \ctxo \vd  e_1 : !_R \tyo \multimap \tyw \\
    \tctxo;\cso\mid \ctxw \vd  e_2 : \tyo }
  { \tctxo;\cso\mid \ctxo + R\cdot\ctxw \vd e_1\;e_2 : \tyw }           \rname{\ruleeapp}
}
\newcommand{\dfuzzitapp}{
  \inferrule
    { \tctxo, i : \kappa;\cso \mid \ctxo \vd e : \tyo \\ \text{$i$ fresh in $\cso, \ctxo$}}
  {  \tctxo;\cso \mid \ctxo \vd  \Lambda i:\kappa .\; e : \forall i:\kappa.\; \tyo } \rname{\ruleitapp}
}
\newcommand{\dfuzzetapp}{
  \inferrule
    { \tctxo;\cso\mid \ctxo \vd e : \forall i : \kappa.\; \tyo
      \\ \tctxo \models S : \kappa}
  { \tctxo;\cso \mid \ctxo \vd e[S] : \tyo[S/i] }                   \rname{\ruleetapp}
}
\newcommand{\dfuzznats}{
  \inferrule
  { \tctxo;\cso\mid \ctxo \vd  e : \N[S] }
  { \tctxo;\cso\mid \ctxo \vd  \bsucc\ e : \N[S+1] }                \rname{\rulenats}
}
\newcommand{\dfuzznate}{
  \inferrule
  { \tctxo;\cso \mid \ctxw \vd e : \N[S] \\
    \tctxo ;\cso, S = 0     \mid  \ctxo \vd e_0 : \tyo \\\\
    \tctxo, i : \natone ;\cso, S = i + 1 \mid \ctxo, \cass{n}{R}{\N[i]} \vd e_s
  : \tyo \\
  i \text{ fresh in } \tctxo }
  { \tctxo;\cso\mid \ctxo + R \cdot \ctxw
    \vd \bcase e \breturn \tyo
    \bof 0 \Rightarrow e_0 \mid n_{[i]} + 1 \Rightarrow e_s : \tyo} \rname{\rulenate}
}
\begin{figure*}
  \centering
\begin{mathpar}
  \dfuzzsubl \and
  \dfuzzsubr  \\
  \dfuzzconst \and
  \dfuzznatc \\
  \dfuzzvar \and
  \dfuzzfix \\
  \dfuzziapp \and
  \dfuzzeapp \and
  \dfuzzitapp \and
  \dfuzzetapp \\
  \dfuzzitens \and
  \dfuzzetens \and
  \dfuzziamp \and
  \dfuzzeamp \\
  \dfuzznats \and
  \dfuzznate
\end{mathpar}
  \caption{\DFuzz Typing Rules}
  \label{fig:dfuzz-typing}
\end{figure*}

\subsection{Examples}
We close the overview of \DFuzz with some examples, to give an idea of
the increase in expressiveness brought by dependent types. We take the
liberty of including some features that were not introduced before to
make the examples more interesting.

We begin by considering multiplication of a real number by a natural number.
Without dependent types, the best type we can assign to multiplication is
$!_\infty \N \lin!_\infty\R\lin\R$, which is not very informative. However,
thanks to dependent types we can introduce a scaling primitive with the
following type:
\begin{equation*}
  \times : \forall i : \natone.\ !_\infty \N[i] \lin !_i \R \lin \R
\end{equation*}
By partially applying this operator, we obtain a scaling function with
the appropriate sensitivity, e.g.
\begin{equation*}
  (3 \times -) : !_{3} \R \lin \R .
\end{equation*}

\DFuzz uses probability distributions for differential privacy. The
type system includes a primitive for adding noise drawn from the
Laplace distribution to its input, with the following type:
\begin{equation*}
  \mathsf{add\_noise} : \forall \epsilon : \realone. !_\epsilon \R
  \lin \Circle \R
\end{equation*}
where $\Circle \R$ is the type of probability distributions over
$\R$. Here, $\epsilon$ is a parameter for controlling the amount of
added noise. This noise determines how ``far apart'' the resulting
distributions will be; as it turns out, given the distance function used
for probability distributions in \DFuzz, this results in an
$\epsilon$-sensitive function.

Finally (and more interestingly), the standard $\mathsf{map}$ function on lists
is given the following type in \DFuzz:
\begin{equation*}
  \mathsf{map} : \forall i\,R\,\tyo\,\tyw.
  !_i(!_R\tyo\lin\tyw)\lin!_{R}\mathsf{list}(\tyo)[i]\lin\mathsf{list}(\tyw)[i]
\end{equation*}
Here, $\mathsf{list}(\tyo)[i]$ is the type of lists of elements of some type
$\tyo$ with length equal to $i$. Because we have length-indexed lists, we can
correctly track the sensitivity of $\mathsf{map}$ on its function argument,
which is precisely the length of its list argument. \Fuzz, in contrast, would
require us to replace $i$ by $\infty$.

\section{The Challenge of Type-checking Linear Dependent Types}
\label{sec:challenge}
Type-checking a language with linear indexed types presents several
challenges, which are only compounded when dependent types and subtyping are
added to the mix.  In this section, we take a closer look at these
challenges.

%
\subsection{To Split, or not to Split?}
The first problem we face is due to linearity. Given a term and an environment,
we need a way to ``split'' the environment into appropriate subenvironments that
can be used in the recursive calls to type-check subterms.

Automatically inferring the right environments in our setting is
difficult, due to the index
language for \DFuzz.  Indeed, index terms are polynomials over index
variables, which may range over the reals or the naturals.  For
instance, we may know that a particular variable $x$ has sensitivity
$i^2 \cdot j^2 + 3$ in our environment.  However, it is not clear how
to split such sensitivity information between two environments that
share the variable $x$. In fact, as we will show below, in general it
is not always possible to find a split.
One might hope to simplify the type-checking task by requiring the programmer to
provide a few type annotations, like in non-linear type systems. Unfortunately,
this approach is impractical for the splitting problem because the annotations
must describe the split for every variable binding in the environment!

To better understand this obstacle, let us consider two general
approaches to type-checking linear type systems, which we call
the \emph{top-down} and \emph{bottom-up} strategies.

\subsubsection*{The Downfall of Top-Down}
For the type-checking
problem, suppose we are given the environment $\ctxo$, a term $e$, and a purported
type $\tyo$. The goal is to decide if $\ctxo \vd e : \tyo$ is derivable. The
{\em top-down} strategy takes an environment and a term, and attempts to partition
the environment and recursively type the subterms of $e$.

The main difficulty of this approach centers around splitting the
environment, a problem that is most clear in the application
rule. Here is a simplified version:
\begin{equation*}
  \inferrule
  {\ctxo \vd f : !_R \tyo \lin \tyw \\ \ctxw \vd e : \tyo }
        {\ctxo + R \cdot \ctxw \vd f~e : \tyw}
\end{equation*}
So given a type-checking problem $\ctxt \vd f~e : \tyo'$ our first
difficulty is to pick $R$, $\ctxo$, and $\ctxw$ such that $\ctxt =
\ctxo + R \cdot \ctxw$. We could try to guess $R$, but unfortunately
it may depend on the choice of $\ctxo$. Since our index language contains the
real numbers, the number of possible splittings isn't even finite.

A natural idea is to delay the choice of this split. For instance, we
may create a placeholder variable $R$ and placeholder environments
$\ctxo'$, $\ctxw'$, asserting $\ctxt = \ctxo' + R \cdot \ctxw'$ and
recursively type-checking $f$ and $e$. After reaching the leaves of
the derivation, we would have a set of constraints whose
satisfiability would imply that the program type-checks.

\aa{Find a way of putting this back: Unfortunately, this approach is
  largely impractical. First, it introduces a number of constraints so
  large that it was a performance problem for us in the past( This
  approach was tried with \Fuzz programs and indeed the quadratic
  factor made type-checking slower than acceptable) but that would be
  a more than acceptable price to pay if was not for that the
  generated constraints are}

Unfortunately, the constraints seem difficult to solve due to the
syntactical nature of our indices. In other words, the ``placeholder
variables'' are really {\em meta-variables} that range over index
terms, which could potentially depend on bound index variables. In
order to prove soundness of such a system with respect to the formal
typing system, the solver must return success {\em only} if there is
a solution where all the meta-variables can be instantiated to an
index term---a syntactic object. This is at odds with the way most
solvers work---\emph{semantically}---finding arbitrary solutions
over their domain. It is not clear how to solve these existential
constraints automatically for the specific index language of \DFuzz.

\subsubsection*{The Rise of Bottom-Up?}
A different approach is a {\em bottom-up} strategy: suppose we are
again given an environment $\ctxo$, a term $e$, and a type $\tyo$, and
we want to check if $\ctxo \vd e : \tyo$ is derivable. The main idea
is to avoid splitting environments by calculating the minimal
sensitivities needed for typing each subexpression. For each typing
rule, these minimal sensitivities can be combined
to find the resulting minimal sensitivities for $e$. Once this
is done, we just need to check whether these optimal sensitivities are
compatible with $\ctxo$ and $\tyo$ via subtyping.

Let's consider how this works in more detail by analyzing a few important cases.
At the base case, we type-check variables in a minimal environment (that is,
empty but for the variable) by assigning it the minimal sensitivity required:
\begin{equation*}
  \inferrule
           { }
  { \cass{x}{1}{\tyo} \vd x : \tyo }
\end{equation*}
Recall that we have weakening on the left so can add non-occurring
variables to the environment later.
%

Now, the key benefit of the bottom-up approach becomes evident in the
application rule: we can completely avoid the splitting problem. When
faced with a type-checking instance $\ctxt \vd f~e : \tyo$, we
recursively find optimal $\ctxo$, $R$, and $\ctxw$ for checking $f$
and $e$; then, checking that $\ctxt \tless \ctxo + R \cdot \ctxw$
suffices.

Unfortunately, things don't look so easy in the additive rules. Let's
examine the introduction rule for $\amp$:
\begin{equation*}
  \inferrule
  { \ctxo \vd e_1 : \tyo_1  \\  \ctxo \vd e_2 : \tyo_2 }
  { \ctxo \vd \pair{e_1}{e_2} : \tyo_1 \amp \tyo_2 }
\end{equation*}
This rule forces both environments to have the same sensitivities, but the bottom-up
idea may infer different environments for each expression:
\begin{equation*}
  \inferrule
  { \ctxo_1 \vd e_1 : \tyo_1  \\  \ctxo_2 \vd e_2 : \tyo_2 }
  { \ctxt? \vd \pair{e_1}{e_2} : \tyo_1 \amp \tyo_2 }
\end{equation*}
Now we need to guess a best environment $\ctxt?$, but the \DFuzz
sensitivity language is too weak to express this value.  For instance,
if we consider sensitivity expressions $r^2$ and $r$ depending on a
sensitivity variable $r$, we can show that there is no minimal
polynomial upper bound for them under the point-wise order on
polynomials\footnote{Indeed, it can be seen that \DFuzz does not
  possess minimal types. Refer to the Appendix for a more detailed
  proof.}.

To maintain the minimality invariant, we can extend the sensitivity
language with a new syntactic construct $\smax{R_1}{R_2}$ for
sensitivity-inference purposes only, which should denote the maximum of two
sensitivity values. We could then safely set $\ctxt?
:=\smax{\ctxo_1}{\ctxo_2}$, where the expression combines
sensitivities for the bindings on both environments as expected.

However, there is a problem with this approach:
the resulting algorithm is not sound with respect to the
original type system, because it allows more terms to be typed
even when sensitivities in the final type do not mention the new
construct! To see this, assume that our algorithm produces a
derivation $\ctxo' \vd e : \tyo'$ using extended sensitivities. Now,
soundness amounts to showing that for all $\ctxo$, $\tyo$ mentioning
only standard sensitivities such that $\ctxo \tless \ctxo'$ and $\tyo'
\tless \tyo$, there exists a typing derivation $\ctxo \vd e : \tyo$
that uses only the original sensitivity language. Let's try to sketch
how this proof would work by restricting our attention to a particular
instance of the application rule:
\begin{equation*}
  \inferrule
  {\phi; \emptyset \mid \emptyset \vd f : !_{R_f} \tyo \lin \tyw \\
    \phi; \emptyset \mid \cass{x}{\hat{R}_x}{\tyt} \vd e : \tyo }
        {\phi; \emptyset \mid \cass{x}{R_f\cdot \hat{R}_x}{\tyt} \vd f~e : \tyw}
\end{equation*}
where $\hat{R}_x$ is an extended sensitivity expression. By induction,
we know that for all standard sensitivity expressions $R_x$ such that
$R_x \ge \hat{R}_x$, we can obtain a standard derivation
$\cass{x}{R_x}{\tyt} \vd e : \tyo$. We also have standard $R_{xf}$
such that $R_{xf} \ge R_f \cdot \hat{R}_x$. Thus, all we need to do is
to calculate from $R_f$, $R_{xf}$ standard sensitivities $R'_{f}$,
$R'_x$ to be able to apply both induction hypotheses. The following
result shows that this is not always possible.
\begin{lemma}
  Given standard sensitivities expressions $R_{xf}$, $R_f$ and an
  extended sensitivity expression $\hat{R}_x$ such that $R_{xf} \ge
  R_f \cdot \hat{R}_x$, it is not the case that one can always find
  standard $R'_{f}$, $R'_x$ such that $R_{xf} \ge R'_{f} \cdot R'_x
  \land R'_{f} \ge R_f \land R'_x \ge \hat{R}_x$.
\end{lemma}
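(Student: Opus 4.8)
The plan is to prove the lemma by exhibiting an explicit counterexample over a single sensitivity variable $r$, and then ruling out every candidate split by an elementary degree-and-coefficient argument on polynomials with non-negative coefficients. Concretely, I would take
$R_f := r$, $\hat{R}_x := \smax{r}{1}$, and $R_{xf} := r \cdot r + 1$.
The first thing to check is the hypothesis $R_{xf} \ge R_f \cdot \hat{R}_x$: unfolding the semantics (with $\smax{r}{1}$ denoting $\max(r,1)$, as in the discussion above), this is the pointwise inequality $\semu{r}_\rho^2 + 1 \ge \semu{r}_\rho \cdot \max(\semu{r}_\rho,1)$ for every valuation $\rho$, which follows by the trivial case split on whether $\semu{r}_\rho \le 1$ (use $t^2 - t + 1 \ge 0$) or $\semu{r}_\rho \ge 1$ (use $t^2 + 1 \ge t^2$), together with the degenerate case $\semu{r}_\rho = \infty$. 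The point of this choice is that $R_{xf} = r^2 + 1$ lies \emph{strictly below} the obvious standard upper bound $R_f \cdot (r+1) = r^2 + r$ of $R_f \cdot \hat{R}_x$ coming from $\smax{r}{1} \le r+1$; that gap is exactly what defeats the naive split.

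Next I would assume, for contradiction, that standard $R'_f, R'_x$ exist with $R'_f \ge R_f$, $R'_x \ge \hat{R}_x$, and $R_{xf} \ge R'_f \cdot R'_x$. Since $R_{xf}$ mentions only $r$, I can substitute $0$ for every other index variable occurring in $R'_f, R'_x$ without disturbing any of the three inequalities, so I may assume $R'_f$ and $R'_x$ are polynomials in $r$ with non-negative real coefficients --- all that standard sensitivity expressions built from $r$, constants, $+$ and $\cdot$ can denote. Then I would argue as follows. Evaluating at $r = 0$ and using $R'_x(0) \ge 1$ (since $R'_x \ge \max(r,1) \ge 1$) gives $R'_f(0) \le 1$. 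Since $R'_f \ge r$ and $R'_x \ge r$, neither polynomial is constant, so $\deg R'_f, \deg R'_x \ge 1$; and since all coefficients are non-negative there is no leading-term cancellation, so $\deg(R'_f R'_x) = \deg R'_f + \deg R'_x$, which must be $\le 2$ (a non-negative-coefficient polynomial of degree $\ge 3$ eventually exceeds $r^2 + 1$ on $[0,\infty)$). Hence $\deg R'_f = \deg R'_x = 1$.

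Writing $R'_f = a r + b$ and $R'_x = c r + d$ with $a,b,c,d \ge 0$, the conditions $R'_f \ge r$, $R'_x \ge r$, $R'_x \ge 1$ force $a \ge 1$, $c \ge 1$, $d \ge 1$, and from $R'_f(0) \le 1$ we get $b \le 1$. Comparing the $r^2$-coefficient of $R'_f R'_x = ac\,r^2 + (ad + bc)\,r + bd$ with that of $r^2 + 1$ gives $ac \le 1$, whence $a = c = 1$. The surviving inequality $r^2 + (b+d)\,r + bd \le r^2 + 1$ for all $r \ge 0$ then forces $(b+d)\,r \le 1$ for all $r \ge 0$, i.e. $b + d \le 0$, contradicting $d \ge 1$. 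This completes the argument.

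I expect the only genuine obstacle to be the choice of the counterexample itself: one has to notice that a \emph{tight} upper bound such as $r^2 + 1$ is required, whereas the more natural $r^2 + r = r(r+1)$ actually \emph{does} admit a split ($R'_f = r$, $R'_x = r+1$). Once the triple $(R_{xf}, R_f, \hat{R}_x)$ is fixed, everything else is routine polynomial bookkeeping, and the reduction to single-variable polynomials with non-negative coefficients is immediate from the shape of the index language.
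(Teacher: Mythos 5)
Your proof is correct and follows essentially the same strategy as the paper's: both fix $R_{xf} = r^2+1$, $R_f = r$, reduce any putative split to degree-one polynomials with leading coefficient one, and derive a contradiction from the constant terms. The differences are cosmetic --- the paper takes $\hat{R}_x = \smax{2}{r}$ and closes by evaluating at the tight point $r=1$, whereas you take $\hat{R}_x = \smax{r}{1}$ and close by coefficient comparison after evaluating at $r=0$.
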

\begin{proof}
  Take $R_{xf}=r^2+1$, $R_f = r$ and $\hat{R}_x=\smax{2}{r}$. As we can see, we
  have $r^2+1 \ge r \cdot \smax{2}{r}$, with equality iff $r=1$. Suppose there
  exist standard sensitivity expressions $R'_{f}, R'_x$ that satisfy the
  statement. Because $R'_{f} \ge r$ and $R'_x \ge \smax{2}{r}$, we know by
  asymptotic analysis that the degree of $R'_{f}$ and $R'_x$ must be at least
  $1$. Furthermore, because $r^2+1 \ge R'_{f}\cdot R'_x$, their degree must be
  exactly $1$, with leading coefficient equal to $1$. Write $R'_{f} = r+a$ and
  $R'_x = r+b$, where $a,b$ are positive constants. The lower bound on $R'_x$
  implies $b \ge 2$. For $r = 1$, we have $R'_{f} \cdot R'_x \ge 3a+3 \ge
  3$. However, the lower and upper bounds for $R'_f \cdot R'_x$ coincide at that
  point, forcing $R'_{f} \cdot R'_x = 2$; contradiction. Thus, no such $R'_{f},
  R'_x$ can exist.
\end{proof}

It is not hard to adapt the above into a counterexample for the
soundness of the algorithm with respect to the original
system. However, we can recover soundness by extending the sensitivity
language for the basic typing rules as well.

\subsection{Avoiding the Avoidance Problem}
After the addition of least upper bounds for sensitivities, the
bottom-up approach is in a good working state for the basic
system. However, other constructs in the language introduce further
challenges. In particular, let's examine a simple version of the
abstraction rule for sensitivity variables:
\begin{equation*}
  \inferrule
  { \tctxo, i : \kappa \mid \ctxo \vd e : \tyo \\ \text{$i$ fresh in $\ctxo$} }
  { \tctxo \mid \ctxo \vd \Lambda i:\kappa .\; e : \forall i: \kappa.\; \tyo }
\end{equation*}
When this rule is interpreted in a top-down approach, usually no
problem arises; we would just introduce the new sensitivity variable
and proceed with type checking.

However, when the type-checking direction is reversed, we hit a version of the
avoidance problem
\citep{phd:lillibridge,Ghelli199875,Dreyer:2003:TSH:604131.604151}. The
avoidance problem usually appears in slightly different
scenarios related to existential types, and could be informally stated
as finding a best type free of a particular variable. In our case, we
must find the ``best'' $\ctxo$ free of $i$. It may not be obvious how $i$
could have been propagated to $\ctxo$, but indeed, a function $f$ in
$e$ could have a type such as $!_i \tyo\lin\tyw$, and applying $f$ will
introduce $i$ into the environment in the bottom-up approach.

Fortunately, in our setting, we can easily solve the avoidance problem
by further extending the sensitivity language. The ``best'' way of
freeing a sensitivity expression $R$ of a variable $i$ is to take the
supremum of $R$ over all possible values of $i$, which we denote by
$\ssup{i}{R}$\footnote{Contrary to $\smax{-}{-}$, it would have been
  possible to define this construct as a function over sensitivity
  expressions, without the need to extend their syntax. This would
  still be true even after introducing index-level case sensitivity
  expression for analyzing dependent pattern matching. As the
  translation is somewhat intricate and leads to more complex
  constraints, we chose to add it directly to the syntax of
  sensitivity expressions.}. Then, the minimal environment is
$\ssup{i}{\ctxo}$, where the supremum is extended to each binding in
the environment.
%

\subsection{Undependable Dependencies}
The last case to consider in our informal overview is
$\bcase$, also referred as dependent pattern matching.

The dependent pattern matching can be considered as a special case of
the two previous difficulties. Like the least upper bound, we must compute a least
upper bound of the resources used in two branches. However, now the information
coming from the successor branch may also contain sensitivities
depending on the newly introduced refinement variable, which cannot
occur in the upper bound; similar to the
avoidance problem we just discussed. On top of that, information coming from both sides
is conditional on the particular refinements induced by the match, so
any new sensitivity information that we propagate cannot really depend
on the refinements.

We now face a choice: we can introduce refinement types over
sensitivity and size variables of the form $\bra{\tyo \mid P(\vi)}$,
which would allow us to express the sensitivity inference for
$\bcase$ in term of the least upper bound and supremum
operations. However, we take a simpler path and add a conditional
operator on natural number expressions $S$, $\scase{S}{R_0}{i}{R_s}$,
interpreted as $R_0$ if $S$ is $0$ or $R_s[i \mapsto S - 1]$ if $S \ge
1$.

In the next sections we proceed to formally introduce the extended sensitivities
and its semantics; we discuss the type-checking algorithm, which depends on
solving inequality constraints over the extended sensitivities; and we study
several approaches to constraint solving and discuss decidability issues.

\section{Extended \DFuzz: \EDFuzz}
\label{sec:edfuzz}
\aa{What about just dropping $\hat{R}$ everywhere we can, and just using $R$
  instead?}  \jh{But they are different. $R$ is the old sensitivity (\DFuzz),
  while $\hat{R}$ are extended terms. Tried to make this clearer below.}%
\aa{I know they are, but we're not actually using them
  consistently. Cf. \Cref{sec:compilation}}
We define a conservative extension to \DFuzz's type system, \EDFuzz, which is
basically \DFuzz with an extended sensitivity language for the indices. We
summarize the new sensitivity terms, ranged over by meta-variable $\hat{R}$:
\begin{itemize}
\item $\smax{\hat{R}_1}{\hat{R}_2}$ is the pointwise least upper bound of
  sensitivity terms $\hat{R}_1, \hat{R}_2$.
\item $\ssup{i}{\hat{R}}$ is the pointwise least upper bound of $\hat{R}$ over
  all $i$.
\item $\scase{S}{\hat{R}_0}{i}{\hat{R}_s}$ is the conditional function on the
  size expression $S$ that is valued $\hat{R}_0$ when $S = 0$, and $\hat{R}_s[i
  \mapsto S - 1]$ when $S$ is a strictly positive integer.
\end{itemize}
The semantics of extended terms is defined as follows.

\begin{definition}[Extended sensitivity semantics]
  We extend the semantics of sensitivities to the new constructs in the
  following way (the old cases stay the same):
  \begin{align*}
    \lb\ssup{i:\kappa}{\esens}\rb_\rho
    &:= \sup_{r\in \lb\kappa\rb}\{\lb \esens\rb_{\rho\cup[i=r]}\}
    \\
    \lb\smax{\esens_1}{\esens_2}\rb_\rho
    &:= \max(\lb \esens_1 \rb_\rho, \lb \esens_2\rb_\rho)
    \\
    \lb\scase{S}{\esens_0}{i}{\esens_s}\rb_\rho
    &:= \left\{
      \begin{array}{lcl}
        \lb \esens_0\rb_{\rho} & \text{if} & \lb S\rb_{\rho} = 0 \\
        \lb \esens_s\rb_{\rho\cup [i = n - 1] }  & \text{if} & \lb S\rb_{\rho} = n \ge
        1.
      \end{array}
    \right .
  \end{align*}
\end{definition}
We define analogous operations on environments in the obvious way. For instance, if
$\cass{x}{R_1}{\tyo} \in \ctxo_1$ and $\cass{x}{R_2}{\tyo} \in \ctxo_2$, then
$\cass{x}{\smax{R_1}{R_2}}{\tyo} \in \smax{\ctxo_1}{\ctxo_2}$. As previously,
two-argument operations on environments are only defined when every variable that is
bound on both environments is assigned the same type by them.

It is not hard to show that any derivation valid in \DFuzz remains valid in
\EDFuzz.  Furthermore, \DFuzz's metatheory only relies on sensitivity terms
having an interpretation as total function from free variables to a
real number, rather than on any specific property about the
interpretation itself. The extended interpretation is total, and hence
the metatheory of \DFuzz extends to \EDFuzz.
\eg{Umm, we should improve this}

\section{Type Checking and Inference}
\label{sec:algorithmic}
We present a sound and complete type-checking and sensitivity-inference
algorithm for \EDFuzz. The algorithm assumes an oracle for deciding the
subtyping relation; in this sense, our algorithm is relatively complete. We
defer discussion about solving subtyping constraints to the next section.

The type-checking problem for \EDFuzz is the usual one: given a full context,
term, and type, the goal is to check if there is a derivation deriving the type
from the context.
\begin{definition}[Type Checking]
  Given an environment $\ctxo$, a term $e$, a type $\tyo$, the
  \emph{type-checking problem} for \EDFuzz is to determine whether a derivation
  $\emptyctx; \emptyctx \mid \ctxo \vd e : \tyo$ exists.
\end{definition}
Before we move to sensitivity inference, we introduce some notation for working
with contexts. It will be convenient to work with contexts with no top-level
annotations, i.e., contexts with bindings $(x : \tyo)$, where $\tyo$ is a proper
\EDFuzz type. We will call such contexts \emph{context skeletons}. For notation,
$\sclean{\ctxo}$ will mean the context $\ctxo$ with all top-level
annotations removed, while $\ctxskel$ will represent an arbitrary context
skeleton.

In our context, sensitivity inference means inferring the sensitivity
annotations in both an environment and a type. The input is an annotated
term\footnote{%
  We discuss annotations in \Cref{subsec:annot}}
and a context with without top-level annotations. The goal is to reconstruct a
type for the term, a full proper \EDFuzz context (e.g., with all top-level
annotations) along with a derivation, if possible.
\begin{definition}[Sensitivity Inference]
  Given an environment skeleton $\ctxskel$ and a term $e$, the
  \emph{sensitivity-inference problem} is to compute an environment $\ctxo$ and
  a type $\tyo$ with a derivation of $\emptyctx; \emptyctx \mid \ctxo \vd e :
  \tyo$, such that $\sclean{\ctxo} = \ctxskel$.
\end{definition}

\subsection{The Algorithm}

We can fulfill both goals using an algorithm that takes as inputs a
term $e$, an environment free of sensitivity annotations
$\cclean{\ctxo}$ and a refinement constraint $\cso$. The algorithm
will output an annotated environment $\ctxw$ and a type $\tyo$.
We write a call to the sensitivity inference algorithm as:
\begin{equation*}
  \algin{\tctxo}{\cso}{\ctxskel}{\tmo} \produces
  \algout{}{}{\ctxw}{\tyo}.
\end{equation*}
%
%
\newcommand{\dfuzzaalgsucc}{
  \inferrule
{ \algin{\tctxo}{\cso}{\ctxskel}{\tmo} \produces
  \algout{\emptyctx}{\sftrue}{\ctxo}{\N[\sizeitermone]} }
{ \algin{\tctxo}{\cso}{\ctxskel}{\bsucc\ \tmo} \produces
  \algout{\emptyctx}{\sftrue}{\ctxo}{\N[\sizeitermone+1]} }     \rname{\rulenats}
}

\newcommand{\dfuzzaalgconst}{
  \inferrule
        {  }
{ \algin{\tctxo}{\cso}{\ctxskel}{r} \produces
  \algout{\emptyctx}{\sftrue}{\mathrm{Ectx}(\ctxskel)}{\R} }   \rname{\ruleconst}
}

\newcommand{\dfuzzaalgnatc}{
  \inferrule
  { n = \lb S \rb }
{ \algin{\tctxo}{\cso}{\ctxskel}{n} \produces
  \algout{\emptyctx}{\sftrue}{\mathrm{Ectx}(\ctxskel)}{\N[S]} }  \rname{\ruleconstN}
}

\newcommand{\dfuzzaalgvar}{
  \inferrule
        { }
{ \algin{\tctxo}{\cso}{\ctxskel, \varone : \tyo}{\varone} \produces
  \algout{\emptyctx}{\sftrue}{\mathrm{Ectx}(\ctxskel), \cass{\varone}{1}{\tyo}}{\tyo} } \rname{\rulevar}
}

\newcommand{\dfuzzaalgiapp}{
  \inferrule
{ \algin{\tctxo}{\cso}{\ctxskel, \varone : \tyo}{\tmo} \produces
  \algout{\ectxo}{\cctxo}{\ctxo, \cass{\varone}{\sensitermone'}{\tyo}}{\tyw}
  \\\\
  \tctxo;\cso \models \sensitermone \geq \req{\sensitermone'} }
{ \algin{\tctxo}{\cso}{\ctxskel}{ \lambda (\cass{\varone}{\sensitermone}{\tyo}).~\tmo} \produces
  \algout{\ectxo}{\cctxo \wedge \sensitermone' \leq \sensitermone}
         {\ctxo}{!_\sensitermone \tyo \lin \tyw} }                          \rname{\ruleiapp}
}

\newcommand{\dfuzzaalgeapp}{
  \inferrule
  { \algin{\tctxo}{\cso}{\ctxskel}{\tmo_1}  \produces
    \algout{\ectxo_1}{\cctxo_1}{\ctxo}{!_\sensitermone \tyo \lin \tyw} \\\\
    \algin{\tctxo}{\cso}{\ctxwskel}{\tmo_2}  \produces
    \algout{\ectxo_2}{\cctxo_2}{\ctxw}{\tyo'}
    \\\\
  \tctxo;\cso \models \tyo' \tless \tyo }
    { \algin{\tctxo}{\cso}{\ctxskel}{\tmo_1\;\tmo_2} \produces
      \algout{\ectxo_1 , \ectxo_2}{\cctxo_1 \wedge \cctxo_2 \wedge \tyo' \tless \tyo}
      {\ctxo + \sensitermone \cdot \ctxw}{\tyw } }                           \rname{\ruleeapp}
}

\newcommand{\dfuzzaalgitapp}{
  \inferrule
  { \algin{\tctxo, \sizeivarone : \kappa}{\cso}{\ctxskel}{\tmo} \produces
    \algout{\ectxo}{\cctxo}{\ctxo}{\tyo}}
    { \algin{\tctxo}{\cso}{\ctxskel}{\Lambda \sizeivarone : \kappa.\; \tmo} \produces
      \algout{\ectxo_F}{\forall \sizeivarone : \kappa.\; \exists \ectxo. \cctxo \wedge \ctxo_F \tless \ctxo}
             {\ssup{i}{\ctxo}}{\forall \sizeivarone : \kappa.\; \tyo} }                        \rname{\ruleitapp}
}

\newcommand{\dfuzzaalgetapp}{
  \inferrule
  { \algin{\tctxo}{\cso}{\ctxskel}{\tmo} \produces
    \algout{\ectxo}{\cctxo}{\ctxo}{\forall \sizeivarone : \kappa.\; \tyo} \\
    \tctxo \models \sizeitermone : \kappa }
{ \algin{\tctxo}{\cso}{\ctxskel}{\tmo [\sizeitermone] } \produces
\algout{\ectxo}{\cctxo}{\ctxo}{\tyo[\sizeitermone / \sizeivarone] } }        \rname{\ruleetapp}
}

\newcommand{\dfuzzaalgiamp}{
  \inferrule
  { \algin{\tctxo}{\cso}{\ctxskel}{\tmo_1} \produces
    \algout{\ectxo_1}{\cctxo_1}{\ctxo_1}{\tyo_1} \\\\
    \algin{\tctxo}{\cso}{\ctxskel}{\tmo_2} \produces
    \algout{\ectxo_2}{\cctxo_2}{\ctxo_2}{\tyo_2}
  }
  { \algin{\tctxo}{\cso}{\ctxskel}{ \pair{\tmo_1}{\tmo_2} } \produces
  \algout{\ectxo_1, \ectxo_2, \ectxo_m}{\cctxo_1 \wedge \cctxo_2 \wedge \ctxo_m \tless \ctxo_1 \wedge \ctxo_m
  \tless \ctxo_2 }{\smax{\ctxo_1}{\ctxo_2}}{\tyo_1 \amp \tyo_2} }  \rname{\ruleiamp}
}

\newcommand{\dfuzzaalgeamp}{
  \inferrule
  { \algin{\tctxo}{\cso}{\ctxskel}{\tmo} \produces
    \algout{\ectxo}{\cctxo}{\ctxo}{\tyo_1 \amp \tyo_2}
  }
  { \algin{\tctxo}{\cso}{\ctxskel}{ \pi_i \tmo } \produces
    \algout{}{}{\ctxo}{\tyo_i} }  \rname{\ruleeamp}
}
\newcommand{\dfuzzaalgfix}{
  \inferrule
  { \algin{\tctxo}{\cso}{\ctxskel, \varone : \tyo}{\tmo} \produces
    \algout{}{}{\ctxo, \cass{\varone}{R}{\tyo}}{\tyo'} \\\\
    \tctxo; \cso \models \tyo' \tless \tyo
  }
  { \algin{\tctxo}{\cso}{\ctxskel}{\bfix \varone : \tyo.\; \tmo : \tyo}
    \produces
    \algout{}{}{\infty \cdot \ctxo}{\tyo}} \rname{\rulefix}
}
\newcommand{\dfuzzaalgitens}{
  \inferrule
  { \algin{\tctxo}{\cso}{\ctxskel}{\tmo_1} \produces
    \algout{\ectxo_1}{\cctxo_1}{\ctxo_1}{\tyo_1} \\\\
    \algin{\tctxo}{\cso}{\ctxskel}{\tmo_2} \produces
    \algout{\ectxo_2}{\cctxo_2}{\ctxo_2}{\tyo_2}
  }
  { \algin{\tctxo}{\cso}{\ctxskel}{ \pair{\tmo_1}{\tmo_2} } \produces
  \algout{}{}{\ctxo_1 + \ctxo_2}{\tyo_1 \otimes \tyo_2} } \rname{\ruleitens}
}

\newcommand{\dfuzzaalgetens}{
  \inferrule
  { \algin{\tctxo}{\cso}{\ctxskel}{\tmo} \produces
    \algout{\ectxo}{\cctxo}{\ctxw}{\tyo \otimes \tyw} \\\\
    \algin{\tctxo}{\cso}{\ctxskel, x : \tyo, y : \tyw}{\tmo'} \produces
    \algout{\ectxo}{\cctxo}{\ctxo, \cass{x}{R_1}{\tyo}, \cass{y}{R_2}{\tyw}}{\tyt}
  }
  { \algin{\tctxo}{\cso}{\ctxskel}{ \blet (x, y) = \tmo \bin \tmo'} \produces
    \algout{}{}{\ctxo + \smax{\req{R_1}}{\req{R_2}} \cdot \ctxw}{\tyt} }  \rname{\ruleetens}
}

\newcommand{\dfuzzaalgcasenat}{
  \inferrule
  { \algin{\tctxo}{\cso}{\ctxskel}{\tmo} \produces
    \algout{}{}{\ctxw}{\N[\sizeitermone]} \\
    \algin{\tctxo}{\cso, \sizeitermone=0}{\ctxskel}{\tmo_0} \produces
    \algout{}{}{\ctxo_0}{\tyo_0} \\\\
    \algin{\tctxo,\sizeivarone : \natone}{\cso,\sizeitermone=\sizeivarone+1}{\ctxskel,
      \varone : \N[i]}{\tmo_s} \produces
    \algout{}{}{\ctxo_s, \cass{\varone}{R'}{\N[i]}}{\tyo_s} \\\\
    \tctxo; \cso, S = 0 \models \tyo_0 \tless \tyo \\
    \tctxo, i : \natone; \cso, S = i + 1 \models \tyo_s \tless \tyo
  }
  { \algin{\tctxo}{\cso}{\ctxskel}{ \bcase \tmo \breturn \tyo \bof 0 \mapsto \tmo_0
      \mid \varone_{[\sizeivarone]} +1 \mapsto \tmo_s  } \\\\
      \produces
  \algout{\ectxo_0, \ectxo_s, \ectxo_m}{\cctxo_0 \wedge \cctxo_s \wedge \ctxo_m \tless \ctxo_0 \wedge \ctxo_m
  \tless \ctxo_s }{\scase{S}{\ctxo_0}{i}{\ctxo_s} + \scase{S}{0}{i}{\req{R'}} \cdot
  \ctxw}{\tyo} }  \rname{\rulenate}
}
\begin{figure*}
  \centering
  \begin{mathpar}
    \mprset{flushleft}
    \dfuzzaalgconst \and
    \dfuzzaalgnatc \and
    \dfuzzaalgvar \and
    \dfuzzaalgiapp \and
    \dfuzzaalgeapp \and
    \dfuzzaalgfix \and
    \dfuzzaalgitapp \and
    \dfuzzaalgetapp \and
    \dfuzzaalgitens \and
    \dfuzzaalgetens \and
    \dfuzzaalgiamp \and
    \dfuzzaalgeamp \and
    \dfuzzaalgsucc \and
    \dfuzzaalgcasenat
  \end{mathpar}
  \caption{Algorithmic Rules for \EDFuzz}
  \label{fig:dfuzza-alg}
\end{figure*}
\Cref{fig:dfuzza-alg} presents the full algorithm in a judgmental
style. The algorithm is based on a syntax-directed version of DFuzz
that enjoys several nice properties; full technical details and notation
definitions can be found in the Appendix. Here, we just sketch how the
transformation works in the proofs of soundness and completeness.
\begin{theorem}[Algorithmic Soundness]
  Suppose $\algin{\tctxo}{\cso}{\ctxskel}{\tmo} \produces
  \algout{\ectxo}{\cctxo}{\ctxo}{\tyo}$. Then, there is a derivation
  of $\tctxo; \cso \mid \ctxo \vd \tmo : \tyo$.
\end{theorem}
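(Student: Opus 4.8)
The plan is to argue by structural induction on the derivation of the algorithmic judgment $\algin{\tctxo}{\cso}{\ctxskel}{\tmo} \produces \algout{\ectxo}{\cctxo}{\ctxo}{\tyo}$. The auxiliary outputs $\ectxo$ and $\cctxo$ play no role in soundness --- they are needed only for completeness and for the compilation of \Cref{sec:compilation} --- and the subtyping side-conditions that appear as premises of the algorithmic rules in \Cref{fig:dfuzza-alg} may be used freely as hypotheses, since the algorithm is parameterised by a subtyping oracle. For each algorithmic rule I would apply the induction hypothesis to the recursive calls, obtaining \EDFuzz derivations for the subterms, and then assemble an \EDFuzz derivation for $\tmo$ using the matching declarative rule of \Cref{fig:dfuzz-typing} (read over the extended sensitivity language, as in \Cref{sec:edfuzz}). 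The only gap to bridge is that the algorithm returns the \emph{minimal} environment and type compatible with the subterms, whereas the declarative rules sometimes demand a specific shape; closing this gap is exactly the job of the coercion rules \rulesubl\ and \rulesubr\ together with the $\tless$-Env rule of \Cref{fig:dfuzz-sub}, whose combined effect is an admissible ``weaken the environment and relax the type'' step.

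In more detail: the leaf rules \rulevar, \ruleconst, and \ruleconstN\ match their declarative counterparts verbatim with the reconstructed environment $\mathrm{Ectx}(\ctxskel)$ (extended by $\cass{\varone}{1}{\tyo}$ in the variable case); and \rulenats, \ruleetapp, \ruleitens, and \ruleeamp\ are immediate, since the algorithmic conclusion already coincides with that of the declarative rule applied to the inductive premises. For \ruleeapp\ one first coerces the inferred argument type $\tyo'$ up to the checked type $\tyo$ with \rulesubr\ and the oracle-supplied $\tctxo;\cso \models \tyo' \tless \tyo$, and then applies the declarative application rule, whose conclusion environment $\ctxo + R \cdot \ctxw$ is precisely the one emitted. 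The ``reconcile-by-weakening'' rules \ruleiapp, \rulefix, \ruleetens, and \ruleiamp\ all follow a single template: the declarative rule forces two occurrences of a binding (or a recursion variable, or two branch environments) to carry one \emph{common} annotation, whereas the algorithm has inferred possibly smaller ones; I would weaken, via \rulesubl\ and $\tless$-Env, to the user-supplied annotation in \ruleiapp, to $\infty$ in \rulefix, to $\smax{\req{R_1}}{\req{R_2}}$ in \ruleetens, and to the pointwise maximum $\smax{\ctxo_1}{\ctxo_2}$ in \ruleiamp. The inequalities this needs --- $\tctxo;\cso \models \smax{R_1}{R_2} \ge R_j$, $\tctxo;\cso \models \req{R} \ge R$, and $R \le \infty$ --- are immediate from the semantics of the new constructs and from the definition of $\req{\cdot}$ (which only replaces the ``unused'' marker $\Box$ by $0$, thereby also keeping the subsequent environment scaling well-defined), and they lift pointwise to $\tless$ on environments through the $\tless$-Env rule. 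Well-definedness of every environment operation used ($+$, $R\cdot$, $\smax{-}{-}$, $\ssup{i}{-}$, $\scase{-}{-}{-}{-}$) is guaranteed because all the environments in play refine the same skeleton $\ctxskel$, and none of these operations affects the underlying types.

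The two genuinely delicate cases are \ruleitapp\ and \rulenate, where the inferred environment of a subterm may mention an index variable $i$ that is barred from the conclusion --- a form of the avoidance problem. For \ruleitapp, the induction hypothesis yields $\tctxo, i : \kappa; \cso \mid \ctxo \vd \tmo : \tyo$; since $\lb R \rb_\rho \le \lb \ssup{i:\kappa}{R} \rb_\rho$ for every valuation $\rho$, we have $\tctxo, i : \kappa; \cso \vsub \ssup{i}{\ctxo} \tless \ctxo$, so \rulesubl\ replaces the environment of that premise by $\ssup{i}{\ctxo}$, which is $i$-free (its bound occurrence of $i$ aside) because the types carried by $\ctxskel$ do not mention $i$; the declarative $\forall$-introduction rule then applies and produces exactly the algorithmic output. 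The case \rulenate\ combines this with the ``maximum of two branches'' pattern. Writing $\ctxo^{*} = \scase{S}{\ctxo_0}{i}{\ctxo_s}$ and $R^{*} = \scase{S}{0}{i}{\req{R'}}$, I would first use \rulesubr\ with the oracle-checked $\tyo_0 \tless \tyo$ and $\tyo_s \tless \tyo$ to give the two branches the common result type $\tyo$, then read off from the semantics of $\scase{-}{-}{-}{-}$ the facts $\tctxo;\cso, S = 0 \models \ctxo^{*} \tless \ctxo_0$ (indeed with equal denotations), $\tctxo, i : \natone; \cso, S = i + 1 \models \ctxo^{*} \tless \ctxo_s$ (again with equal denotations), and $\tctxo, i : \natone; \cso, S = i + 1 \models R^{*} \ge R'$, and use \rulesubl\ with $\tless$-Env to retype $\tmo_0$ under $\ctxo^{*}$ and $\tmo_s$ under $\ctxo^{*}, \cass{\varone}{R^{*}}{\N[i]}$ --- exactly the premises that the declarative rule \rulenate\ demands. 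Since $\ctxo^{*}$ and $R^{*}$ are $i$-free, that rule applies and yields the environment $\ctxo^{*} + R^{*} \cdot \ctxw$, matching the algorithm, with result type $\tyo$. I expect \rulenate\ to be the main obstacle: it is the only rule that simultaneously exercises environment splitting (through the scaling of $\ctxw$), a least upper bound of two branch environments, and avoidance of the freshly introduced refinement variable, so the bookkeeping is heaviest there and turns on getting the identities relating $\scase{-}{-}{-}{-}$ to the refinements $S = 0$ and $S = i + 1$ exactly right.
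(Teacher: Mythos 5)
Your proof is correct in spirit and the case analysis matches the paper's, but the paper organizes the argument differently: it factors through two intermediate systems, a $\Box$-annotated system (\DFuzz$_\Box$) that internalizes weakening by allowing bindings $\cempty{x}{\tyo}$ with missing sensitivity annotations, and then a syntax-directed system over the $\Box$-world. Algorithmic soundness is then trivial (the algorithm is a transcription of the syntax-directed rules), and the real work happens in the syntax-directed-soundness lemma, whose cases are essentially identical to yours. Your version inlines these layers into a single direct induction from the algorithmic judgment to the declarative one. Both are valid; the paper's factoring buys modularity (the same syntax-directed system is reused for completeness) and a clean place to state the $\Box$-bookkeeping facts (e.g.\ $|R\cdot\ctxo| = R\cdot|\ctxo|$, $|\ctxo+\ctxw| = |\ctxo|+|\ctxw|$), whereas your direct argument is shorter but pushes those details into each case.

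The one place your sketch is genuinely thin is precisely that $\Box$-bookkeeping. The algorithm's output environments ($\mathrm{Ectx}(\ctxskel)$ at the leaves, extended as the recursion unwinds) carry $\Box$ annotations for variables that do not occur in the term, and the declarative \DFuzz judgment has no $\Box$. Your proof needs, at each step, either to work in the $\Box$-extended declarative system and erase at the end (the paper's route, via the lemma that $\tctxo;\cso\mid\ctxo\vd e:\tyo$ in \DFuzz$_\Box$ implies $\tctxo;\cso\mid|\ctxo|\vd e:\tyo$ in \DFuzz), or to show directly that the $\Box$-entries can be dropped and re-added via weakening, compatibly with $+$, scaling, $\smax{-}{-}$, $\ssup{i}{-}$, and $\scase{-}{-}{-}{-}$. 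You gesture at this with $\req{\cdot}$ and the remark that all environments refine $\ctxskel$, but ``$\tctxo;\cso\models\req{R}\ge R$'' is not an inequality of the original sensitivity semantics once $R=\Box$ --- it only makes sense with the $\tless$-Env rule's special clause $R'_i=\Box$, which lives in the $\Box$ system. So either adopt the $\Box$ system explicitly (as the paper does) or replace that step with the observation that the algorithm only ever inserts $\Box$ for genuinely unused variables, which can therefore be weakened in at any sensitivity after the fact. With that made precise, your argument goes through; the $\rulenate$ case, which you correctly flag as heaviest, is handled exactly as in the paper's syntax-directed soundness proof using $\scase{S}{\ctxo_0}{i}{\ctxo_s}$ and the refinement identities under $S=0$ and $S=i+1$.
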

\begin{proof}
  We define two intermediate systems: The first one internalizing
  certain properties of weakening and a second, syntax-directed. The algorithm
  is a direct transcription of the
  syntax-directed system and soundness can be proved by induction on
  the number of steps. We prove soundness of the syntax-directed
  system by induction on the syntax-directed derivation.
\end{proof}
\begin{theorem}[Algorithmic Completeness]
  If
  $\tctxo;\cso\mid \ctxo \vd \tmo : \tyo$
  is derivable, then $\algin{\tctxo}{\cso}{\sclean{\ctxo}}{\tmo} \produces
  \algout{\ectxo}{\cctxo}{\ctxo'}{\tyo'}$ and $\tctxo; \cso \models
  \ctxo \tless \ctxo' \land \tyo' \tless \tyo$.
\end{theorem}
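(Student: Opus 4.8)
The plan is to traverse, in the reverse direction, the same chain of systems used for Algorithmic Soundness. Recall that the algorithm of \Cref{fig:dfuzza-alg} is a faithful transcription of a syntax-directed presentation of \EDFuzz, which is related to the declarative system of \Cref{fig:dfuzz-typing} through an intermediate system that internalizes weakening. Accordingly I would prove completeness in three stages. First, I would turn a declarative derivation of $\tctxo;\cso\mid\ctxo\vd\tmo:\tyo$ into a derivation in the weakening-internalizing system with the same end judgment, using that weakening is admissible there. Second, I would absorb the two subsumption rules \rulesubl{} and \rulesubr{}, obtaining a syntax-directed derivation of $\tctxo;\cso\mid\ctxo\vd\tmo:\tyo$ (the syntax-directed rules retain enough subtyping slack in their premises that no judgment is lost). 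Third, and this is the real content, I would show that on any syntax-directed derivation the algorithm succeeds and returns the \emph{best} answer: running it on the skeleton $\sclean{\ctxo}$ yields $\algin{\tctxo}{\cso}{\sclean{\ctxo}}{\tmo}\produces\algout{\ectxo}{\cctxo}{\ctxo'}{\tyo'}$ with $\tctxo;\cso\models\ctxo\tless\ctxo'$, $\tctxo;\cso\models\tyo'\tless\tyo$, and the emitted constraint $\cctxo$ entailed by $\cso$ (after closing over the fresh index variables $\ectxo$). The theorem is the conjunction of these three stages.

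The first two stages rest on a handful of routine ``plumbing'' lemmas about $\tless$ that I would establish up front: reflexivity and transitivity under a fixed $\tctxo;\cso$; monotonicity in each type constructor (the congruence rules \struletens{}, \struleamp{}, \strulelin{}, \struleforall{}); and monotonicity in every environment operation used by the algorithm, namely scaling $\sensitermone\cdot\ctxo$, addition $\ctxo+\ctxw$, $\smax{-}{-}$, $\ssup{\sizeivarone}{-}$, and $\scase{\sizeitermone}{-}{\sizeivarone}{-}$. Given these, subsumption elimination is the textbook argument: two consecutive uses of \rulesubl{}/\rulesubr{} merge by transitivity, and a single use permutes downward through each structural rule using the matching monotonicity fact, so iterating produces a derivation in which subtyping appears only in the designated premise positions of the syntax-directed rules.

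The third stage is an induction on the syntax-directed derivation, carrying the invariant that the algorithm's output is $\tless$-least among all syntax-directed answers for $\tmo$ with the given skeleton (and that every oracle query it performs is entailed by $\cso$). The leaves (\rulevar{} and the constant rules) are immediate: $\mathrm{Ectx}(\ctxskel)$ assigns sensitivity $0$ everywhere and $1$ to the used variable, which is $\tless$-below any environment typing the term. The cases that only combine environments linearly (\ruleeapp{}, \ruleetens{}, \rulefix{}) or act on the type alone (\ruleetapp{}, \ruleeamp{}, \rulenats{}) follow from the induction hypotheses together with monotonicity of $+$, $\cdot$, and substitution, plus transitivity to discharge the subtyping premises. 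The three delicate cases are exactly the ones anticipated in \Cref{sec:challenge}. For \ruleiamp{} one invokes the defining clause of the extended semantics, $\semu{\smax{\hat{R}_1}{\hat{R}_2}}_\rho=\max(\semu{\hat{R}_1}_\rho,\semu{\hat{R}_2}_\rho)$, which says precisely that $\smax{\ctxo_1}{\ctxo_2}$ is the \emph{least} common upper bound of the two inferred environments, so optimality at the node follows from optimality of $\ctxo_1,\ctxo_2$. For \ruleitapp{} one uses $\semu{\ssup{\sizeivarone:\kappa}{\hat{R}}}_\rho=\sup_{r\in\lb\kappa\rb}\semu{\hat{R}}_{\rho\cup[\sizeivarone=r]}$, which makes $\ssup{\sizeivarone}{\ctxo}$ the least $\sizeivarone$-free upper bound of the body's environment; this is what is needed to solve the avoidance problem and to justify the quantified constraint the rule emits. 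For \rulenate{} one combines both facts, relativized to the refinements $\sizeitermone=0$ and $\sizeitermone=\sizeivarone+1$ recorded in $\cso$, to show that $\scase{\sizeitermone}{\ctxo_0}{\sizeivarone}{\ctxo_s}$ and $\scase{\sizeitermone}{0}{\sizeivarone}{\req{R'}}\cdot\ctxw$ are the least choices simultaneously compatible with both branches and free of $\sizeivarone$.

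I expect \rulenate{} to be the main obstacle, as it is the single place where all three difficulties coincide: one must take a least upper bound across the two branches, eliminate the freshly bound size variable $\sizeivarone$ that can occur in the successor branch's sensitivities, and do both \emph{conditionally} on the refinement that separates the branches, while also gluing in the scrutinee's environment scaled by the conditional $\scase{\sizeitermone}{0}{\sizeivarone}{\req{R'}}$ of the bound variable's sensitivity. Making this precise amounts to proving that the extended semantics of $\scase{\sizeitermone}{-}{\sizeivarone}{-}$ computes the least environment that dominates $\ctxo_0$ whenever $\sizeitermone=0$, dominates $\ctxo_s[\sizeivarone\mapsto\sizeitermone-1]$ whenever $\sizeitermone\ge 1$, and is free of $\sizeivarone$, and then that the two branch-local subtyping obligations --- which live under $\cso,\sizeitermone=0$ and under $\cso,\sizeitermone=\sizeivarone+1$ respectively --- recombine into a single obligation under $\cso$. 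A secondary, more bureaucratic difficulty throughout the third stage is showing that the algorithm is never stuck: at each node its well-formedness and subtyping side conditions must be entailed by $\cso$ whenever a declarative derivation exists, and the $\req{\cdot}$ and $\mathrm{Ectx}(\cdot)$ bookkeeping for the $\Box$ and $\infty$ annotations must be threaded correctly.
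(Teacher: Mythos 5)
Your proposal gets the high-level architecture right --- the chain of intermediate systems (weakening-internalizing, then syntax-directed, then the algorithm), the role of the extended constructs $\smax{-}{-}$, $\ssup{i}{-}$, $\scase{S}{-}{i}{-}$ as pointwise least upper bounds, and the identification of the dependent-match rule as the hardest case --- but stage 2 is stated incorrectly, and this misplaces the real work of the proof.

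The syntax-directed system is essentially deterministic: given a term with its annotations and an environment skeleton, there is at most one derivable conclusion. The variable rule always assigns sensitivity $1$, the application rule yields exactly $\ctxo + R\cdot\ctxw$, the $\amp$-introduction yields exactly $\smax{\ctxo_1}{\ctxo_2}$, and so on. The ``slack'' lives only in the \emph{premises} (side conditions like $\tyo'\tless\tyo$ and the annotation bound in \ruleiapp), not in the conclusion. So your stage 2 claim --- that eliminating \rulesubl{} and \rulesubr{} yields a syntax-directed derivation of the \emph{same} judgment $\tctxo;\cso\mid\ctxo\vd\tmo:\tyo$ --- is false in general. For instance $\cass{x}{2}{\tyo}\vd x:\tyo$ is derivable, but the syntax-directed system only derives $\cass{x}{1}{\tyo}\vs x:\tyo$. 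The parenthetical ``no judgment is lost'' does not hold.

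Consequently, the subtyping relationship $\ctxo\tless\ctxo'$ and $\tyo'\tless\tyo$ must be established \emph{in the same pass} that moves from the declarative to the syntax-directed system: this is one induction on the declarative derivation, simultaneously eliminating subsumption and proving that the syntax-directed derivation's environment and type are better than the original (the paper's syntax-directed completeness lemma). Your stage 3 invariant, ``$\tless$-least among all syntax-directed answers with that skeleton,'' is then vacuous --- there is only one such answer per skeleton --- and the step from the syntax-directed system to the algorithm is a direct transcription, not a separate minimality argument. The ingredients you flag (the least-upper-bound facts about the three extended constructs, monotonicity of environment addition, scaling, $\smax$, $\ssup$, $\scase$, and a substitution lemma for subtyping) are exactly what is needed, but they get consumed while absorbing subsumption, not afterwards.

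One further point: you say the final induction should be ``on the syntax-directed derivation.'' Since the syntax-directed derivation is being \emph{constructed}, the induction must be on the declarative (or $\Box$-system) derivation that you are given. Once you have the syntax-directed derivation, mapping it to an algorithm trace is routine precisely because the two are isomorphic; inducting there buys nothing.
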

\begin{proof}
  We show that a ``best'' syntax-directed derivation can be build from
  any standard derivation by induction on the original derivation plus
  monotonicity and commutativity properties of the subtype
  relation. Completeness for the algorithm follows.
\end{proof}

\subsection{Removing Sensitivity Annotations} \label{subsec:annot}
We briefly discuss the role annotations play in our algorithm. \DFuzz programs
have three different annotations: the type of the argument for lambda terms
(including the sensitivity), the return type for case, and the type for
fixpoints.

The sensitivity annotations ensure that inferred types are free of terms with
extended sensitivities. This is useful for some optimizations on subtype
checking (introduced later in the paper). However, the general encoding of
subtyping checks works with full extended types, thus the sensitivity
annotations can be safely omitted and the system will infer types containing
extended sensitivities.

Due to technical difficulties in inferring the minimal sensitivity in the
presence of higher-order functions, the argument type in functions ($\tyo$
in $\lambda (x : \tyo)$) must be annotated, and we require the type of fixpoints
to be annotated.

\section{Constraint Solving}
\label{sec:compilation}
The type-checking algorithm introduced in the previous section produces
inequality constraints over the extended sensitivity language. While these
extended sensitivity terms may appear complicated, we can translate them into
equivalent formulas over the first-order theory of arithmetic over $\R$ and
$\N$.
While we show in the next section that the formulas we generate are usually
undecidable, they can still be handled by standard solvers.  Moreover,
in~\Cref{sec:compilationReals} we will present a sound (although not complete)
computable procedure to check the constraints.

To define our translation, it suffices to convert formulas with extended
sensitivities into equivalent ones that use only standard sensitivities, for we
can replace quantification over $\Rextp$ by equivalent formulas that only
quantify over $\R$ and $\N$. For instance, a formula of the form $\forall i :
\Rextp. P$, where $P$ has only quantifiers over $\R$ or $\N$, can be translated
into $(\forall i : \R. i \geq 0 \Rightarrow P)\wedge P'$, where $P'$ is the
result of substituting $\infty$ for $i$ in $P$ and performing all possible
simplifications.

The idea behind our translation is simple: we use a first-order formula to
uniquely specify each extended sensitivity term. Specifically, we define a
predicate $T(R)$ for each extended sensitivity term $R$, such that $\lb T(R)(r)
\rb_\rho$ holds exactly when $r$ is equal to the interpretation of $R$ under the
valuation $\rho$. For instance, consider the translation for $R_1 + R_2$:
\[
  T(R_1 + R_2)(r) := \exists r_1\,r_2 : \Rextp, T(R_1)(r_1) \wedge T(R_2)(r_2)
  \wedge r = r_1 + r_2.
\]
For $\rho$ a valuation for $R_1, R_2$, we have $r_1 = \lb R_1
\rb_\rho$ and $r_2 = \lb R_2 \rb_\rho$. Then the only $r$ that satisfies this
predicate is
\[
  r = r_1 + r_2 = \lb R_1 \rb_\rho + \lb R_2 \rb_\rho = \lb R_1 + R_2 \rb_\rho,
\]
as desired.

For a more involved example, consider the translation of $\smax{R_1}{R_2}$:
\begin{align*}
  &T(\smax{R_1}{R_2})(r) \\
  &:= \exists r_1\,r_2 : \Rextp, T(R_1)(r_1) \wedge
       T(R_2)(r_2) \wedge \\
       &(r_1 \ge r_2 \wedge r = r_1 \vee
        r_2 \ge r_1 \wedge r = r_2).
\end{align*}
Again, for any valuation $\rho$ of $R_1, R_2$, we have $r_1 = \lb R_1 \rb_\rho$
and $r_2 = \lb R_2 \rb_\rho$. The final conjunction states that $r$ must be the
largest among $r_1$ and $r_2$, which is precisely the semantics we have given
$\lb \smax{R_1}{R_2} \rb_\rho$. The full translation is in
\Cref{fig:const-trans}.

\begin{figure*}
\begin{align*}
  \kappa & := \N \st \Rextp \\
  T(i)(r) & := i = r \\
  T(R_1 + R_2)(r)
  & := \exists r_1\,r_2 : \Rextp, T(R_1)(r_1) \wedge
     T(R_2)(r_2) \wedge r = r_1 + r_2 \\
  T(R_1 \cdot R_2)(r)
  & := \exists r_1\,r_2 : \Rextp, T(R_1)(r_1) \wedge
       T(R_2)(r_2) \wedge r = r_1 \cdot r_2 \\
  T(\smax{R_1}{R_2})(r)
  & := \exists r_1\,r_2 : \Rextp, T(R_1)(r_1) \wedge
       T(R_2)(r_2) \wedge
       (r_1 \ge r_2 \wedge r = r_1 \vee
        r_2 \ge r_1 \wedge r = r_2) \\
  T(\scase{S}{R_0}{i}{R_s})(r)
  & := \exists r_s : \N, T(S)(r_s) \land
          (r_s = 0 \land T(R_0)(r)
           \lor \exists i : \N, r_s = i+1 \land T(R_s)(r)) \\
  T(\ssup{i : \kappa}{R})(r)
  & := \sbound(i : \kappa,R,r) \wedge
      \forall r'.
      \sbound(i : \kappa,R,r') \Rightarrow
        r' \ge r \\
  \sbound(i : \kappa,R,r)
  & := \forall i : \kappa. \exists r' : \Rextp. T(R)(r') \wedge r' \le r
\end{align*}
\caption{Constraint Translation}
\label{fig:const-trans}
\end{figure*}

We formalize our intuitive explanation of the translation with the following
lemma.
\begin{lemma}
  \label{lemma:sens-translation-sound}
  For every sensitivity expression $R$ and $r \in \Rextp$,
  and for every valuation $\rho$ whose domain
  contains the free variables of $R$, $\lb T(R)(r) \rb_\rho
  \iff r = \lb R \rb_\rho$
\end{lemma}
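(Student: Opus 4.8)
The plan is to prove the biconditional by structural induction on the index expression $R$, handling sizes and sensitivities in a single induction since the grammar of sizes is a fragment of that of sensitivities and $T$ is defined on both. The organizing observation is that each $T(R)$ is a \emph{functional} predicate: under any valuation $\rho$ covering the free variables of $R$ there is exactly one $r\in\Rextp$ with $\lb T(R)(r)\rb_\rho$, and it is $\lb R\rb_\rho$. Concretely, in each case I will verify both implications: for ``$\Rightarrow$'' I unfold the translation and use the induction hypotheses to show the existentially quantified witnesses are forced to be the denotations of the corresponding subexpressions, which pins down $r$; for ``$\Leftarrow$'' I exhibit exactly those witnesses and check the formula holds. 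All formulas here are interpreted under $\rho$ (suitably extended for bound index variables), so no display-level reasoning is needed, just careful bookkeeping of valuations.

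The base cases are immediate: $T(i)(r)$ is $i=r$, which under $\rho$ asserts $\rho(i)=r=\lb i\rb_\rho$; the constant, zero, and successor cases ($T(c)(r):=c=r$, $T(0)(r):=r=0$, $T(S+1)(r):=\exists r':\N.\,T(S)(r')\wedge r=r'+1$) are the same, the last invoking the induction hypothesis on $S$. For $R_1+R_2$ and $R_1\cdot R_2$, the induction hypotheses force the witnesses $r_1,r_2$ to equal $\lb R_1\rb_\rho$ and $\lb R_2\rb_\rho$, so $r$ is forced to $\lb R_1\rb_\rho+\lb R_2\rb_\rho=\lb R_1+R_2\rb_\rho$ (resp.\ with $\cdot$), and the same witnesses witness the formula in the converse direction. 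The case $\smax{R_1}{R_2}$ is identical except that the final disjunction $(r_1\ge r_2\wedge r=r_1)\vee(r_2\ge r_1\wedge r=r_2)$ pins $r$ to $\max(r_1,r_2)$, matching the semantic clause for $\smax{R_1}{R_2}$.

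For $\scase{S}{R_0}{i}{R_s}$, I case on $n:=\lb S\rb_\rho$. By the induction hypothesis on $S$, the witness $r_s$ must equal $n$. If $n=0$, the formula collapses to $T(R_0)(r)$ and the hypothesis on $R_0$ gives $r=\lb R_0\rb_\rho$, which is the denotation of the case expression. If $n\ge 1$, the constraint $r_s=i+1$ forces the bound variable $i$ (which I take fresh for $\rho$, so that $\rho$ can be extended) to the value $n-1$, reducing the formula to $\lb T(R_s)(r)\rb_{\rho\cup[i=n-1]}$; the hypothesis on $R_s$ under this extended valuation yields $r=\lb R_s\rb_{\rho\cup[i=n-1]}$, again exactly the semantics.

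The case $\ssup{i:\kappa}{R}$ is where I expect the real work to be. Using the induction hypothesis on $R$ under $\rho\cup[i=v]$ for each $v\in\lb\kappa\rb$, I will show $\sbound(i:\kappa,R,r)$ holds under $\rho$ precisely when $r$ is an upper bound of $A_\rho:=\{\lb R\rb_{\rho\cup[i=v]}\mid v\in\lb\kappa\rb\}$; hence $T(\ssup{i:\kappa}{R})(r)$ holds precisely when $r$ is an upper bound of $A_\rho$ lying below every upper bound, i.e.\ $r=\sup A_\rho$. The key remaining point is that this least upper bound exists \emph{in $\Rextp$}: since $\Rextp=\R^{\ge 0}\cup\{\infty\}$ is a complete lattice (the real supremum when $A_\rho$ is bounded, and $\infty$ otherwise), $\sup A_\rho\in\Rextp$ and is the unique solution, matching $\lb\ssup{i:\kappa}{\esens}\rb_\rho=\sup_{v\in\lb\kappa\rb}\{\lb\esens\rb_{\rho\cup[i=v]}\}$. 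This completeness observation, together with the nested-quantifier bookkeeping for $\sbound$, is the main obstacle; the remaining cases are routine.
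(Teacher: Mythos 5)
Your proof is correct and follows the same strategy as the paper: structural induction on $R$, showing that each $T(R)$ is a functional predicate whose unique solution under $\rho$ is $\lb R\rb_\rho$. The paper's own proof is a one-liner ("By induction on $R$", deferring to the text for the $+$ and $\max$ cases); you have simply filled in the remaining cases, correctly identifying the $\ssup$ case — where completeness of $\Rextp$ is needed to guarantee that the supremum exists and is the unique simultaneously upper- and lower-bounded value — as the one requiring a real argument.
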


\begin{proof}
  By induction on $R$. We have already considered the $R_1 + R_2$ and
  $\smax{R_1}{R_2}$ cases above.
\end{proof}

Using the translation of terms, we can translate sensitivity constraints
generated by our typing algorithm.  We map each constraint of the form
\[
\icontextone; \iconstraintone \models R_1 \ge R_2
\]
to
\[
\forall \icontextone, \iconstraintone \Rightarrow \exists r_1\,r_2 : \Rextp,
T(R_1)(r_1)\wedge T(R_2)(r_2) \wedge r_1 \geq r_2
\]
Thanks to~\Cref{lemma:sens-translation-sound}, this translation is equivalent to
the semantics of sensitivity constraints given in~\Cref{sec:dfuzz}.

\section{Undecidability of Type-checking}
\label{sec:undecidability}
As we have seen in the previous section, constraints over our extended
sensitivity language can be translated to simple first-order formulas. Taken by
itself, this is not entirely satisfactory, as the first-order theory of $\N$ is
already undecidable. A nice illustration of this is Hilbert's tenth problem,
which asks if a polynomial equation of the form $P(\vec{x}) = 0$ over several
variables has any solutions over the natural numbers. After several years of
investigation, this property was finally shown to be undecidable.

In this section, we will show that this result makes \DFuzz type-checking
undecidable. We begin with an auxiliary lemma.

\begin{lemma} \label{lem:undec-aux}
  Given polynomials $P$, $Q$ over $n$ variables with coefficients in
  $\N$, checking $\forall \vec{i} \in \N^n, P(\vec{i}) \geq Q(\vec{i})$ is
  undecidable.
\end{lemma}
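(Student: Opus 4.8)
The statement is that deciding $\forall \vec i \in \N^n,\ P(\vec i) \ge Q(\vec i)$ for polynomials $P, Q$ with natural-number coefficients is undecidable. The natural route is a reduction from Hilbert's tenth problem (as the surrounding text flags): given a polynomial $D(\vec x)$ with integer coefficients, decide whether $D(\vec x) = 0$ has a solution over $\N$. First I would split $D$ into its positive and negative parts, writing $D = D_1 - D_2$ where $D_1, D_2$ have coefficients in $\N$; then $D(\vec i) = 0$ iff $D_1(\vec i) = D_2(\vec i)$, so a solution exists iff $\neg\,\forall \vec i,\ D_1(\vec i) \ne D_2(\vec i)$. The obstacle is that our target predicate is an inequality $\ge$, not a disequality $\ne$, so I cannot directly express ``$D_1(\vec i) \ne D_2(\vec i)$ for all $\vec i$''.

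To bridge this gap I would encode $\ne$ via $\ge$ using a squaring trick. Observe that for natural numbers $a, b$, we have $a \ne b$ iff $(a - b)^2 \ge 1$, and $(a-b)^2 = (D_1 - D_2)^2 = D_1^2 + D_2^2 - 2 D_1 D_2$ is again a difference of two polynomials with $\N$-coefficients, say $E_1 - E_2$ with $E_1 = D_1^2 + D_2^2$ and $E_2 = 2 D_1 D_2$. Thus ``$D$ has no natural root'' is equivalent to $\forall \vec i,\ (D_1(\vec i) - D_2(\vec i))^2 \ge 1$, i.e. $\forall \vec i,\ E_1(\vec i) + 1 \ge E_2(\vec i) + 1$ — wait, more cleanly: $\forall \vec i,\ E_1(\vec i) \ge E_2(\vec i) + 1$. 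Since $E_1, E_2$ have $\N$-coefficients and the constant polynomial $1$ does too, setting $P := E_1$ and $Q := E_2 + 1$ gives an instance of our problem. Conversely, $D$ \emph{has} a natural root iff $P(\vec i) \ge Q(\vec i)$ fails for some $\vec i$ (namely $(D_1-D_2)^2 = 0 < 1$ there), i.e. iff $\forall \vec i,\ P(\vec i) \ge Q(\vec i)$ is false.

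So the reduction is: from $D$, compute $P = D_1^2 + D_2^2$ and $Q = 2 D_1 D_2 + 1$ (all effectively, by expanding and collecting terms), and answer ``$D(\vec x) = 0$ has a solution over $\N$'' if and only if the decision procedure for $\forall \vec i,\ P(\vec i) \ge Q(\vec i)$ returns \emph{false}. Since Hilbert's tenth problem is undecidable, the universally-quantified polynomial inequality problem is undecidable as well. The only point requiring care is the bookkeeping that $D_1, D_2$ (hence $P, Q$) genuinely have coefficients in $\N$, which follows immediately from taking $D_1$ to collect the monomials of $D$ with positive coefficients and $D_2$ (with its sign flipped) to collect those with negative coefficients; the constant $1$ keeps everything over $\N$. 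I expect the main (minor) obstacle to be precisely this translation from a disequality/emptiness question to a single $\forall$-inequality, which the squaring-plus-``$\ge 1$'' device resolves; everything else is routine.
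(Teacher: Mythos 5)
Your proof is correct and takes essentially the same route as the paper: reduce from Hilbert's tenth problem, note that $D$ has no root over $\N$ iff $\forall \vec i,\ D(\vec i)^2 \ge 1$, and split the squared polynomial into a difference of two $\N$-coefficient polynomials to obtain the target inequality. The only (immaterial) difference is that the paper splits $D^2$ directly into its positive and negative monomials, while you first split $D = D_1 - D_2$ and then take $E_1 = D_1^2 + D_2^2$, $E_2 = 2D_1D_2$; both yield a valid reduction.
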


\begin{proof}
  We will use a solution to our problem to solve Hilbert's tenth
  problem. Suppose we are given a polynomial $P$ with integer
  coefficients, and we want to decide whether $\exists \vec{i} \in
  \N^n, P(\vec{i})=0$. This is equivalent to deciding $\neg \forall
  \vec{i} \in \N^n, P(\vec{i})^2 \geq 1$. Write $P(\vec{i})^2 =
  P^+(\vec{i}) - P^-(\vec{i})$, where $P^+$ and $P^-$ have only
  positive coefficients. Then our condition is equivalent to $\neg
  \forall \vec{i} \in \N^n, P^+(\vec{i}) \geq P^-(\vec{i}) + 1$. Thus,
  we can solve Hilbert's tenth problem by using $P^+$ and $P^-+1$ as
  inputs to our problem, which shows that it is undecidable.
\end{proof}

We can then show the following
\begin{theorem}
  \DFuzz type checking is undecidable.
\end{theorem}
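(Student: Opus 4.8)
The plan is to reduce the undecidable problem of \Cref{lem:undec-aux} to \DFuzz type checking: given polynomials $P,Q$ over $n$ variables with coefficients in $\N$, I will produce a type-checking instance that is derivable if and only if $\forall \vec i\in\N^n,\ P(\vec i)\ge Q(\vec i)$. The key observation is that such a polynomial is literally a \DFuzz sensitivity expression over size variables $i_1,\dots,i_n:\natone$ (a sum of products of natural constants and the $i_j$), and its denotation under a valuation $\rho$ is exactly the polynomial evaluated at $\rho$. Write $\sigma_R$ for the \DFuzz type obtained by closing $!_R\,\R\lin\R$ under the quantifiers $\forall i_1:\natone.\cdots\forall i_n:\natone$. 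The instance I would use is the context $\ctxo := \{\,\cass{g}{1}{\sigma_Q}\,\}$, the term $g$, and the target type $\sigma_P$; here $\sigma_Q$ is a closed type, so this is a legitimate instance with empty index and constraint environments. (If the type-checking problem is required to be over the empty environment, replace $g$ by $\lambda \cass{g}{1}{\sigma_Q}.\,g$ and $\sigma_P$ by $!_1\sigma_Q\lin\sigma_P$; nothing else changes.)

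Soundness of the reduction is a direct derivation. Assume $\forall\vec i\in\N^n,\ P(\vec i)\ge Q(\vec i)$. By \rulevar, $g$ has type $\sigma_Q$; applying \rulesubr it suffices to derive $\emptyctx;\emptyctx\vsub\sigma_Q\tless\sigma_P$. Peeling off the $n$ quantifiers with \struleforall reduces this to $i_1:\natone,\dots,i_n:\natone;\emptyctx\vsub{!_{Q(\vec i)}\,\R\lin\R}\tless{!_{P(\vec i)}\,\R\lin\R}$, which by \strulelin follows from the leaf $i_1:\natone,\dots,i_n:\natone;\emptyctx\models Q(\vec i)\le P(\vec i)$ together with two instances of $\R\tless\R$ closed by \strulerefl. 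By the definition of subtyping leaves, that leaf assertion is precisely $\forall\vec i\in\N^n,\ Q(\vec i)\le P(\vec i)$, i.e.\ our assumption.

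Completeness of the reduction is where the only real work lies. Suppose $\emptyctx;\emptyctx\mid\ctxo\vd g:\sigma_P$ is derivable. Since the only typing rules that can conclude a variable judgment are \rulevar, \rulesubl and \rulesubr, inverting the derivation shows it consists of a single use of \rulevar (necessarily producing $g:\sigma_Q$, since the environment subtyping used by \rulesubl preserves the type of $g$) followed by a sequence of subsumption steps; collecting the \rulesubr steps gives a chain $\sigma_Q=\tau_0\tless\tau_1\tless\cdots\tless\tau_k=\sigma_P$ of valid subtyping derivations. The \DFuzz subtyping rules only relate types of the same top-level shape, so an easy induction shows every $\tau_j$ has the form $\forall i_1:\natone.\cdots\forall i_n:\natone.\,{!_{R_j}\,\R\lin\R}$ for some standard sensitivity (polynomial) $R_j$, with $R_0=Q(\vec i)$ and $R_k=P(\vec i)$; and reading off \struleforall followed by \strulelin at each link yields the leaf $i_1:\natone,\dots,i_n:\natone;\emptyctx\models R_j\le R_{j+1}$, i.e.\ $\forall\vec i\in\N^n,\ R_j(\vec i)\le R_{j+1}(\vec i)$. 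Chaining these gives $\forall\vec i\in\N^n,\ Q(\vec i)\le P(\vec i)$. Thus the instance is derivable exactly when $\forall\vec i\in\N^n,\ P(\vec i)\ge Q(\vec i)$, which is undecidable by \Cref{lem:undec-aux}; and since \DFuzz embeds conservatively into \EDFuzz, the same instance shows \EDFuzz type checking undecidable as well.

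The main obstacle is the inversion argument above: one must verify that interleaved uses of the two subsumption rules can neither change the top-level shape of the type nor ``absorb'' the index inequality, so that derivability genuinely forces the single constraint $\forall\vec i,\ Q(\vec i)\le P(\vec i)$. This relies on the shape-preservation property of the \DFuzz subtyping rules and on the type-preserving form of environment subtyping adopted in \Cref{sec:dfuzz}. One could instead bypass the hand inversion entirely by invoking the algorithmic soundness and completeness results of \Cref{sec:algorithmic}: run on this instance, the inference algorithm produces exactly the constraint $\forall\vec i\in\N^n,\ Q(\vec i)\le P(\vec i)$ up to trivial reflexivity conjuncts, so derivability is equivalent to its validity.
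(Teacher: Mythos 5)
Your proof is correct and reduces from the same auxiliary result (\Cref{lem:undec-aux}) through the same subtyping gap $\sigma_Q \tless \sigma_P$, but it reaches that gap by a genuinely different and arguably cleaner route. The paper constructs an explicit polynomial-multiplication function via recursion and dependent pattern matching, asserts (without proof) that its minimal type is $\sigma_Q$, and then appeals to the informal fact that type-checking a term against $\sigma_P$ reduces to comparing its minimal type with $\sigma_P$. That argument is delicate: the paper's own appendix (\Cref{ex:minimal}) shows that \DFuzz does not have minimal types in general, so the claim for this particular term carries a real proof obligation that the paper leaves implicit. You sidestep all of this by type-checking a bare variable $g$ against $\sigma_P$ in a context declaring $g :_1 \sigma_Q$, and then inverting the typing derivation directly: since only \rulevar, \rulesubl, and \rulesubr can conclude a variable judgment, and \rulesubl preserves the assigned type under your restricted environment subtyping, any derivation forces a subtyping chain from $\sigma_Q$ to $\sigma_P$, which by shape-preservation of the subtyping rules flattens to exactly $\forall \vec{i} \in \N^n,\ Q(\vec{i}) \le P(\vec{i})$. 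This makes the backward direction of the reduction fully rigorous without any appeal to minimal types or to a constructed term. What the paper's version buys is a self-contained closed term whose sensitivity genuinely arises from computation (including the $!_0 \N^n[\vec{i}]$ arguments that bind the size variables), rather than a type planted in the context; yours buys rigor and brevity at the cost of the instance being somewhat artificial. Both are valid; yours is the tighter proof.
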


\begin{proof}
  Suppose we are given $P$ and $Q$ as previously. Consider the types $\typeone =
  \forall \vec{i}, !_0 \N^n[\vec{i}] \lol \scale_{Q(\vec{i})} \R \lol \R$ and
  $\typetwo = \forall \vec{i}, !_0 \N^n[\vec{i}] \lol \scale_{P(\vec{i})} \R
  \lol \R$. Then $\typeone \tless \typetwo$ is equivalent to $\forall \vec{i},
  P(\vec{i}) \geq Q(\vec{i})$.  On the other hand, using recursion and dependent
  pattern matching, it is possible to write a function that multiplies a real
  number by a polynomial $Q(\vec{v})$ with variables ranging over $\N$. Its
  minimal type will clearly be $\typeone$. Therefore, type-checking it against
  $\typetwo$ is equivalent to deciding $\typeone \tless \typetwo$; since $P$ and
  $Q$ are arbitrary, this is undecidable by \Cref{lem:undec-aux}.
\end{proof}

\section{Approaches to Constraint Solving}

\aa{We should probably mention here that we don't really have a through
  empirical evaluation of these optimizations...}

Given that type-checking \DFuzz (and hence also \EDFuzz) is undecidable, is
there anything more we can do besides feeding the constraints to a solver and
hoping for the best? In this section, we discuss two possible directions to
tackle these constraints. For both of these approaches, we require that {\em all
  annotations in the term be standard sensitivities}, rather than
extended. Then, we have the following lemma. (We defer the proof to the
Appendix.)
\begin{lemma}[Standard Annotations]
  \label{lem:annot}
  Assume annotations in a term $e$ range over standard
  sensitivities and $\algin{\tctxo}{\cso}{\ctxskel}{\tmo} \produces
  \algout{\ectxo}{\cctxo}{\ctxo}{\tyo}$. Then:
  \begin{itemize}
    \item $\tyo$ has no extended sensitivities; and
    \item all constraints required for the algorithm are of the form
      $\tctxo;\cso\models R \geq R'$ where $R$ is a standard
      sensitivity term.
  \end{itemize}
\end{lemma}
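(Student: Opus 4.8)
The plan is to prove both conclusions together, as one invariant, by structural induction on the algorithmic derivation $\algin{\tctxo}{\cso}{\ctxskel}{\tmo} \produces \algout{\ectxo}{\cctxo}{\ctxo}{\tyo}$. Before the induction I would record a standing assumption implicit in the intended use of the algorithm: the types occurring in the input skeleton $\ctxskel$ --- and in the goal type, when the algorithm is run for type checking --- contain no extended sensitivities. Combined with the hypothesis on $e$, this makes every \emph{source-level} object ever fed to the algorithm --- the type annotations on $\lambda$ and $\bfix$, the $\bcase$ return type, the index argument of an index application, the successor size in \rulenats --- a standard term, because the new constructs $\smax{-}{-}$, $\ssup{i}{-}$ and $\scase{-}{-}{-}{-}$ are produced only on the \emph{environment} side of the algorithmic rules and are never inserted into a type.

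Next I would isolate two bookkeeping observations that make the inductive step mechanical. First, whenever the algorithm must discharge a subtyping obligation $\tctxo';\cso' \models \tyo \tless \tyw$ between two \emph{standard} types, unfolding the rules of \Cref{fig:dfuzz-sub} reduces it to a finite conjunction of atomic sensitivity constraints, each produced by an application of \strulelin and hence of the form $\tctxo'';\cso'' \models R' \ge R$ with $R$ and $R'$ both subterms of $\tyo$ or $\tyw$, so both standard. Second, whenever the algorithm compares a freshly generated environment of meta-variables $\ctxo_m$ against an inferred environment --- as in \ruleiamp, \ruleitapp, \rulenate --- the obligation $\ctxo_m \tless \ctxo_j$ unfolds to atomic constraints $\models R_m \ge R$ whose left-hand side $R_m$ is a fresh meta-variable, hence standard, while any extended terms inside $\ctxo_j$ stay on the right.

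With these in place I would run through the rules of \Cref{fig:dfuzza-alg}. The leaves \ruleconst, \ruleconstN, \rulevar are immediate: the output type is $\R$, $\N[S]$ with $S$ a standard size, or a type copied from $\ctxskel$, and no atomic constraint is emitted. For every other rule the returned type is built solely from (i) the types returned by the recursive calls, standard by the induction hypothesis, (ii) standard annotations --- this is the one place the hypothesis is really used: the $\bcase$ return type in \rulenate, the type in \rulefix, and the argument type and sensitivity of \ruleiapp --- and (iii) standard size or sensitivity terms substituted into a standard type (\ruleetapp, \rulenats); so no extended construct ever reaches $\tyo$. The returned environment may pick up $\smax{-}{-}$, $\ssup{i}{-}$ or $\scase{-}{-}{-}{-}$, but the statement makes no claim about it. Finally, each atomic constraint the algorithm needs is one of: a constraint carried up from a recursive call, standard on the left by the induction hypothesis; an atomic consequence of a subtyping check between standard types (first observation); an atomic consequence of a fresh-meta environment comparison (second observation); or an explicit premise inequality such as $\tctxo;\cso \models \sensitermone \ge \req{\sensitermone'}$ from \ruleiapp, whose left side is the standard annotation --- so in every case the left side is a standard sensitivity term.

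The step I expect to need the most care is the orientation bookkeeping behind the two observations: one must notice that in \strulelin the premise is written $R \le R'$, so it is the \emph{supertype}'s sensitivity that ends up on the ``$\ge$'' side, and that the algorithm introduces fresh meta-variables in \ruleiamp, \ruleitapp and \rulenate precisely so that those meta-variables --- not the possibly extended inferred sensitivities --- occupy that side. Once these invariants are pinned down, the remaining verification is purely mechanical.
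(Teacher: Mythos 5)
Your proof is correct and matches the paper's argument: both proceed by induction to show that output types contain only standard sensitivities, and both then observe that the only constraint whose right-hand side may be extended is the annotation check in \ruleiapp, where the left-hand side is the (standard) annotation itself. The paper states and proves the lemma for the syntax-directed system of \Cref{fig:dfuzz-syn-typing} and notes that it ``directly implies'' the algorithmic version, whereas you work directly on the algorithmic judgment (and explicitly record the implicit standing assumption that the input skeleton is standard, plus a short discussion of the fresh meta-variable environments $\ctxo_m$ arising in \ruleiamp, \ruleitapp and \rulenate, which the paper does not count among the ``constraints required for the algorithm'' but whose inclusion does no harm).
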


\subsection{Modifying the subtype relation}
\label{sec:compilationReals}

The first approach is to restrict \EDFuzz to a fragment that enjoys decidable
type checking, which we call \UDFuzz. The main difference between both languages
is the interpretation of subtyping constraints: in \UDFuzz, constraint variables
are interpreted uniformly, ranging over \emph{all} possible sensitivity values,
regardless of their kind. As noted in~\Cref{sec:compilation}, we can translate
such formulas into the first-order theory of real arithmetic; since this theory
is decidable, so is \UDFuzz type checking.

Of course, this only makes sense if we can show that \UDFuzz is \emph{sound}
with respect to \EDFuzz. As it turns out, it suffices to restrict \UDFuzz
annotations to standard sensitivities---as we'll see, this forces the subtyping
relation of \UDFuzz to be a subrelation of the one of \EDFuzz. This restriction
rules out some programs that are typeable under \EDFuzz, but is expressive
enough to cover interesting ones, including most of the original examples
\citep{DFuzz}.

Formally, besides the restriction on annotations, \UDFuzz is the system obtained
from \EDFuzz by replacing all constraints of the form $\tctxo;\cso\models R \geq
R'$ with \emph{uniform} constraints $\tctxo;\cso\umodels R \geq R'$, which have
the following interpretation:
\[
\forall \rho \in \mathsf{val}_U(\tctxo).\lb\cso\rb_\rho^U\Rightarrow \lb
R\rb_\rho^U\geq\lb R'\rb_\rho^U
\]
Here, $\mathsf{val}_U(\tctxo)$ is the set of all \emph{uniform} valuations, that
map variables in $\dom(\phi)$ to values in $\Rextp$. The denotation
$\lb\cdot\rb_\rho^U$ of formulas and sensitivity and size terms is the same as
before, except for two cases:
\begin{align*}
\lb\ssup{i:\kappa}{\esens}\rb_{\rho}^U
&:= \sup_{r\in \Rextp}\{\lb \esens\rb_{\rho\cup[i=r]}^U\}
\\
\lb\scase{S}{\esens_0}{i}{\esens_s}\rb_{\rho}^U
&:= \left\{
  \begin{array}{lcl}
    \lb \esens_1\rb_{\rho}^U & \text{if} & \lb S\rb_{\rho}^U = 0 \\
    0 & \text{if} & \lb S\rb_{\rho}^U \in (0, 1) \\
    \lb \esens_2\rb_{\rho\cup [i = r - 1] }^U  & \text{if} & \lb
    S\rb_{\rho}^U = r \ge 1.
  \end{array}
\right .
\end{align*}

We first show that this uniform semantics is an extension of the standard
semantics.
\begin{lemma} \label{lem:unif-ext}%
  Suppose $R$ is a standard sensitivity term, typed under environment
  $\phi$. Then, for any standard valuation $\rho \in \mathsf{val}(\phi)$, we
  have
  \[
    \lb R \rb_\rho^U =  \lb R \rb_\rho.
  \]
\end{lemma}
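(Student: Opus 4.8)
The plan is to proceed by structural induction on the standard sensitivity term $R$. The crucial observation is that the uniform denotation $\lb\cdot\rb^U_\rho$ and the standard denotation $\lb\cdot\rb_\rho$ are given by literally the same clauses on every syntactic form \emph{except} $\ssup{i:\kappa}{-}$ and $\scase{S}{-}{i}{-}$; since $R$ is by definition a standard term of \EDFuzz, it contains neither of these constructs, so every clause invoked in computing $\lb R\rb^U_\rho$ coincides with the corresponding clause for $\lb R\rb_\rho$, and the two values agree.

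Before the induction I would record two bookkeeping facts. First, well-typedness of $R$ under $\phi$ guarantees that every free index variable of $R$ lies in $\dom(\phi)$, so any $\rho\in\mathsf{val}(\phi)$ assigns a value to each free variable of $R$ and both denotations are well-defined. Second, a standard valuation $\rho\in\mathsf{val}(\phi)$ is in particular a uniform valuation, since it maps every variable of kind $\kappa$ into $\lb\kappa\rb\subseteq\Rextp$ (using $\N\subseteq\Rextp$); hence $\lb R\rb^U_\rho$ is meaningful and we may compare it to $\lb R\rb_\rho$. I would also note, by an immediate sub-induction, that the analogous equality $\lb S\rb^U_\rho=\lb S\rb_\rho$ holds for size terms $S$, since these are built only from $0$, $(-)+1$, and variables, none of which distinguishes the two semantics.

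The induction itself then goes case by case over the grammar of standard sensitivities: for a constant $r$ both sides evaluate to $r$; for a variable $i$ both sides evaluate to $\rho(i)$; for $R_1+R_2$ and $R_1\cdot R_2$ we apply the induction hypothesis to $R_1$ and $R_2$ and use that addition and multiplication on $\Rextp$ are interpreted identically by $\lb\cdot\rb^U_\rho$ and $\lb\cdot\rb_\rho$; and for a size term $S$ occurring in sensitivity position we invoke the preliminary fact about sizes. I do not expect a genuinely hard step: the content of the lemma is exactly that the uniform semantics was designed to extend the standard one on the non-extended fragment, so the proof is a routine verification that nothing in the standard grammar ever triggers the two clauses where the semantics differ. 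The only points requiring care are definitional hygiene — confirming that the notion of ``standard sensitivity term'' fixed when \EDFuzz was introduced really excludes all three extended constructs — and handling the passage between $\mathsf{val}(\phi)$ and $\mathsf{val}_U(\phi)$ cleanly, so that $\rho$ is used consistently as an input to both interpretations.
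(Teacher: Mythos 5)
Your proof is correct and takes essentially the same approach as the paper, which simply states ``Immediate from the definition of the interpretation''; you are just spelling out what ``immediate'' means here (the two semantics agree on all clauses except $\ssup$ and $\scase$, which cannot appear in a standard term, and a standard valuation is also a uniform one since $\N\subseteq\Rextp$).
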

\begin{proof}
  Immediate from the definition of the interpretation.
\end{proof}

We are now ready to prove that the uniform interpretation of constraints is
sound with respect to the original interpretation.
\begin{theorem} \label{thm:uniform-sound}%
  Suppose $R, R'$ are well-typed in environment $\phi$, with $R$ standard.
  Suppose that
  \(
    \tctxo; \cso \umodels R \geq R'
  \)
  is valid. Then
  \(
    \tctxo; \cso \models R \geq R'
  \)
  is also valid.
\end{theorem}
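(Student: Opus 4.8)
The plan is to reduce uniform validity to standard validity valuation-by-valuation, exploiting that every standard valuation is also a uniform one and that on standard valuations the uniform interpretation over-approximates the standard one in the direction we need. The first step is an auxiliary monotonicity lemma: for every (possibly extended) sensitivity term $R'$ well-typed in $\tctxo$ and every standard valuation $\rho \in \mathsf{val}(\tctxo)$, $\lb R' \rb_\rho \le \lb R' \rb_\rho^U$. I would prove this by structural induction on $R'$. The base cases (constants, variables, $0$) give equality; the cases $S+1$, $R_1 + R_2$, $R_1 \cdot R_2$, and $\smax{R_1}{R_2}$ follow from the induction hypotheses and monotonicity of addition, multiplication over $\Rextp$, and $\max$. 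For $\ssup{i:\kappa}{R'}$ both effects point the same way: the uniform semantics ranges over $\Rextp \supseteq \lb \kappa \rb$ and, by the induction hypothesis, the body is no smaller under the uniform interpretation, so $\sup_{r \in \lb\kappa\rb}\lb R'\rb_{\rho\cup[i=r]} \le \sup_{r \in \Rextp}\lb R'\rb_{\rho\cup[i=r]}^U$. For $\scase{S}{R'_0}{i}{R'_s}$ the key point is that $S$ is a size term, so for a standard $\rho$ we have $\lb S\rb_\rho = \lb S\rb_\rho^U \in \N$; hence the extra $(0,1)$ clause of the uniform semantics never fires, the two semantics select corresponding branches, and the claim reduces to the induction hypothesis on $R'_0$ or on $R'_s$ (using that $\rho \cup [i = n-1]$ is again a standard valuation for the extended environment).

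With this lemma in hand the theorem is immediate. Assume $\tctxo;\cso \umodels R \ge R'$ is valid and fix an arbitrary $\rho \in \mathsf{val}(\tctxo)$ with $\lb\cso\rb_\rho$; we must show $\lb R\rb_\rho \ge \lb R'\rb_\rho$. Since $\N \subseteq \Rextp$, $\rho$ is also a uniform valuation, $\rho \in \mathsf{val}_U(\tctxo)$. By its grammar, $\cso$ is a conjunction of equations between size terms, and size terms are standard, so $\lb\cso\rb_\rho^U = \lb\cso\rb_\rho$ holds; the uniform hypothesis then yields $\lb R\rb_\rho^U \ge \lb R'\rb_\rho^U$. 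Applying \Cref{lem:unif-ext} to the standard term $R$ gives $\lb R\rb_\rho^U = \lb R\rb_\rho$, and the auxiliary lemma applied to $R'$ gives $\lb R'\rb_\rho^U \ge \lb R'\rb_\rho$. Chaining,
\[
  \lb R\rb_\rho = \lb R\rb_\rho^U \ge \lb R'\rb_\rho^U \ge \lb R'\rb_\rho,
\]
which is what we needed.

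The main place requiring care is the $\scase$ case of the auxiliary lemma: the uniform semantics carries an additional clause for discriminants lying in $(0,1)$, and the argument depends on the discriminant being a genuine size term so that this clause is unreachable under a standard valuation. I would also be careful that every valuation extension $\rho \cup [i = r]$ used in the $\ssup$ and $\scase$ cases is a well-formed valuation for the appropriate extended index environment (using the freshness conditions), so that the induction hypotheses genuinely apply; everything else is routine monotonicity bookkeeping.
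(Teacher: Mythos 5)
Your proof is correct and follows essentially the same route as the paper's: both reduce to the chain $\lb R\rb_\rho = \lb R\rb_\rho^U \ge \lb R'\rb_\rho^U \ge \lb R'\rb_\rho$, using \Cref{lem:unif-ext} for the left equality and the over-approximation $\lb R'\rb_\rho \le \lb R'\rb_\rho^U$ for the right inequality. The only difference is that the paper asserts that over-approximation as ``clear,'' whereas you spell out the structural induction (and correctly flag the $\scase$ discriminant and the constraint environment $\cso$ as the places needing a size-term/standardness observation).
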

\begin{proof}
  It is clear that for any standard valuation $\rho \in \mathsf{val}(\phi)$, we
  have $\lb R' \rb_\rho^U \geq \lb R' \rb_\rho$. Assuming this, the hypothesis
  of the theorem yields
  \(
    \lb R \rb_\rho^U \geq \lb R' \rb_\rho^U \geq \lb R' \rb_\rho
  \)
  for every standard valuation $\rho\in\mathsf{val}(\phi)$.  But $R$ is a
  standard sensitivity, so $\lb R \rb_\rho^U = \lb R \rb_\rho$ by
  \Cref{lem:unif-ext}, and we are done.
\end{proof}
Thanks to \Cref{lem:annot}, all \UDFuzz constraints are of this form, which
shows that the subtype relation of \UDFuzz is a subrelation of the subtype
relation in \EDFuzz.  By reasoning analgous to
\Cref{lemma:sens-translation-sound}, we can show that relaxing the first order
translation of constraints captures this uniform interpretation. More formally:

\begin{lemma} \label{lem:relax-correct}
  For every sensitivity term $R$, let $T^U(R)$ be a unary predicate defined
  exactly as in \Cref{fig:const-trans}, but replacing quantification over $\N$
  with quantificiation over $\Rextp$ and with the modified $\bcase$ translation:
  \begin{equation*}
    \begin{array}{l}
      T^U(\scase{S}{R_0}{i}{R_s})(r) :=\\
      \quad
      \begin{array}{rl}
        \exists r_s : \Rextp,& T^U(S)(r_s) \land (r_s = 0 \land T^U(R_0)(r)) \\
        \lor & (0 < r_s < 1 \land r = 0) \\
        \lor & (\exists i : \Rextp, i\geq 0\land r_s = i+1 \land T^U(R_s)(r))
      \end{array}
    \end{array}
  \end{equation*}
  Then, $r \in \Rextp$, and for every uniform valuation $\rho$ whose
  domain contains the free variables of $R$, $\lb T^U(R)(r) \rb_\rho^U \iff r =
  \lb R \rb_\rho^U$.
\end{lemma}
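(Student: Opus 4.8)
The plan is to prove the lemma exactly as \Cref{lemma:sens-translation-sound} is proved, by induction on the structure of $R$, but carrying everything out with respect to the uniform interpretation $\lb\cdot\rb^U_\rho$ and the modified predicate $T^U$. For the variables, constants, the size constructors, $+$, $\cdot$, and $\smax{-}{-}$, the clauses defining $T^U$ coincide verbatim with those defining $T$ in \Cref{fig:const-trans} (all their existential witnesses already range over $\Rextp$), so those cases go through unchanged once we appeal to the induction hypothesis for the uniform semantics. Hence the only cases needing genuinely new argument are $\scase{S}{R_0}{i}{R_s}$ and $\ssup{i:\kappa}{R}$, which are precisely the two clauses in which \Cref{fig:const-trans} quantifies over $\N$ and in which the uniform semantics departs from the standard one.

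For $\scase{S}{R_0}{i}{R_s}$, fix a uniform valuation $\rho$ covering the free variables. By the induction hypothesis on $S$, the unique $r_s\in\Rextp$ with $\lb T^U(S)(r_s)\rb^U_\rho$ is $r_s=\lb S\rb^U_\rho$. The three disjuncts of $T^U(\scase{S}{R_0}{i}{R_s})$ then match the three clauses of the uniform semantics of $\scase{}{}{}{}$: if $\lb S\rb^U_\rho = 0$, the first disjunct applies and the induction hypothesis on $R_0$ forces $r=\lb R_0\rb^U_\rho$; if $0<\lb S\rb^U_\rho<1$, the second disjunct applies and $r=0$; and if $\lb S\rb^U_\rho = r_s\ge 1$ (including $r_s=\infty$), the third disjunct applies, $i$ is forced to $r_s-1\in\Rextp$, and the induction hypothesis on $R_s$ under $\rho\cup[i=r_s-1]$ gives $r=\lb R_s\rb^U_{\rho\cup[i=r_s-1]}$. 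Since these three conditions on $r_s$ partition $\Rextp$, the predicate has exactly one solution, and in each case that solution equals the corresponding clause of $\lb\scase{S}{R_0}{i}{R_s}\rb^U_\rho$.

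For $\ssup{i:\kappa}{R}$, the crucial observation is that $\Rextp$ is a complete lattice, so the set $A_\rho:=\{\lb R\rb^U_{\rho\cup[i=v]}\mid v\in\Rextp\}$ always has a supremum in $\Rextp$. By the induction hypothesis on $R$, for each $v$ the unique $r'$ with $\lb T^U(R)(r')\rb^U_{\rho\cup[i=v]}$ is $\lb R\rb^U_{\rho\cup[i=v]}$; hence the uniform reading of $\sbound(i:\kappa,R,r)$—whose $i$-quantifier now ranges over all of $\Rextp$ regardless of $\kappa$—holds exactly when $r$ is an upper bound of $A_\rho$. The conjunct of $T^U(\ssup{i:\kappa}{R})(r)$ that further demands minimality among such upper bounds then pins down $r$ uniquely as $\sup A_\rho$, which is by definition $\lb\ssup{i:\kappa}{R}\rb^U_\rho$.

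I expect the $\ssup$ case to be the main obstacle: it is the only place where one must reason about suprema rather than finitely many combined values, so one needs completeness of $\Rextp$ to know the least upper bound exists, and one must check that $\sbound^U$ genuinely characterises ``upper bound of $A_\rho$'' even when that supremum is $\infty$ or is not attained by any single valuation of $i$. Everything else is a routine case analysis tracking the one-to-one correspondence between the disjuncts of $T^U$ and the clauses of the uniform semantics.
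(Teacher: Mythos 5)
The paper gives no explicit proof of this lemma, saying only that it follows ``by reasoning analogous to \Cref{lemma:sens-translation-sound}''; your proof carries out exactly that induction, filling in the two cases ($\scase{\cdot}{\cdot}{\cdot}{\cdot}$ and $\ssup{\cdot}{\cdot}$) where the uniform translation and semantics genuinely differ from the standard one, and both are handled correctly. Your observations --- that the three disjuncts of $T^U(\scase{S}{R_0}{i}{R_s})$ partition $\Rextp$ in correspondence with the three clauses of $\lb\cdot\rb^U_\rho$, and that completeness of $\Rextp$ guarantees the existence of $\sup A_\rho$ so that the two-conjunct $\sbound$-plus-minimality predicate pins it down even when the supremum is $\infty$ or unattained --- are precisely the points that make the analogy go through, so this is the intended argument in fully spelled-out form.
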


By this lemma, we can give a sound, complete and decidable type-checking
algorithm for \UDFuzz.
\begin{theorem} \label{lem:dec-sound}
  Suppose we use our algorithmic system, with the constraints
  \[
    \tctxo; \cso \umodels R_1 \geq R_2
  \]
  handled by translation to the first order formula
  \[
    \forall \icontextone, \iconstraintone \Rightarrow \exists r : \Rextp,
    T^U(R_2)(r) \wedge R_1 \ge r,
  \]
  where all quantifiers are over $\Rextp$. Since the theory of $\Rextp$ is
  decidable, this gives an effective type-checking procedure for \UDFuzz.
\end{theorem}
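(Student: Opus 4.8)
The plan is to assemble the pieces already established into a single terminating procedure and then argue its correctness. Given the type-checking instance $\emptyctx;\emptyctx\mid\ctxo\vd\tmo:\tyo$, the procedure runs the syntax-directed inference algorithm of \Cref{fig:dfuzza-alg} on the skeleton $\sclean{\ctxo}$ and the term $\tmo$, reading every side condition as the uniform constraint $\umodels$ of \UDFuzz and collecting the constraint set that the algorithm emits; it then adds the two closing constraints $\ctxo\tless\ctxo'$ and $\tyo'\tless\tyo$ required by rules \rulesubl{} and \rulesubr, where $\ctxo',\tyo'$ is the algorithm's output. Termination is immediate: the algorithm is syntax-directed, every premise recurses on a proper subterm of $\tmo$, so it halts after finitely many steps with a finite constraint set.

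Next I would reduce each collected constraint to a sentence of real arithmetic. Because \UDFuzz restricts all term annotations to standard sensitivities, \Cref{lem:annot} applies: the inferred type $\tyo'$ contains no extended sensitivities, and every emitted constraint has the form $\tctxo;\cso\umodels R_1\geq R_2$ with $R_1$ standard. The environment- and type-level subtyping obligations $\ctxo\tless\ctxo'$ and $\tyo'\tless\tyo$ decompose, following the rules of \Cref{fig:dfuzz-sub}, into finitely many sensitivity inequalities of the same shape (the user-supplied $\ctxo$ and $\tyo$ are ordinary, and $\ctxo',\tyo'$ carry no extended operators, again by \Cref{lem:annot}). I then translate each inequality, using the predicates $T^U$ of \Cref{lem:relax-correct}, to
\[
  \forall\tctxo,\ \cso\Rightarrow\exists r:\Rextp,\ T^U(R_2)(r)\wedge R_1\geq r,
\]
which by \Cref{lem:relax-correct} — together with the observation that $R_1$, being standard, is literally a term of the first-order language — is valid under the uniform semantics exactly when $\tctxo;\cso\umodels R_1\geq R_2$ holds. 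Finally I eliminate $\Rextp$ in favour of $\R$ as sketched in \Cref{sec:compilation}: each $\forall i:\Rextp.\varphi$ becomes $(\forall i:\R.\ i\geq0\Rightarrow\varphi)\wedge\varphi[\infty/i]$, dually for $\exists$, after simplifying the finitely many arithmetic subterms that mention $\infty$. The result is a closed formula of the first-order theory of real closed fields, which is decidable by Tarski's theorem.

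Correctness follows in both directions. If the decision procedure reports that all translated formulas are valid, then by \Cref{lem:relax-correct} every $\umodels$-constraint holds, so the inference algorithm genuinely succeeds; Algorithmic Soundness then yields a \UDFuzz derivation of $\emptyctx;\emptyctx\mid\ctxo'\vd\tmo:\tyo'$, and combining it with the validated $\ctxo\tless\ctxo'$ and $\tyo'\tless\tyo$ through \rulesubl{} and \rulesubr{} gives $\emptyctx;\emptyctx\mid\ctxo\vd\tmo:\tyo$. Conversely, if that judgement is derivable in \UDFuzz, Algorithmic Completeness says the algorithm produces some $\ctxo',\tyo'$ with all emitted constraints valid and with $\ctxo\tless\ctxo'$ and $\tyo'\tless\tyo$ valid; hence every translated formula is valid and the procedure answers ``yes''. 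So the procedure decides \UDFuzz type-checking. Here I rely on the fact that Algorithmic Soundness and Completeness, originally proved for \EDFuzz against the $\models$-semantics, transfer verbatim to \UDFuzz: as noted when \EDFuzz was introduced, the metatheory relies only on sensitivity terms possessing a total interpretation, and the uniform interpretation into $\Rextp$ is total.

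I expect the main obstacle to be precisely that transfer argument, rather than the closing appeal to Tarski. Two points need care. First, soundness and completeness of the inference algorithm must still hold once every occurrence of $\models$ is replaced uniformly by $\umodels$ in both the type system and the algorithm; this is where the parametricity of the \DFuzz/\EDFuzz metatheory in the (total) interpretation of sensitivities does the work. Second, under the standard-annotation restriction no extended sensitivity must ever reach a position where the $T^U$ translation or the $\Rextp$-elimination would misbehave — in particular the construct $\scase{S}{R_0}{i}{R_s}$, whose scrutinee $S$ may now evaluate to a value in $(0,1)$, must be assigned the value $0$ there, matching exactly the modified clause of \Cref{lem:relax-correct}. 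Both points are dispatched by \Cref{lem:annot} together with the metatheoretic parametricity, but they are where the argument genuinely needs attention.
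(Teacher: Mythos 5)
Your proposal is correct and follows essentially the same route as the paper: both rest on \Cref{lem:annot} to keep the left-hand side of each constraint standard, on \Cref{lem:relax-correct} to show the $T^U$ translation exactly captures the uniform semantics, on the decidability of the first-order theory of reals, and on the algorithmic soundness/completeness theorems carried over to the $\umodels$-semantics. The paper's proof is terser — it states in one line that the translated formula is first-order over $\Rextp$, is logically equivalent to the uniform constraint by \Cref{lem:relax-correct}, implies the $\models$-constraint by \Cref{thm:uniform-sound}, and then simply asserts that ``the algorithmic system is sound and complete with respect to \UDFuzz'' without spelling out why the \EDFuzz metatheory transfers; your version makes that transfer, the termination argument, and the $\Rextp$-to-$\R$ elimination explicit, and correctly flags the transfer step as the place where care is genuinely required. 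That extra explicitness is a strength, not a deviation.
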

\begin{proof}
  Note that $R_1$ is a standard sensitivity term, so the translated formula is
  indeed a first order formula over the theory of $\Rextp$. By
  \Cref{lem:relax-correct}, the translated formula is logically equivalent to
  \(
    \lb \iconstraintone \rb_\rho^U \Rightarrow \lb R_1 \rb_\rho^U \geq \lb R_2
    \rb_\rho^U
  \)
  for all uniform valuations $\rho\in\mathsf{val}_U(\phi)$, which in turn implies
  $\tctxo; \cso \models R_1 \geq R_2$ by \Cref{thm:uniform-sound}. This shows
  that the algorithmic system is sound and complete with respect to \UDFuzz.
\end{proof}

\begin{remark}
  \UDFuzz is a strict subset of \EDFuzz; informally, it contains \EDFuzz programs
  with typing derivations that do not use facts true over $\N$ but not over
  $\R$. One key way that subtyping is used in \EDFuzz is for equational
  manipulations of the indices; for instance, subtyping may be needed to change
  the index expression $3 (i + 1)$ to $3 i + 3$. This reasoning is available in
  \UDFuzz as well; indeed, most of the example programs in \DFuzz are
  typeable under \UDFuzz as well. (The only exception is $k$-medians, which
  extends the index language with a division function that we have not
  investigated.)

  However, there are many programs that lie in \EDFuzz but not in
  \UDFuzz---constraints as simple as $\forall i.\; i^2 \geq i$ are true when
  quantifing over the naturals but not when quantifying over the reals. Valid
  \EDFuzz programs that use these facts in their typing derivation will not lie
  in \UDFuzz.
\end{remark}

\subsection{Constraint Simplification} \label{sec:simplification}%
The second approach is to simplify the constraints generated by the translation
of \Cref{sec:compilation}, so that they can be better handled by solvers. Since
alternating quantifiers are a source of complexity in formulas, we devised a
rewriting procedure for producing constraints with no alternating
quantifiers. Here, we continue to require that all source annotations must be
standard sensitivity terms.

To begin, we generalize our three extended constructs with a new {\em
  constrained least upper bound} ($\bclub$) operation, with form
$\club{ \csens{\phi_1}{\Phi_1}{R_1}, \dots,
  \csens{\phi_n}{\Phi_n}{R_n}}$. Here, $\phi$ is a size and
sensitivity variable environment, $\Phi$ is a constraint environment, and $R$
is a sensitivity term, extended or standard. The judgment for a
well-formed $\bclub$ is
\[
  \phi \vdash \club{ \csens{\phi_1}{\Phi_1}{R_1}, \dots,
    \csens{\phi_n}{\Phi_n}{R_n}},
\]
where each $R_j$ has kind $\realone$ under $\phi, \phi_j; \Phi_j$, and $\phi,
\{\phi_j\}_j$ have disjoint domain. Intuitively, $\bclub$ is a maximum over a
set of sensitivities, restricting to sensitivities where the associated
constraint is satisfied. Sensitivities where the constraints are not satisfied
are ignored.  Formally, let $\phi$ contain the free variables of $\bclub$, and
let $\rho\in\mathsf{val}(\phi)$ be any standard valuation. We can give the
following interpretation of $\bclub$:
\begin{align*}
  &\llb \club{ \csens{\phi_1}{\Phi_1}{R_1}, \dots,
    \csens{\phi_n}{\Phi_n}{R_n}} \rrb_\rho := \\
  &\max_{j \in [n]} \max \{ \lb R_j
  \rb_{\rho \cup \rho_j} \mid \rho_j\in\mathsf{val}(\phi_j) \text{ and }
  \lb\Phi_j\rb_{\rho\cup\rho_j} \}.
\end{align*}
We define the  maximum over an empty set to be $0$.

Now, we can encode the extended sensitivity terms using only $\bclub$, through
the following translation function:
\begin{align*}
  C(\smax{\esens_1}{\esens_2})
  &:= \club{\csens{\emptyset}{\emptyset}{C(\esens_1)},
    \csens{\emptyset}{\emptyset}{C(\esens_2)}}
  \\
  C(\ssup{i}{\esens})
  &:= \club{\csens{i}{\emptyset}{C(\esens)}}
  \\
  C(\scase{S}{i}{\esens_0}{\esens_s})
  &:= \bclub\{\csens{\emptyset}{S = 0}{C(\esens_0)}, \\
  &\csens{i}{S = i + 1}{C(\esens_s)}\}
  \\
  C(\esens_1 + \esens_2) &:= C(\esens_1) + C(\esens_2)
  \\
  C(\esens_1 \cdot \esens_2) &:= C(\esens_1) \cdot C(\esens_2)
  \\
  C(R) &:= R \qquad \text{ otherwise.}
\end{align*}
While we may now have nested $\bclub$, we extend the interpretation in the
natural way.  We can show that the translation faithfully preserves the
semantics of the extended terms, with the following lemma.
\begin{lemma} \label{lem:club-correct}%
  Suppose $\phi \vdash R$ and $\rho\in\mathsf{val}(\phi)$ is a standard
  valuation.  Then, $\llb C(R) \rrb_\rho = \lb R \rb_\rho$.
\end{lemma}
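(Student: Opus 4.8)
The plan is to prove the identity by structural induction on $R$, following the recursive structure of the translation $C$. Throughout, I will rely on the fact that the $\bclub$-interpretation $\llb\cdot\rrb$ agrees with the extended sensitivity interpretation $\lb\cdot\rb$ on the fragment common to both --- index variables, numeric constants, size expressions, and the arithmetic operators $+$ and $\cdot$ --- which is immediate from the definitions since neither $\bclub$ nor the other extended constructs occur there. This disposes of the base cases (the ``otherwise'' clause, where $C(R) = R$) and, combined with the induction hypothesis, of the two arithmetic cases $C(\esens_1 + \esens_2) = C(\esens_1) + C(\esens_2)$ and $C(\esens_1 \cdot \esens_2) = C(\esens_1) \cdot C(\esens_2)$, using that the $\bclub$-interpretation is extended homomorphically over $+$ and $\cdot$.

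For the three genuinely interesting cases I would unfold the interpretation of $\bclub$ given just above the lemma. For $R = \smax{\esens_1}{\esens_2}$, the clause environments are empty, with a single (empty) valuation and trivially satisfied constraints, so $\llb C(\smax{\esens_1}{\esens_2})\rrb_\rho$ collapses to $\max(\llb C(\esens_1)\rrb_\rho, \llb C(\esens_2)\rrb_\rho)$, which by the induction hypothesis equals $\max(\lb\esens_1\rb_\rho, \lb\esens_2\rb_\rho) = \lb\smax{\esens_1}{\esens_2}\rb_\rho$. For $R = \ssup{i}{\esens}$, the single clause $\csens{i}{\emptyset}{C(\esens)}$ has constraint $\top$, so the $\bclub$-interpretation is the supremum of $\llb C(\esens)\rrb_{\rho\cup\rho_i}$ over all valuations $\rho_i$ of $i$; applying the induction hypothesis pointwise turns this into $\sup_{r\in\lb\kappa\rb}\lb\esens\rb_{\rho\cup[i=r]}$, which is exactly $\lb\ssup{i:\kappa}{\esens}\rb_\rho$. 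For $R = \scase{S}{\esens_0}{i}{\esens_s}$, I would split on the value $\lb S\rb_\rho$, which is a natural number because $\rho$ is a standard valuation: if $\lb S\rb_\rho = 0$, only the first clause (constraint $S = 0$) has a satisfiable constraint; if $\lb S\rb_\rho = n \ge 1$, only the second clause (constraint $S = i + 1$) contributes, and then exactly at $\rho_i = [i = n - 1]$. In either case the non-contributing clause evaluates to $0$ via the ``maximum over the empty set is $0$'' convention, and since sensitivities lie in $\Rextp$ this $0$ is absorbed by the outer maximum, leaving $\llb C(\esens_0)\rrb_\rho$ or $\llb C(\esens_s)\rrb_{\rho\cup[i=n-1]}$ respectively --- which by the induction hypothesis coincide with the two branches of $\lb\scase{S}{\esens_0}{i}{\esens_s}\rb_\rho$. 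A minor side obligation, carried along the induction, is that $C$ preserves well-formedness of $\bclub$ terms (in particular the disjoint-domain condition), so that $\llb C(R)\rrb_\rho$ is actually defined.

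The one point that needs care --- and the main obstacle --- is reconciling the ``$\max$'' written in the interpretation of $\bclub$ with the genuine ``$\sup$'' in the semantics of $\ssup$: over an infinite index set the maximum need not be attained, so the definition of $\bclub$ must be read with the maximum over a (possibly infinite) set understood as the supremum in the complete lattice $\Rextp$; I would make this reading explicit before the proof so that the $\ssup$ case goes through verbatim. Apart from that, the argument is a routine induction. The nested-$\bclub$ phenomenon remarked on before the lemma causes no difficulty, because the induction is on the source term $R$ (which contains no $\bclub$ at all), nesting only appears in the \emph{output} of $C$, and the interpretation was extended homomorphically precisely to handle it.
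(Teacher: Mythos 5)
Your proof is correct and takes the same approach as the paper, whose proof of this lemma is literally the one sentence ``By induction on $R$.'' Your detailed working-out of the cases, including the observation that the $\max$ in the $\bclub$ interpretation must be read as a supremum in $\Rextp$ for the $\ssup$ case to go through, is a faithful and careful expansion of what the paper leaves implicit.
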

\begin{proof}
  By induction on $R$.
\end{proof}
Now, we can simplify the compiled constraints. First, we can push all standard
sensitivity terms to the leaves of the expression. More formally, we have the
following lemma.
\begin{lemma} \label{lem:club-push}%
  Suppose $\phi \vdash R \cdot \club{ \csens{\phi_i}{\Phi_i}{C_i} }_i + R'$,
  where $R, R'$ are standard sensitivity terms, and $C_i$ is an arbitrary
  sensitivity term possibly involving $\bclub$. Then, for any standard valuation
  $\rho\in\mathsf{val}(\phi)$,
  \[
    \llb R \cdot \club{ \csens{\phi_i}{\Phi_i}{C_i} }_i + R' \rrb_\rho
    = \llb \club{ \csens{\phi_i}{\Phi_i}{R \cdot C_i + R'} }_i \rrb_\rho.
  \]
\end{lemma}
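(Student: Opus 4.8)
The plan is to evaluate both sides directly from the interpretation of $\bclub$ given just before \Cref{lem:club-correct}, and to reduce the claimed identity to the elementary fact that the affine map $g_\rho : x \mapsto \lb R \rb_\rho \cdot x + \lb R' \rb_\rho$ on $\Rextp$ is monotone non-decreasing, and hence commutes with the suprema used in the semantics of $\bclub$.

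First I would record two preliminary observations. Since $\phi \vdash R \cdot \club{\csens{\phi_i}{\Phi_i}{C_i}}_i + R'$ is well formed, $R$ and $R'$ are typed under $\phi$, and $\phi$ has domain disjoint from every $\phi_i$; hence the free variables of $R$ and $R'$ lie in $\dom(\phi)$, so $\lb R \rb_{\rho\cup\rho_i} = \lb R \rb_\rho$ and $\lb R' \rb_{\rho\cup\rho_i} = \lb R' \rb_\rho$ for every $i$ and every $\rho_i \in \mathsf{val}(\phi_i)$. Consequently $\lb R \cdot C_i + R' \rb_{\rho\cup\rho_i} = g_\rho(\lb C_i \rb_{\rho\cup\rho_i})$, where $\lb C_i \rb$ is computed using the natural extension of the interpretation to nested $\bclub$. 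Observe also that the constraint parts $\phi_i,\Phi_i$ appear unchanged on both sides, so the set of admissible valuations $\{\rho_i \in \mathsf{val}(\phi_i) : \lb\Phi_i\rb_{\rho\cup\rho_i}\}$ is exactly the same before and after the rewriting.

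Next I would unfold $\bclub$ on both sides. Writing $M_i := \{ \lb C_i \rb_{\rho\cup\rho_i} \mid \rho_i\in\mathsf{val}(\phi_i),\ \lb \Phi_i \rb_{\rho\cup\rho_i} \}$, the left-hand side is $g_\rho\!\left(\max_{i\in[n]}\max M_i\right)$ and the right-hand side is $\max_{i\in[n]}\max g_\rho(M_i)$. So it suffices to prove $g_\rho\!\left(\max_{i\in[n]}\max M_i\right) = \max_{i\in[n]}\max g_\rho(M_i)$, i.e.\ that $g_\rho$ commutes with suprema of subsets of $\Rextp$. Since $\lb R \rb_\rho, \lb R' \rb_\rho \in \Rextp$ are non-negative, $g_\rho$ is monotone non-decreasing, and multiplication by a fixed element of $\Rextp$ (with the convention $0 \cdot \infty = 0$) followed by translation by a fixed element of $\Rextp$ preserves suprema of non-empty subsets of $\Rextp$; this is the only case that occurs here, so the identity follows.

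The one place that requires care — and the main obstacle — is the degenerate case where some $M_i$, or the whole family $\{M_i\}_{i\in[n]}$, is empty: by the convention $\max\emptyset = 0$, while $g_\rho(0) = \lb R' \rb_\rho$ need not be $0$ (for instance, if every admissibility set is empty the left side is $\lb R' \rb_\rho$ and the right side is $0$). This never happens for the $\bclub$ terms actually produced by the compilation of \Cref{sec:compilation}: under any standard valuation, each of the three constructs $\smax{R_1}{R_2}$, $\ssup{i}{R}$, and $\scase{S}{R_0}{i}{R_s}$ translates via $C(\cdot)$ to a $\bclub$ with at least one admissible branch — for $\scase{S}{R_0}{i}{R_s}$ because $\lb S \rb_\rho$ is either $0$ or a successor — and this non-emptiness is preserved by the inductive translation and by the environment operations. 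I would therefore carry this non-emptiness along as an invariant (established by a short induction, alongside \Cref{lem:club-correct}), and under it the monotonicity argument above closes the proof.
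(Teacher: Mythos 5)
Your argument follows the paper's intended route (the paper's own proof is a single line invoking the semantics together with the arithmetic identity $a \cdot \max_i\{b_i\} + c = \max_i\{a\cdot b_i + c\}$ for $a,b,c \ge 0$), and your unfolding of the $\bclub$ semantics and the appeal to monotonicity of $g_\rho$ is correct. But you have also put your finger on a genuine gap: the arithmetic identity the paper cites is only valid when the max ranges over a \emph{non-empty} index set, and with the convention $\max\emptyset = 0$ the lemma as stated actually fails in the degenerate case — $g_\rho(\max\emptyset) = \lb R' \rb_\rho$ on the left versus $\max\emptyset = 0$ on the right (and with the paper's convention $\infty\cdot 0 = \infty$ the discrepancy can even be infinite). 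The paper's proof does not address this.

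One small sharpening of your wording: having \emph{some} $M_i = \emptyset$ is not yet a problem, because the empty branch contributes $0$ to both the outer max before applying $g_\rho$ and to the max of $g_\rho$-values afterward, and as long as at least one branch is non-empty that branch dominates on both sides (all quantities are non-negative). The identity only breaks when \emph{every} $M_i$ is empty, or $n = 0$. So the invariant you want is precisely ``every $\bclub$ occurring has, under every standard valuation of its ambient $\phi$, at least one admissible branch,'' not that every branch is admissible. Your example, and the proposed fix of carrying this as an invariant established by induction alongside \Cref{lem:club-correct} — observing that $C(\cdot)$ produces $\bclub$-terms with an always-admissible branch for $\smax$ and $\ssup$, and with exactly one admissible branch for $\scase$ depending on whether $\lb S\rb_\rho$ is zero or a successor — is the right way to repair the statement. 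It would also be worth checking that the rewrite rules $\text{Flat}$, $\text{CPlus}$, $\text{CMult}$ preserve this invariant, since \Cref{lem:simpl-correct} implicitly relies on \Cref{lem:club-push} holding throughout the normalization, but that is a routine induction. In short: your proof is correct, and it is more careful than the paper's; either the lemma should carry the non-emptiness hypothesis explicitly, or the proof should be accompanied by the invariant you describe.
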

\begin{proof}
  By the definition of the interpretations, and the mathematical fact
  \(
    a \cdot \max_i \{ b_i \} + c = \max_i \{ a \cdot b_i + c \}
  \)
  for $a, b, c \geq 0$.
\end{proof}
Thus, without loss of generality we may reduce the compiled sensitivity
constraint to an expression of the form $Q$, with grammar
\[
  Q ::=
  \emptyset \st
  Q_1 + Q_2 \st
  Q_1 \cdot Q_2 \st
  \club{ \csens{\phi_i}{\Phi_i}{Q_i} } \st
  \club{ \csens{\phi_i}{\Phi_i}{R_i} },
\]
where $R_i$ are standard sensitivity terms. We will use the metavariable $V$ to
denote an arbitrary (possibly empty) collection of triples $(\phi_i; \Phi_i;
R_i)_i$, and the metavariable $W$ to denote an arbitrary (possibly empty)
collection of triples $(\phi_i; \Phi_i; Q_i)_i$.  Throughout, we will implicitly
work up to permutation of the arguments to $\bclub$: for instance, $\club{(X),
  (Y)}$ will be considered the same as $\club{(Y), (X)}$. We will also work up
to commutativity of addition and multiplication: $Q_1 + Q_2$ will be considered
the same as  $Q_2 + Q_1$, and likewise with multiplication.  We present the
constraint simplification rules as a rewrite relation $\cred$. As typical, we
will write $\cred^*$ for the reflexive, transitive closure of $\cred$. The full
rules are in \Cref{fig:constr-red}.

\begin{figure*}
  \centering
  \begin{mathpar}
  \inferrule
   { }
   { \club{ \csens{\phi}{\Phi}{\club{ \csens{\phi_i}{\Phi_i}{R_i}}_i}, V }
     \cred
     \club{ \csens{\phi \cup \phi_i}{\Phi \land
         \Phi_i}{R_i}, V} }_i                    \rname{\text{Flat}} \and
  \inferrule
   { }
   { \club{ \csens{\phi_i}{\Phi_i}{R_i} }_i + \club{ \csens{\phi_j'}{\Phi_j'}{R_j'} }_j
     \cred
     \club{ \csens{\phi_i \cup \phi_j'}{\Phi_i \land \Phi_j'}{R_i + R_j'} }_{ij} } \rname{\text{CPlus}} \and
  \inferrule
   { }
   { \club{ \csens{\phi_i}{\Phi_i}{R_i} }_i \cdot \club{ \csens{\phi_j'}{\Phi_j'}{R_j'} }_j
     \cred
     \club{ \csens{\phi_i \cup \phi_j'}{\Phi_i \land \Phi_j'}{R_i \cdot R_j'}
     }_{ij} }\rname{\text{CMult}} \\
  \inferrule
   { Q_1 \cred Q_1' }
   { Q_1 + Q_2 \cred Q_1' + Q_2' } \rname{\text{Plus}} \and
  \inferrule
   { Q_1 \cred Q_1' }
   { Q_1 \cdot Q_2 \cred Q_1' \cdot Q_2 } \rname{\text{Mult}} \and
  \inferrule
   { Q \cred Q' }
   { \club{ \csens{\phi}{\Phi}{Q}, W }
     \cred
     \club{ \csens{\phi}{\Phi}{Q'}, W } } \rname{\text{Red}} \and
  \end{mathpar}
  \caption{$\bclub$ Reduction}
  \label{fig:constr-red}
\end{figure*}
We can prove correctness of our constraint simplification with the following
lemma.
\begin{lemma} \label{lem:simpl-correct}%
  Suppose $Q \cred Q'$, and suppose $\phi \vdash Q$ and $\phi \vdash Q'$.  Then,
  for any standard valuation $\rho\in\mathsf{val}(\phi)$, we have $\llb Q
  \rrb_\rho = \llb Q' \rrb_\rho$.
\end{lemma}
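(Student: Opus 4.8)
The plan is to argue by induction on the derivation of $Q \cred Q'$, i.e.\ by a case analysis on the last rule of \Cref{fig:constr-red} applied. For the three congruence rules (\textsc{Plus}, \textsc{Mult}, \textsc{Red}) the argument is uniform and uses only compositionality of the club semantics: $\llb Q_1 + Q_2 \rrb_\rho = \llb Q_1 \rrb_\rho + \llb Q_2 \rrb_\rho$, $\llb Q_1 \cdot Q_2 \rrb_\rho = \llb Q_1 \rrb_\rho \cdot \llb Q_2 \rrb_\rho$, and the value $\llb \club{\csens{\phi}{\Phi}{Q},W} \rrb_\rho$ depends on the first component only through the numbers $\llb Q \rrb_{\rho\cup\rho'}$ as $\rho'$ ranges over valuations of the clause-local environment satisfying $\Phi$. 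In each case, inverting the well-formedness hypotheses $\phi \vdash Q$ and $\phi \vdash Q'$ gives well-formedness of the redex and reduct subterms under the appropriate (possibly enlarged) context, so the induction hypothesis yields $\llb Q_1 \rrb_{\rho''} = \llb Q_1'\rrb_{\rho''}$ for every valuation $\rho''$ of that context, in particular for the $\rho'' = \rho\cup\rho'$ that actually occur, and the desired equality lifts through the surrounding $+$, $\cdot$, or $\bclub$.

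For the three axiom rules (\textsc{Flat}, \textsc{CPlus}, \textsc{CMult}) the task reduces, after unfolding the club semantics on both sides, to elementary identities in $\Rextp$: this set is totally ordered and complete, $+$, $\cdot$ and (finite or infinite) $\max$ are monotone on it, and $\max$ distributes over $+$ and over $\cdot$ on nonnegative arguments — the same kind of fact already used in \Cref{lem:club-push}. For \textsc{Flat}, one notes that, by the disjointness of domains built into the well-formedness judgment for $\bclub$, a pair consisting of a valuation of the outer clause-local environment satisfying $\Phi$ together with a valuation of $\phi_i$ satisfying $\Phi_i$ is exactly a valuation of $\phi \cup \phi_i$ satisfying $\Phi \wedge \Phi_i$ (each constraint mentions only its own variables); hence the left-hand ``maximum of maxima'' equals the right-hand ``maximum over the merged family''. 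For \textsc{CPlus} and \textsc{CMult} the same gluing of valuation families and splitting of conjoined constraints reduces matters to $(\max_x a_x) \star (\max_y b_y) = \max_{x,y}(a_x \star b_y)$ for $\star \in \{+,\cdot\}$ and nonnegative $a_x, b_y$.

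I expect the real obstacle to be not the arithmetic but the one degenerate situation it does not cover: a $\bclub$ none of whose clauses has a satisfiable constraint evaluates, by the empty-maximum convention, to $0$, and then the identity needed for \textsc{CPlus} fails, e.g.\ $(\max_x a_x) + 0 \neq 0$ when the $x$-family is nonempty. The way I would close this gap is to observe that the $\bclub$ expressions actually handled are always \emph{exhaustive}: for every $\bclub$ subterm, the disjunction of its clause constraints (with clause-local variables existentially quantified) is valid, so under every valuation at least one clause is active. This holds for the output of the translation $C$ — the clauses of $C(\smax{\cdot}{\cdot})$, $C(\ssup{i}{\cdot})$ and $C(\scase{S}{\cdot}{i}{\cdot})$ have constraints whose disjunctions are, respectively, $\top$, $\top$, and ``$S = 0 \vee \exists i.\; S = i+1$'' — and it is preserved by $\cred$, since the constraint-disjunction of a merged family is the conjunction of the constraint-disjunctions of the two merged families (by distributivity of $\wedge$ over $\vee$), and \textsc{Flat} leaves it unchanged. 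Under this invariant every index set appearing in the arithmetic identities above is nonempty, so they hold unconditionally and the induction goes through. (Alternatively, one normalizes every $\bclub$ by adjoining a trivial clause $\csens{\emptyset}{\emptyset}{0}$, which is semantically inert since $0$ is the unit of $\max$ on $\Rextp$.)
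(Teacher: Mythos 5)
Your proof follows the same overall structure as the paper's: induction on the derivation of $Q \cred Q'$, where $\text{Plus}$, $\text{Mult}$, $\text{Red}$ go through by compositionality, $\text{Flat}$ by gluing valuations of disjoint environments, and $\text{CPlus}$, $\text{CMult}$ by the identity $\max_x a_x \star \max_y b_y = \max_{x,y}(a_x\star b_y)$ for $\star\in\{+,\cdot\}$ over $\Rextp$. The genuinely valuable part of your proposal is that you notice this identity fails when one of the index families contributes an empty set of satisfying valuations, and that the paper's proof quietly assumes otherwise: it writes ``the first maximum is achieved at some $i^*$, and the second at $j^*$'' and bounds the sum of the two inner maxima by the merged maximum, but if one $\bclub$ argument has no satisfiable clause under the given $\rho$ it evaluates to $0$ while the other may be positive, yet every merged constraint $\Phi_i\wedge\Phi_j'$ is then unsatisfiable and the reduct evaluates to $0$. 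As stated — with only $\phi\vdash Q$ and $\phi\vdash Q'$ as hypotheses — the lemma is in fact false: $Q=\club{\csens{\emptyset}{\top}{5}}+\club{\csens{i}{i=i+1}{3}}$ is well-formed, evaluates to $5$, and $\text{CPlus}$-reduces to $\club{\csens{i}{i=i+1}{8}}$, which evaluates to $0$.

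Your repair is correct and is exactly the missing hypothesis. Exhaustiveness does hold for the image of $C$ (the clause disjunctions you list are indeed tautologies), and your preservation argument is sound: for $\text{CPlus}$ and $\text{CMult}$, $\bigvee_{ij}\exists\phi_i\phi_j'.\,(\Phi_i\wedge\Phi_j')$ is equivalent to $(\bigvee_i\exists\phi_i.\,\Phi_i)\wedge(\bigvee_j\exists\phi_j'.\,\Phi_j')$ using distributivity together with the disjointness of local environments built into $\bclub$ well-formedness, and for $\text{Flat}$, $\bigvee_i(\Phi\wedge\Phi_i)=\Phi\wedge\bigvee_i\Phi_i=\Phi$ once the inner club is itself exhaustive, so the outer disjunction is literally unchanged. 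Your alternative — normalizing by adjoining the inert clause $\csens{\emptyset}{\emptyset}{0}$ — is even cleaner, requiring no side invariant beyond checking that the trivial clause survives each rule of \Cref{fig:constr-red} (it does). Either fix should be recorded as an explicit hypothesis on $Q$ wherever the lemma is invoked; the current phrasing leaves a real gap, and your proof correctly closes it.
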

\begin{proof}
  By induction on the derivation of $Q \cred Q'$. The cases $\text{Plus}$,
  $\text{Mult}$ and $\text{Red}$ are immediate by induction. The other cases all
  follow by the semantics of $\bclub$; details are in the Appendix.
\end{proof}
The simplification relation terminates in the following particular simple form.
\begin{lemma} \label{lem:simpl-normal}
  Let $Q$ be a sensitivity term involving $\bclub$. Along any reduction path,
  $Q$ reduces in finitely many steps to a term of the form
  \(
    \club{ V } = \club{ \csens{\phi_1}{\Phi_1}{R_1}, \dots,
      \csens{\phi_n}{\Phi_n}{R_n} } .
  \)
\end{lemma}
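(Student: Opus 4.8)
The plan is to split the statement into two parts: (a) $\cred$ is strongly normalizing on the grammar of $Q$, so every reduction path from $Q$ is finite; and (b) every $\cred$-normal $Q$ has the shape $\club{V}$ with each component a standard sensitivity term (allowing $V$ to be empty, so that $Q = \emptyset = \club{}$ is the trivial case). Granting (a) and (b), any maximal reduction sequence from $Q$ is finite and ends at some $\club{V}$ of the claimed form; by \Cref{lem:simpl-correct} this $\club{V}$ moreover has the same interpretation as $Q$ at every standard valuation.

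For (a) I would exhibit an $\N$-valued measure $\mu$, bounded below by $1$, that strictly decreases along every $\cred$-step. Concretely, set $\mu(R) = 1$ for a standard sensitivity term $R$ (these carry no $\bclub$, hence are $\cred$-irreducible), $\mu(\emptyset) = 1$, $\mu(\club{\csens{\phi_1}{\Phi_1}{X_1}, \dots, \csens{\phi_n}{\Phi_n}{X_n}}) = 1 + \sum_{k=1}^{n} \mu(X_k)$, and $\mu(Q_1 + Q_2) = \mu(Q_1 \cdot Q_2) = \mu(Q_1)\,\mu(Q_2) + 1$. The claim is that $\mu(Q) > \mu(Q')$ whenever $Q \cred Q'$, proved by induction on the derivation of the step. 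The congruence rules $\mathrm{Plus}$, $\mathrm{Mult}$, $\mathrm{Red}$ follow immediately from the induction hypothesis together with $\mu \geq 1$; $\mathrm{Flat}$ deletes exactly one $\bclub$ wrapper and so lowers $\mu$ by $1$ (merging the environments $\phi,\Phi$ into the inner components does not change $\mu$).

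The interesting cases are $\mathrm{CPlus}$ and $\mathrm{CMult}$, which replace a $\bclub$ with $n$ arguments combined via $+$ (resp.\ $\cdot$) with a $\bclub$ with $m$ arguments by a single $\bclub$ with $nm$ arguments. I expect this to be the crux of the proof: the rule multiplies the number of $\bclub$ arguments, so no measure based purely on term size, on the number of $\bclub$ arguments, or on $\bclub$-nesting depth can work. What saves the argument is that these two rules fire only when the arguments being combined are already plain standard sensitivity terms, so each resulting argument $R_i + R_j'$ (resp.\ $R_i \cdot R_j'$) is again standard and has measure $1$; thus the left-hand side has measure $(1+n)(1+m)+1$ while the right-hand side has measure $1 + nm$, and $(1+n)(1+m)+1 - (1+nm) = 1 + n + m > 0$. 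The blow-up in the number of arguments is therefore paid for exactly by those arguments collapsing to measure $1$, and the deleted $+$/$\cdot$ node supplies the constant slack that covers the small degenerate cases. Hence $\mu$ witnesses strong normalization.

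For (b) I would use structural induction on a $\cred$-normal $Q$. If $Q = Q_1 + Q_2$ (or $Q_1 \cdot Q_2$), then $Q_1$ and $Q_2$ are themselves normal and, by the induction hypothesis, are $\bclub$'s, so $\mathrm{CPlus}$ (resp.\ $\mathrm{CMult}$) would apply, contradicting normality; thus a normal $Q$ is not a sum or product. If $Q = \club{\csens{\phi_1}{\Phi_1}{X_1}, \dots, \csens{\phi_n}{\Phi_n}{X_n}}$, then each $X_k$ is normal (otherwise $\mathrm{Red}$ applies) and no $X_k$ is itself a $\bclub$ (otherwise $\mathrm{Flat}$ applies); since by the induction hypothesis a normal $Q$ must be a $\bclub$, each $X_k$ is instead a standard sensitivity term, which is exactly the required shape. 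Combining (a) and (b) yields the lemma.
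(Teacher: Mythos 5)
Your proof is correct, but your termination argument is substantially more elaborate than necessary, and the reason is that you overlooked the simplest possible measure. You correctly rule out term size, number of $\bclub$ arguments, and $\bclub$-nesting depth, but you omit from this list the plain count of $\bclub$ \emph{occurrences} (i.e., the number of $\bclub$ subterms in $Q$). That count strictly decreases by exactly one on every step: Flat collapses an outer and inner $\bclub$ into a single one (with all entries remaining standard $R$'s, which carry no $\bclub$), CPlus and CMult each replace two $\bclub$'s --- applicable only when both contain purely standard arguments, as you note --- with a single $\bclub$ whose entries $R_i + R_j'$ or $R_i \cdot R_j'$ are again standard, and the three congruence rules preserve the count outside the redex. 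No rule ever introduces a fresh $\bclub$. So the blow-up from $n+m$ to $nm$ arguments, which motivated your multiplicative measure $\mu$, is harmless: arguments multiply, but $\bclub$ constructors disappear one at a time. This is exactly the paper's argument, which dispenses with the inequality arithmetic you carry out for CPlus and CMult. Your measure $\mu$ does work (I verified the $(1+n)(1+m)+1$ vs.\ $1+nm$ calculation), so there is no error; it is just a heavier hammer than needed. Your part (b), the structural-induction characterization of normal forms, matches the paper's almost verbatim and is fine.
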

\begin{proof}
  First, note that any reduction of $Q$ must terminate in finitely many steps:
  by induction on the derivation of the reduction, it's clear that each
  reduction removes one $\bclub$ subterm, and no reductions introduce $\bclub$
  subterms. So, suppose that $Q$ is a term with no possible reductions.

  By induction on the structure of $Q$, we claim that $Q$ is of the desired
  form. Say if $Q = Q_1 + Q_2$, if either $Q_1, Q_2$ can reduce, then
  $\text{Plus}$ applies. If not, then by induction, $\text{CPlus}$ applies. The
  same reasoning follows for $Q = Q_1 \cdot Q_2$: either $\text{Mult}$ applies,
  or $\text{CMult}$ does. Finally, if $Q$ is a single $\bclub$ term, if
  $\text{Red}$ and $\text{Flat}$ both don't apply, then $Q$ is of the desired
  form.
\end{proof}
Finally, checking a constraint $\forall \phi.\, \Phi \Rightarrow R \geq
\club{V}$ is simple.
\begin{lemma} \label{lem:simpl-comp}
  Let $R$ be a standard sensitivity term, and let $V$ be
  \[
    V = \csens{\phi_1}{\Phi_1}{R_1}, \dots, \csens{\phi_n}{\Phi_n}{R_n}
  \]
  where each $R_j$ is a standard sensitivity term without $\bclub$. Then,
  $\tctxo; \cso \models R \geq \club{V}$ is logically equivalent to
  \[
    \forall \phi.\;
    \bigwedge_{j\in [n]}\forall \phi_j.\;\Phi \wedge \Phi_j \Rightarrow R \geq R_k.
  \]
\end{lemma}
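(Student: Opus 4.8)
The plan is to establish the equivalence by unfolding both sides to their underlying semantic statements over valuations and then performing purely logical rearrangement; the only mathematical ingredient is the characterization of $\max$ (i.e.\ least upper bound) in the complete lattice $\Rextp$. (We read the $R_k$ in the conclusion as $R_j$.) First I would unfold the left-hand side using the definition of the constraint judgment from \Cref{sec:dfuzz}: $\tctxo; \cso \models R \ge \club{V}$ holds iff for every $\rho \in \mathsf{val}(\tctxo)$ with $\lb \cso \rb_\rho$ we have $\lb R \rb_\rho \ge \llb \club{V} \rrb_\rho$. Then I would substitute the interpretation of $\bclub$ from \Cref{sec:simplification},
\[
  \llb \club{V} \rrb_\rho = \max_{j \in [n]} \max \{\, \lb R_j \rb_{\rho \cup \rho_j} \mid \rho_j \in \mathsf{val}(\phi_j),\ \lb \Phi_j \rb_{\rho \cup \rho_j} \,\},
\]
noting that since each $R_j$ is free of $\bclub$, the extended interpretation $\llb\cdot\rrb$ agrees with $\lb\cdot\rb$ on it, so no appeal to \Cref{lem:club-correct} is needed at the leaves.

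The key step is the elementary observation that for any $a \in \Rextp$ and any $S \subseteq \Rextp$, we have $a \ge \max S$ iff $a \ge s$ for all $s \in S$; this uses that $\Rextp$ is a complete lattice and, crucially, that $\max \emptyset = 0$ while $a = \lb R \rb_\rho$ is always non-negative, so an empty family of sensitivities (arising from $n = 0$, or from an index $j$ whose $\Phi_j$ is unsatisfiable under $\rho$) imposes no constraint. Applying this twice, to the outer $\max$ over $j \in [n]$ and to the inner $\max$ over valid $\rho_j$, the condition $\lb R \rb_\rho \ge \llb \club{V}\rrb_\rho$ becomes $\bigwedge_{j \in [n]} \forall \rho_j \in \mathsf{val}(\phi_j).\ \lb \Phi_j\rb_{\rho \cup \rho_j} \Rightarrow \lb R \rb_\rho \ge \lb R_j \rb_{\rho \cup \rho_j}$. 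Finally I would reinstate the outer quantifier $\forall \rho$ under the hypothesis $\lb \cso \rb_\rho$, use the disjointness of $\dom(\phi)$ and $\dom(\phi_j)$ (from the well-formedness of $\bclub$) to identify $\mathsf{val}(\phi \cup \phi_j)$ with $\mathsf{val}(\phi) \times \mathsf{val}(\phi_j)$, and use that $\lb R \rb$ and $\lb \cso \rb$ do not mention the variables of $\phi_j$, so $\lb R\rb_{\rho\cup\rho_j} = \lb R\rb_\rho$ and the hypothesis $\lb\cso\rb_\rho$ can be pushed past $\forall\rho_j$. Commuting the nested universal quantifiers and merging the implications then yields exactly $\forall \phi.\ \bigwedge_{j \in [n]} \forall \phi_j.\ \Phi \wedge \Phi_j \Rightarrow R \ge R_j$, which is the claimed formula.

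I expect the main — indeed essentially the only — obstacle to be the careful handling of the degenerate cases and free-variable bookkeeping: making explicit that the empty $\bclub$ and the unsatisfiable branches genuinely contribute nothing because sensitivities are non-negative, and that $\rho$ and each $\rho_j$ range independently so that the quantifier shuffling is sound. None of this is deep, but it is where a careless argument could slip; everything else is routine manipulation of the definitions.
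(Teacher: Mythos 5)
Your proof is correct and takes the same route as the paper, which dispatches this lemma simply as ``Immediate by the semantics of $\club{V}$.'' You fill in exactly the right details — the lattice fact that $a \ge \max S$ iff $a \ge s$ for all $s \in S$, non-negativity of sensitivities handling the empty/unsatisfiable cases via $\max \emptyset = 0$, disjointness of $\phi$ and the $\phi_j$ licensing the quantifier split, and the observation that $R$ and $\Phi$ mention only $\phi$ so the hypothesis passes $\forall \rho_j$ — and you correctly flag that $R_k$ should read $R_j$.
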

\begin{proof}
  Immediate by the semantics of $\club{V}$.
\end{proof}
Putting together all the pieces, given a constraint
\(
  \tctxo; \cso \models R \geq R'
\), with $R$ standard, we can transform $C(R')$ to a term of the form $Q$ by pushing
all standard sensitivity terms to the leaves. Then, we normalize $Q \cred^*
\club{V}$ by \Cref{lem:simpl-normal} arbitrarily. By \Cref{lem:simpl-correct},
the interpretation of $Q$ and $\club{V}$ are the same, so we can reduce the
constraint $\tctxo;\cso \models R \geq \club{V}$ to a first order formula over
mixed naturals and $\Rextp$, with no alternating quantifiers, by
\Cref{lem:simpl-comp}.

\section{Implementation and Usability}
\label{sec:implementation}

We have implemented our algorithm for \EDFuzz, including the constraint
simplification described in the previous section, in a prototype type-checker.
The tool is written in OCaml, and uses the Why3 framework to check the generated
numeric inequalities with SMT solvers. We have successfully type-checked a range
of examples, including all but one of the examples from the original \DFuzz
paper. The remaining example involves a ``safe division'' operation on the
sensitivity language; we believe this operation can also be handled with our
techniques. The solvers had no problem solving the mixed natural/real
constraints on our examples, even though the problem is undecidable.

In our experience, the type-checker was quite usable. To give an idea of the
annotation burden in a typical example, consider the raw, annotated program
below.
\begin{lstlisting}
function cdf
 forall (i:size) (b:list(num)[i]) (db:[i]num bag)
 : list(num)[i] {
   listcase b of list(num)[i] {
     []          => nil @ [num]
   | x :: xs [m] =>
     let (lt, gt)  = bagsplit@[num]
       (fun (n:num) : bool {n < x}) db;
     let count  = (bagsize lt);
     let bigger = cdf[e][m] xs gt;
     cons @ [num][m] count bigger } }
\end{lstlisting}
This is a modified version of an original \DFuzz example. It uses a few
extensions to the system we have described, including additional primitive types
(\lstt{bag}) and lists with a basic form of polymorphism.

Our experience with error reporting was generally good. The tool points out
the location of the failed check, which was usually not far from the actual
error. The error messages leave a bit to be desired---usually, a polynomial
inequality that can't be proved---we leave improving this aspect to future work.

The implementation and examples are available
online.\footnote{\url{https://github.com/ejgallego/dfuzz}}

\section{Related work}
There is a vast literature on type checking for various combinations
of indexed types, linear types, dependent types and subtyping.
A distinctive feature of our approach is that our index language represents
natural and real number expressions. As we have shown in the previous
sections, this makes type checking non-trivial.

The work most closely related to ours is
\citet{DLP13popl}, who studied the type-inference problem for \dlPCF,
a relatively-complete type system for complexity analysis introduced
in \citet{DLG11lics}. \dlPCF~uses ideas similar to \DFuzz~but brings
the idea of linear dependent types to the limit. Indeed, \dlPCF~index
language contains function symbols that are given meaning by an
equational program.  The equational program then plays the role of an
oracle for the type system---\dlPCF~is in fact a family of type
systems parametrized over the equational program.  The main
contribution of \citet{DLP13popl} is an algorithm that, given a
\PCF~program, generates a type and the set of constraints that must
be satisfied in order to assign the return type to the input
term.

In our terminology, their work is similar to the top-down approach we
detailed in \Cref{sec:challenge}.  As we discussed there, the complication
of this approach is that it requires solving constraints over
expressions---with possible function symbols---of the index-level
language.  As shown by Dal Lago and Petit, a clear advantage of the
\dlPCF~formulation is that instead of introducing an existential
variable over expressions, one can introduce a new function symbol
that will then be given meaning by the equational program generated by
the constraints---i.e., the constraints give a description of the
semantics of the program, which can be turned in an
equational program, that in turn gives meaning to the function symbols
of the index language appearing in the type.  Clearly, this approach
cannot be reduced to numeric resolution and need instead a combination
of numeric and symbolic solving technology.  The authors show that these
constraints can be anyway handled by using the Why3 framework. Some
constraints are discharged automatically by some of the solvers
available in Why3 while others requires an interactive resolution
using Coq.

As explained in \Cref{sec:challenge}, the situation
with \DFuzz~is different. Indeed, \DFuzz can be seen as a simplified
version of \dlPCF---simplifying in particular the typing for the
fixpoint and without variable bindings in !-types---extended however
to deal with indices representing real numbers and using
quantifications over index variables. A key distinction of \DFuzz~is that the
set of constructors for the language of sensitivity is
{\em fixed}---one cannot add arbitrary functions. Moreover, the extension to real numbers gives a different
behavior from how natural numbers are used in \dlPCF---e.g., our
example for the lack of minimal type would make no sense in \dlPCF.
These distinctions make the type checking problem very different.

For another approach that is closely related to our work, recall that \DFuzz is
an extension of \Fuzz. The sensitivity-inference and sensitivity-checking
problems for \Fuzz have been studied in~\citet{FPCDSL}. These problems are
simpler than the one studied here since in \Fuzz~there is no dependency, no
quantification and no subtyping. Indeed, the constraints generated are much
simpler and can be solved quickly by an SMT solver.

Similarly, \citet{DBLP:conf/csfw/EignerM13} have studied an extension of \Fuzz
for modeling protocols. In their work they also give an algorithmic version of
their type system. Their type system presents challenges similar to \Fuzz, which
they handle with algebraic manipulations.  More precisely, their algorithmic
version uses a technique similar to the one developed in
\citet{Cervesato2000133} for the splitting of resources: when a rule with
multiple premises is encountered the algorithmic system, first allocate all the
resources to the first branch and then allocate the remaining resources to the
second branch. Unfortunately, this approach cannot be easily applied to
\DFuzz~due to the presence of index variables and dependent pattern matching.

From a different direction, recent works \citep{conf/esop/Brunel14, conf/esop/GhicaS14} have shown how
linear indexed type systems can be made more abstract and useful to
analyze abstract resources. In particular, this kind of analyses is
connected to comonadic notions of computations~\citep{petricek2013coeffects}. The type-inference
algorithm described in \citet{conf/esop/GhicaS14}  is parametric on an
abstract notion of resource. This resource can be instantiated on a language
for sensitivities similar to the one in \Fuzz. So, this abstract type-inference
procedure could be also used for sensitivity analysis.

\DFuzz is one of several languages combining linear and dependent types. For
example, ATS~\citep{conf/icfp/ChenX05} is designed around a dependent type
system enriched with a notion of resources that is a type-level representation
of memory locations; these resources are managed using a linear discipline.  ATS
uses these features to verify the correctness of memory and pointer management.

Even if the use of linear types in ATS is very different from the one presented
here, our type checking algorithm shares some similarities with ATS's
one. The main difference is that ATS uses interactive
theorem proving to discharge proof obligations while, thanks to the
restricted scope of our analysis, our constraints can be handled by
numeric solvers. In contrast, DML~\citep{DML99}---a predecessor of ATS which did not
use linear types---uses an approach similar to ours by solving proof
obligations using automatic numeric resolution. This required limitations on the
operations available in the index language, similar to \DFuzz.

Another work considering lightweight dependent types is the one
by~\citet{vmcai2013-zhu}. In particular they propose a technique based
on dependent types to reduce the verification of higher order
programs to the verification of a first order language. While the goal
of their work is similar in spirit to ours, their technique has only
superficial similarities with the one presented here.

Finally, our work has been informed by the wide literature on type-checking, far
too large to summarize here.  For instance, the problem of dealing with
subtyping rules by using syntax-directed systems has been studied by
\citet{DBLP:conf/procomet/SteffenP94}, and others.

\section{Conclusions and Future Work}
We have presented a type-checking and sensitivity-inference algorithm for
\EDFuzz---a simple extension of \DFuzz---featuring a linear indexed dependently
type system. While we have shown that \DFuzz type checking is undecidable in the
general case, our approach generates constraints over the first-order theory
over the reals and naturals, for which there are standard (though necessarily
incomplete) solvers.

Overall, our design was guided by two principles: to stay as close to
\DFuzz as possible, and to provide a practical type checking
procedure. While we do require extensions to \DFuzz, there is a clear
motivation for the introduction of each new construct. The idea of
making a limited enrichment of the index language in order to simplify
type-checking may be applicable to other linear indexed type
systems. Furthermore, designers of such systems would do well to keep
implementability in mind: seemingly unimportant decisions that
simplify the metatheory may have a serious impact on type-checking.







\bibliographystyle{abbrvnat}
\bibliography{header,fuzz}

\appendix

\newpage
\onecolumn
\section{Differences Compared to \citet{DFuzz}}
While we hew closely to the presentation of \DFuzz in \citet{DFuzz}, we make a
few technical changes.

\begin{itemize}
  \item The environment weakening operation $\ctxo \tless \ctxo'$ in \DFuzz allows
    the types to change. That is, a binding $\cass{x}{R}{\tyo} \in \ctxo$ can be
    weakened to $\cass{x}{R'}{\tyo'}$ for $\tyo \tless \tyo'$ two syntactically
    different types. We take a more restricted weakening rule, where the types
    must be syntactically the same; we are unaware of any programs that need the
    more general rule.
  \item We take the interpretation of $\infty \cdot 0$ to be $\infty$, rather
    than $0$.
  \item We assume some additional type annotations in the source language, as
    discussed in \Cref{sec:algorithmic}
\end{itemize}

\section{The \DFuzz$_\Box$ system}

The first system has the goal to enjoy environment ``uniformity'', in the
sense that sensitivity information in the environments may be missing. We
denote such an assignment $\cempty{x}{\tyo}$. This is a subtle
technical point for crucial to enable syntax-directed typability.

We modify subtyping for environments such that $\ctxo \tless \ctxw$
requires $\ctxo$, $\ctxw$ to have the same domain. The new rule is:
\begin{equation*}
  \inferrule
   { \forall (\cass{x_i}{R_i}{\tyo_i}, \cass{x_i}{R'_i}{\tyo_i}) \in (\ctxo, \ctxw) \\\\
     \dom(\ctxw) = \dom (\ctxo) \\
     \tctxo; \cso \models R_i \ge R'_i \lor R'_i = \Box}
   { \tctxo;\cso \vsub \ctxo  \tless \ctxw  } \rname{\text{$\tless$-Env}}
\end{equation*}
This subsumes regular variable weakening.
Environment operations must be aware of $\Box$, with $\Box + i = i$, $i
\cdot \Box = \Box$ for the annotations.

\begin{definition}[Box erasure]
  For any environment $\ctxo$, we define the $\Box$-erasure operation
  $|\ctxo| = \bra{\cass{x}{R}{\tyo} \mid \cass{x}{R}{\tyo} \in \ctxo
    \land R \neq \Box}$.
\end{definition}

We introduce the $\Box$ system in \Cref{fig:dfuzz-box-typing}.
\newcommand{\dfuzzbsubl}{
  \inferrule
  { \tctxo;\cso\mid \ctxw \vd \tmo : \tyo \\ \tctxo;\cso \vsub \ctxo \tless \ctxw}
  { \tctxo;\cso\mid \ctxo \vd \tmo : \tyo }                    \rname{\rulesubl}
}
\newcommand{\dfuzzbsubr}{
\inferrule
  { \tctxo;\cso\mid \ctxo \vd \tmo : \tyo \\ \tctxo;\cso \vsub \tyo \tless \tyw}
  { \tctxo;\cso\mid \ctxo \vd \tmo : \tyw }                    \rname{\rulesubr}
}
\newcommand{\dfuzzbconst}{
  \inferrule
  {  }
  { \tctxo;\cso\mid \ctxo \vd  \realone : \R }
  \rname{\ruleconstR}
}
\newcommand{\dfuzzbnatc}{
  \inferrule
  { n = \lb S \rb }
  { \tctxo;\cso\mid \ctxo \vd  n : \N[S] }                     \rname{\ruleconstN}
}
\newcommand{\dfuzzbvar}{
  \inferrule
        {  }
  { \tctxo;\cso\mid \ctxo, \cass{x}{1}{\tyo} \vd  x : \tyo }   \rname{\rulevar}
}
\newcommand{\dfuzzbitens}{
  \inferrule
  { \tctxo;\cso\mid \ctxo_1 \vd e_1 : \tyo \\ \tctxo;\cso\mid \ctxo_2 \vd e_2 : \tyw }
  { \tctxo;\cso\mid \ctxo_1 + \ctxo_2 \vd (e_1, e_2) : \tyo \otimes   \tyw } \rname{\ruleitens}
}
\newcommand{\dfuzzbetens}{
  \inferrule
  { \tctxo;\cso\mid \ctxw \vd e : \tyo \otimes \tyw  \\
    \tctxo;\cso\mid \ctxo, \cass{x}{R}{\tyo}, \cass{y}{R}{\tyw} \vd e' : \tyt \\ R \neq \Box}
  { \tctxo;\cso\mid \ctxo + R \cdot \ctxw \vd \blet (x, y) = e \bin e' : \tyt}         \rname{\ruleetens}
}
\newcommand{\dfuzzbetensold}{
  \inferrule
  { \tctxo;\cso\mid \ctxw \vd e : \tyo \otimes \tyw  \\
    \tctxo;\cso\mid \ctxo, \cass{x}{R}{\tyo_1}, \cass{y}{R}{\tyw} \vd e' : \tyt \\
    R \neq \Box}
  { \tctxo;\cso\mid \ctxo + R \cdot \ctxw
    \vd \blet (x : \tyo, y : \tyw) = e \bin e' : \tyt}         \rname{\ruleetens}
}
\newcommand{\dfuzzbiamp}{
  \inferrule
  { \tctxo;\cso\mid \ctxo \vd e_1 : \tyo \\ \tctxo;\cso\mid \ctxo \vd e_2 : \tyw }
  { \tctxo;\cso\mid \ctxo \vd \pair{e_1}{e_2} : \tyo \amp   \tyw } \rname{\ruleiamp}
}
\newcommand{\dfuzzbeamp}{
  \inferrule
  { \tctxo;\cso\mid \ctxo \vd e : \tyo_1 \amp \tyo_2 }
  { \tctxo;\cso\mid \ctxo \vd \pi_i \;e : \tyo_i }                  \rname{\ruleeamp}
}
\newcommand{\dfuzzbiapp}{
  \inferrule
    { \tctxo;\cso\mid \ctxo, \cass{x}{R}{\tyo} \vd  e : \tyw \\
      R \neq \Box
    }
  { \tctxo;\cso\mid \ctxo \vd  \lambda (\cass{x}{R}{\tyo}). e : !_R \tyo \multimap \tyw } \rname{\ruleiapp}
}
\newcommand{\dfuzzbeapp}{
  \inferrule
  { \tctxo;\cso\mid \ctxo \vd  e_1 : !_R \tyo \multimap \tyw \\
    \tctxo;\cso\mid \ctxw \vd  e_2 : \tyo }
  { \tctxo;\cso\mid \ctxo + R\cdot\ctxw \vd e_1\;e_2 : \tyw }           \rname{\ruleeapp}
}
\newcommand{\dfuzzbitapp}{
  \inferrule
    { \tctxo, i : \kappa;\cso \mid \ctxo \vd e : \tyo \\ \text{$i$ fresh in
    $\cso, \ctxo$}}
  {  \tctxo;\cso \mid \ctxo \vd  \Lambda i:\kappa .\; e : \forall i:\kappa.\; \tyo } \rname{\ruleitapp}
}
\newcommand{\dfuzzbetapp}{
  \inferrule
    { \tctxo;\cso\mid \ctxo \vd e : \forall i : \kappa.\; \tyo
      \\ \tctxo \models S : \kappa}
  { \tctxo;\cso \mid \ctxo \vd e[S] : \tyo[S/i] }                   \rname{\ruleetapp}
}
\newcommand{\dfuzzbfix}{
  \inferrule
  { \tctxo; \cso \mid \ctxo, \cass{\varone}{\infty}{\tyo} \vd e : \tyo }
  { \tctxo; \cso \mid \infty \cdot \ctxo \vd \bfix \varone : \tyo.\; e : \tyo } \rname{\rulefix}
}
\newcommand{\dfuzzbnats}{
  \inferrule
  { \tctxo;\cso\mid \ctxo \vd  e : \N[S] }
  { \tctxo;\cso\mid \ctxo \vd  e + 1 : \N[S+1] }                    \rname{\rulenats}
}
\newcommand{\dfuzzbnate}{
  \inferrule
  { \tctxo;\cso \mid \ctxw \vd e : \N[S] \\
    \tctxo ;\cso, S = 0     \mid  \ctxo \vd e_0 : \tyo \\\\
    \tctxo, i : \natone ;\cso, S = i + 1 \mid \ctxo, \cass{n}{R}{\N[i]} \vd e_s
  : \tyo \\ i \# R \\ R \neq \Box }
  { \tctxo;\cso\mid \ctxo + R \cdot \ctxw
    \vd \bcase e \breturn \tyo
    \bof 0 \Rightarrow e_0 \mid n_{[i]} + 1 \Rightarrow e_s : \tyo} \rname{\rulenate}
}

\begin{figure*}
  \centering
\begin{mathpar}
  \dfuzzbsubl \and
  \dfuzzbsubr \and
  \dfuzzbconst \and
  \dfuzzbnatc \and
  \dfuzzbvar \and
  \dfuzzbitens \and
  \dfuzzbetens \and
  \dfuzzbiamp \and
  \dfuzzbeamp \and
  \dfuzzbiapp \and
  \dfuzzbeapp \and
  \dfuzzbitapp \and
  \dfuzzbetapp \and
  \dfuzzbfix \and
  \dfuzzbnats \and
  \dfuzzbnate
\end{mathpar}
  \caption{\DFuzz$_\Box$ Type Judgment}
  \label{fig:dfuzz-box-typing}
\end{figure*}

We prove that derivations in a system with $\Box$ are in direct
correspondence with derivation in a system without it.

\begin{lemma}
  Assume $\tctxo; \cso \mid \ctxo \vd e : \tyo$ in the $\Box$ system,
  then $\tctxo; \cso \mid |\ctxo| \vd e : \tyo$ in the system without
  it.
\end{lemma}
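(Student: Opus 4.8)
The plan is to induct on the derivation of $\tctxo; \cso \mid \ctxo \vd e : \tyo$ in the $\Box$-system, mapping each rule to the rule of the same name in \Cref{fig:dfuzz-typing}. Since $\Box$ annotations occur only in environments — types and sensitivity expressions never mention $\Box$ — the index context $\tctxo$, the constraint context $\cso$, and every type appearing in the judgment are untouched by erasure; the whole argument concerns how $|\cdot|$ interacts with environments, the environment operations, and environment subtyping. I would first isolate two bookkeeping facts and then run the induction, which is then essentially mechanical.

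The two facts are the following. \emph{Erasure is a homomorphism for environment operations:} for $\ctxo, \ctxw$ on which addition is defined, $|\ctxo + \ctxw| = |\ctxo| + |\ctxw|$, and the right-hand side is again defined (erasure removes bindings but never changes a type, so the ``common bindings agree on their type'' condition is preserved); and $|R \cdot \ctxo| = R \cdot |\ctxo|$ for every genuine sensitivity $R$. Both follow immediately from the conventions $\Box + i = i$ and $i \cdot \Box = \Box$: in a sum a variable survives erasure exactly when it survives in some summand, with matching value, and scaling by a non-$\Box$ factor fixes exactly the non-$\Box$ entries (using $\infty \cdot r \neq \Box$ for $r \in \Rextp$, per the convention $\infty \cdot 0 = \infty$). \emph{$\Box$-subtyping descends to standard subtyping after erasure:} if the $\Box$-system's $(\tless\text{-Env})$ rule gives $\tctxo; \cso \vsub \ctxo \tless \ctxw$, then the standard rule gives $\tctxo; \cso \vsub |\ctxo| \tless |\ctxw|$. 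Indeed, for any $(\cass{x}{R'}{\tyo}) \in |\ctxw|$ we have $R' \neq \Box$, so the disjunction in the $\Box$-rule forces $(\cass{x}{R}{\tyo}) \in \ctxo$ with $\tctxo; \cso \models R \ge R'$; since $\models$ compares only genuine sensitivity expressions, $R \neq \Box$, hence $(\cass{x}{R}{\tyo}) \in |\ctxo|$, which is exactly what the standard rule demands. (Type subtyping is literally the same relation in both systems, so $\rulesubr$ needs nothing new.)

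With these in hand the induction is routine, rule by rule. The leaves $\rulevar$, $\ruleconstR$, $\ruleconstN$ have conclusion contexts of the form ``arbitrary context, optionally plus $\cass{x}{1}{\tyo}$''; since $1 \neq \Box$, erasing preserves this shape and the standard rule applies verbatim. For $\rulesubl$ combine the induction hypothesis with the erased-subtyping fact; for $\rulesubr$ with the unchanged type-subtyping fact. For $\ruleitens$, $\ruleiamp$, $\ruleeamp$, $\rulenats$, $\ruleeapp$, $\ruleetens$, $\rulefix$, $\rulenate$ rewrite the erased conclusion context using the homomorphism fact — e.g.\ $|\ctxo + R \cdot \ctxw| = |\ctxo| + R \cdot |\ctxw|$ and $|\infty \cdot \ctxo| = \infty \cdot |\ctxo|$ — and then apply the induction hypotheses; the side conditions $R \neq \Box$ in $\ruleiapp$, $\ruleetens$, $\rulenate$ guarantee that the freshly bound variable in the corresponding premise keeps its annotation after erasure, so the premise derivations still have the shape the standard rule expects. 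For $\ruleitapp$ and $\ruleetapp$, freshness of $i$ in $\ctxo$ implies freshness in the smaller $|\ctxo|$, and type substitution commutes trivially with erasure.

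The only point that is not purely mechanical is the second bookkeeping fact: one must verify that the $\Box$-conventions are arranged so that erasure can never turn a ``$\Box$ suffices'' obligation into a genuine numeric one — equivalently, that anything dominating a non-$\Box$ sensitivity is itself non-$\Box$ — which is exactly why the leaf judgment $\tctxo;\cso\models R \ge R'$ is restricted to genuine sensitivity expressions. I expect this, together with checking the homomorphism identities against the $\Box$-arithmetic conventions (including the corner case $\infty \cdot 0 = \infty$), to be the main obstacle; everything else is straightforward bookkeeping on the derivation.
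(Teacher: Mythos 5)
Your proof is correct and takes essentially the same approach as the paper's: induction on the $\Box$-derivation, using the two erasure-homomorphism identities $|\ctxo+\ctxw|=|\ctxo|+|\ctxw|$ and $|R\cdot\ctxo|=R\cdot|\ctxo|$ for the rules with environment operations, relying on the $R\neq\Box$ side conditions for rules that bind fresh variables, and handling the subtyping case by translating $\Box$-subtyping to standard subtyping on the erased contexts. You spell out the subtyping-descent lemma and the $\Box$-arithmetic corner cases more explicitly than the paper, which is terse on these points, but the structure and the key observations are identical.
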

\begin{proof}
  By induction on the typing derivation. The base cases and cases
  where the environment is not modified are immediate. Subtyping on the
  left is proven by weakening.

  The rest of cases are split in two:
  \begin{itemize}
  \item All cases featuring variables in the top rule, also have the
    condition $R \neq \Box$, this is enough.
  \item For the cases involving environment operations, the proofs is
    completed by following properties:
    \begin{equation*}
      | R \cdot \ctxo | = R \cdot |\ctxo|
      \qquad
      | \ctxo + \ctxw | = |\ctxo| + | \ctxw|
    \end{equation*}
  \end{itemize}

\end{proof}
\begin{lemma}
  Assume $\tctxo; \cso \mid \ctxo \vd e : \tyo$ in the system without
  $\Box$, then $\tctxo; \cso \mid \ctxo \vd e : \tyo$ in the system
  with it.
\end{lemma}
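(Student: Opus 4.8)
The plan is to induct on the derivation of $\tctxo;\cso\mid\ctxo\vd e:\tyo$ in the system without $\Box$. The guiding observation is that \DFuzz$_\Box$ is essentially a superset of the plain system: every plain typing rule matches a rule of \Cref{fig:dfuzz-box-typing} whose additional side conditions are all of the form $R\neq\Box$, and these hold automatically since no environment occurring in a $\Box$-free derivation mentions $\Box$; moreover the environment operations $\ctxo_1+\ctxo_2$ and $R\cdot\ctxo$ coincide on $\Box$-free inputs in the two systems. Hence for every rule except $\rulesubl$ the inductive step is immediate: apply the induction hypothesis to each premise and re-apply the corresponding \DFuzz$_\Box$ rule.

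The only rule requiring real care is $\rulesubl$, because \DFuzz$_\Box$'s environment-subtyping judgment is not literally the plain one: the $\Box$-variant forces the two environments to have equal domains (with $\Box$ allowed on the right), whereas the plain variant allows $\dom(\ctxw)\subsetneq\dom(\ctxo)$. To bridge this I will first prove a \emph{$\Box$-weakening} lemma: if $\tctxo;\cso\mid\ctxo\vd e:\tyo$ in \DFuzz$_\Box$ and $x\notin\dom(\ctxo)$, then $\tctxo;\cso\mid\ctxo,\cempty{x}{\tyw}\vd e:\tyo$ in \DFuzz$_\Box$ for any $\tyw$. This is a routine induction on the derivation, using $R\cdot\Box=\Box$ and $\Box+R=R$ to push the new binding through the scaling and addition rules, using the fact that the variable and constant rules already admit an arbitrary surrounding environment, and, in the $\rulesubl$ case, using the fact that adding $\cempty{x}{\tyw}$ to both sides of an environment subtyping preserves it (the new binding is discharged by the $R'=\Box$ disjunct of the rule).

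With $\Box$-weakening available, the $\rulesubl$ case is settled as follows. Suppose the plain derivation ends with premise $\tctxo;\cso\mid\ctxw\vd e:\tyo$ and $\tctxo;\cso\vsub\ctxo\tless\ctxw$ in the plain sense, so $\dom(\ctxw)\subseteq\dom(\ctxo)$, the types agree on the common domain, and $\tctxo;\cso\models R_x\ge R'_x$ for every $x\in\dom(\ctxw)$, where $\cass{x}{R_x}{\tyo_x}\in\ctxo$ and $\cass{x}{R'_x}{\tyo_x}\in\ctxw$. Let $\ctxw^+$ extend $\ctxw$ by $\cempty{x}{\tyo_x}$ for every $x\in\dom(\ctxo)\setminus\dom(\ctxw)$, taking the type $\tyo_x$ from the binding of $x$ in $\ctxo$. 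By the induction hypothesis we have a \DFuzz$_\Box$ derivation of $\tctxo;\cso\mid\ctxw\vd e:\tyo$, and by iterating $\Box$-weakening (the missing domain is finite) a \DFuzz$_\Box$ derivation of $\tctxo;\cso\mid\ctxw^+\vd e:\tyo$. Now $\dom(\ctxw^+)=\dom(\ctxo)$, the types agree, and each variable of $\ctxo$ either already occurs in $\ctxw$ (where the required inequality holds) or has annotation $\Box$ in $\ctxw^+$; either way the side condition of the \DFuzz$_\Box$ environment-subtyping rule is met, so $\tctxo;\cso\vsub\ctxo\tless\ctxw^+$, and one application of $\rulesubl$ yields $\tctxo;\cso\mid\ctxo\vd e:\tyo$ in \DFuzz$_\Box$. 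The only obstacle is thus the bookkeeping around environment subtyping; everything else is a direct transcription of the typing rules.
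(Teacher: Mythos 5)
Your proposal takes exactly the same route as the paper: induction on the plain derivation, relying on a $\Box$-weakening lemma ($\tctxo;\cso\mid\ctxo\vd e:\tyo$ implies $\tctxo;\cso\mid\ctxo,\cempty{x}{\tyw}\vd e:\tyo$ in the $\Box$ system) to pad $\ctxw$ with $\Box$-bindings so that the equal-domain requirement of the $\Box$-system's environment subtyping is met in the $\rulesubl$ case. You simply spell out the bookkeeping that the paper's proof elides.
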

\begin{proof}
  The proof is mostly routine by induction on the derivation, but
  relies in the following fact of the $\Box$ system: $\tctxo; \cso
  \mid \ctxo \vd e : \tyo$ implies $\tctxo; \cso \mid \ctxo,
  \cempty{x}{\tyw} \vd e : \tyo$. Then, using this lemma we can adjust
  the environments so that subtyping goes through in the system with
  $\Box$.
\end{proof}
A $\Box$-elimination operation $\req{R}$, which sends environment
annotations to sensitivities will prove useful in the the syntax
directed system. It is defined as $\req{\Box} = 0$, $\req{R} = R$
otherwise. Remember that $\Box$ doesn't belong to the sensitivity
language, so any annotation that is used in places where a sensitivity
is expected must be wrapped with $\req{\--}$.

\begin{definition}[Extension to environments operations]
  Operations on extended sensitivites that were extended to
  environments in a pointwise fashion, now must take into account the
  presence of $\Box$.
  \begin{itemize}
  \item $\smax{R_1}{R_2}$ operates now as $\smax{\Box}{\Box} = \Box$,
    $\smax{\Box}{R} = R$, $\smax{R}{\Box} = R$, the original term
    otherwise.
  \item $\ssup{i}{R}$ is extended in the natural way $\ssup{i}{\Box} =
    \Box$, the original term otherwise.
  \item $\scase{S}{i}{R_0}{R_s}$ operates now
    $\scase{S}{i}{\Box}{\Box} = \Box$, $\scase{S}{i}{R_0}{R_s} =
    \scase{S}{i}{\req{R_0}}{\req{R_s}}$ otherwise.
  \end{itemize}
\end{definition}

\section{Subtyping Proofs}

From now on we can consider only environments of similar length. We prove
a few necessary facts about subtyping.

\begin{lemma}[Environment manipulation]
  \label{lem:ctx}
  Environment subtyping is preserved by addition and scalar multiplication. More
  formally:
  \begin{itemize}
    \item If $\tctxo; \cso \models \ctxo \tless \ctxo' \land \ctxw \tless
      \ctxw'$, then $\tctxo; \cso \models \ctxo + \ctxw \tless \ctxo' + \ctxw'$;
      and
    \item if $\tctxo; \cso \models \ctxo \tless \ctxo' \land R \geq R'$, then
      $\tctxo; \cso \models R \cdot \ctxo \tless R' \cdot \ctxo'$.
  \end{itemize}
\end{lemma}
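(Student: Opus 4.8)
The plan is to unfold the definition of environment subtyping and reduce both claims to pointwise monotonicity of addition and multiplication on $\Rextp$. Recall (using the $\Box$-aware rule in force in this section, where the two environments are required to have equal domain) that $\tctxo;\cso \vsub \ctxo \tless \ctxw$ holds exactly when, for every $x$ with $\cass{x}{R}{\tyo}\in\ctxo$ and $\cass{x}{R'}{\tyo}\in\ctxw$, either $R' = \Box$ or $\tctxo;\cso \models R \ge R'$; and the latter unfolds to $\semu{R}_\rho \ge \semu{R'}_\rho$ for every valuation $\rho\in\mathsf{val}(\tctxo)$ satisfying $\semu{\cso}_\rho$. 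So I would fix such a $\rho$ and argue entirely about interpreted sensitivity values in $\Rextp$; the syntactic operations $+$ and $R\cdot(-)$ on environments commute with $\semu{-}_\rho$, with $\Box$ acting as the neutral element for $+$ and as an absorbing element for $\cdot$ (matching the annotation rules $\Box + i = i$ and $i \cdot \Box = \Box$). Since the hypotheses ensure the environments involved share a domain and agree on the types of common variables, all the environment operations in the statement are defined, and the two sides of each conclusion have the same domain.

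For the addition case I would argue variable by variable. Fix $x$ and call its annotations $R_1, S_1$ in $\ctxo, \ctxw$ and $R_1', S_1'$ in $\ctxo', \ctxw'$. If $R_1' = S_1' = \Box$ the target annotation is $\Box$ and there is nothing to check. If exactly one, say $R_1'$, is $\Box$, then the right-hand annotation interprets to $\semu{S_1'}_\rho$ while the left-hand one interprets to $\semu{S_1}_\rho$ plus a non-negative quantity, which is $\ge \semu{S_1'}_\rho$ by hypothesis. If neither is $\Box$, then $\semu{R_1}_\rho \ge \semu{R_1'}_\rho$ and $\semu{S_1}_\rho \ge \semu{S_1'}_\rho$, and monotonicity of $+$ on $\Rextp$ gives $\semu{R_1}_\rho + \semu{S_1}_\rho \ge \semu{R_1'}_\rho + \semu{S_1'}_\rho$. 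As $x$ and $\rho$ were arbitrary, $\tctxo;\cso \models \ctxo + \ctxw \tless \ctxo' + \ctxw'$. The scalar case is analogous: for each $x$ with annotations $R_1$ in $\ctxo$ and $R_1'$ in $\ctxo'$, if $R_1' = \Box$ the target is $\Box$; otherwise $R_1 \ne \Box$ and from $\semu{R}_\rho \ge \semu{R'}_\rho$ and $\semu{R_1}_\rho \ge \semu{R_1'}_\rho$ monotonicity of $\cdot$ gives $\semu{R}_\rho \cdot \semu{R_1}_\rho \ge \semu{R'}_\rho \cdot \semu{R_1'}_\rho$.

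I expect no real obstacle. The only two points needing care are the bookkeeping for $\Box$ annotations, and checking that monotonicity of multiplication on $\Rextp$ survives the convention $\infty \cdot 0 = \infty$ adopted here: a product of two values can become $\infty$ via this convention only when one factor is $\infty$ and the other is $0$, in which case the pointwise-larger factor is also $\infty$ and the larger product is $\infty$ too, so the inequality is preserved. Everything else is routine unfolding of the definitions of $\tless$ and of the environment operations.
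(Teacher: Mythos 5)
Your proof is correct and takes essentially the same approach as the paper, which (in the appendix) dismisses the lemma as following directly "from the interpretation of subtyping assertions" with a remark that skeleton-preservation keeps the operations defined. You simply carry out that routine unfolding explicitly, and the extra care you give to the $\Box$ annotation cases and to monotonicity under the $\infty\cdot 0 = \infty$ convention are exactly the points a full writeup should check.
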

\begin{proof}
  These follow from the interpretation of subtyping assertions. Note
  that the subtyping relation preserves the skeleton of the
  environments, thus making sure that the operations are always defined.
\end{proof}

\begin{lemma}[Properties of extended sensitivities] \label{lem:ext}
  Extended sensitivities satisfy the following properties:
  \begin{itemize}
    \item $\tctxo; \cso \models R \geq \smax{R_1}{R_2}$ if and only if
      $\tctxo; \cso \models R \geq R_1 \land R \geq R_2$;
    \item $\tctxo; \cso \models R \geq \ssup{i}{R'}$ with $i \# \tctxo$ if and
      only if $\tctxo, i; \cso \models R \geq R'$; and
    \item $\tctxo; \cso \models R \geq \scase{S}{i}{R_0}{R_s}$ with $i \#
      \tctxo$ if and only if
      \[
        \tctxo; \cso, S = 0 \models R \geq R_0
        \qand
        \tctxo, i; \cso, S = i + 1 \models R \geq R_s.
      \]
  \end{itemize}
  As an immediate corollary, setting $R$ to be $\smax{R_1}{R_2}, \ssup{i}{R'},
  \scase{S}{i}{R_0}{R_s}$ yields
  \begin{itemize}
    \item $\tctxo; \cso \models \smax{R_1}{R_2} \geq R_1 \land R \geq R_2$;
    \item $\tctxo, i; \cso \models \ssup{i}{R'} \geq R'$; and
    \item $\tctxo; \cso, S = 0 \models \scase{S}{i}{R_0}{R_s} \geq R_0$ and
      $\tctxo, i; \cso, S = i + 1 \models \scase{S}{i}{R_0}{R_s} \geq R_s$.
  \end{itemize}
\end{lemma}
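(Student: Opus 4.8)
The plan is to unfold the semantic definition of the constraint judgment $\tctxo;\cso\models R_1\ge R_2$, namely $\forall\rho\in\mathsf{val}(\tctxo).\,\semu{\cso}_\rho\Rightarrow\semu{R_1}_\rho\ge\semu{R_2}_\rho$, together with the denotations of the three extended constructs, and then to observe that each equivalence collapses to an elementary fact about the complete lattice $\Rextp$. A preliminary observation to record first: in a judgment $\tctxo;\cso\models R\ge\mathcal{C}$ the left-hand term $R$ is well-formed under $\tctxo;\cso$, so whenever $\mathcal{C}$ binds a variable $i$ with $i\#\tctxo$, $i$ occurs free neither in $R$ nor in $\cso$; hence $\semu{R}_{\rho\cup[i=r]}=\semu{R}_\rho$ and $\semu{\cso}_{\rho\cup[i=r]}=\semu{\cso}_\rho$ for all $r$, and the valuations $\rho\cup[i=r]$ with $\rho\in\mathsf{val}(\tctxo)$, $r\in\lb\kappa\rb$ are exactly the elements of $\mathsf{val}(\tctxo,i:\kappa)$.

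For the first bullet I would fix a $\rho$ with $\semu{\cso}_\rho$, use $\semu{\smax{R_1}{R_2}}_\rho=\max(\semu{R_1}_\rho,\semu{R_2}_\rho)$, and apply $r\ge\max(a,b)\iff r\ge a\wedge r\ge b$ in $\Rextp$; quantifying over $\rho$ gives the claim. For the second bullet, again fixing $\rho$, the inequality $\semu{R}_\rho\ge\semu{\ssup{i:\kappa}{R'}}_\rho=\sup_{r\in\lb\kappa\rb}\semu{R'}_{\rho\cup[i=r]}$ holds iff $\semu{R}_\rho$ is an upper bound of $\{\semu{R'}_{\rho\cup[i=r]}\}_r$, i.e.\ iff $\semu{R}_\rho\ge\semu{R'}_{\rho\cup[i=r]}$ for every $r$ --- the universal property of the supremum, available since $\Rextp$ is a complete lattice --- and rewriting via the freshness observation turns this into $\tctxo,i:\kappa;\cso\models R\ge R'$. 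For the third bullet I would fix $\rho$ with $\semu{\cso}_\rho$ and case-split on $\semu{S}_\rho\in\N$: if $\semu{S}_\rho=0$ the inequality reads $\semu{R}_\rho\ge\semu{R_0}_\rho$, and quantifying over the $\rho$ with $\semu{\cso}_\rho$ and $\semu{S}_\rho=0$ is exactly $\tctxo;\cso,S=0\models R\ge R_0$; if $\semu{S}_\rho=n\ge1$ it reads $\semu{R}_\rho\ge\semu{R_s}_{\rho\cup[i=n-1]}$, and since $\rho\mapsto\rho\cup[i=\semu{S}_\rho-1]$ is a bijection between $\{\rho\in\mathsf{val}(\tctxo)\mid\semu{\cso}_\rho,\ \semu{S}_\rho\ge1\}$ and $\{\rho'\in\mathsf{val}(\tctxo,i)\mid\semu{\cso}_{\rho'},\ \semu{S}_{\rho'}=\semu{i}_{\rho'}+1\}$ (using freshness of $i$ in $S$, $\cso$ and $R$), this part is $\tctxo,i;\cso,S=i+1\models R\ge R_s$. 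Conjoining the two cases yields the equivalence.

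Finally I would obtain the corollary by instantiating $R$ with $\smax{R_1}{R_2}$, $\ssup{i}{R'}$ and $\scase{S}{i}{R_0}{R_s}$ respectively in the three equivalences just proved: in each case the left-hand side becomes $\tctxo;\cso\models\mathcal{C}\ge\mathcal{C}$, which holds by reflexivity of $\ge$, so the right-hand side holds. The only points needing genuine care are the freshness bookkeeping and a few $\infty$-arithmetic edge cases in $\Rextp$ (for instance the supremum of an unbounded family being $\infty$), so I expect the main --- and still rather minor --- obstacle to be stating the variable-renaming bijection in the $\scase{-}{-}{-}{-}$ case precisely rather than anything mathematically substantive.
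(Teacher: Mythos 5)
Your proof is correct and is precisely the detailed unfolding that the paper's one-line proof (``These follow from the interpretation of extended sensitivities'') gestures at: unfold the semantic definition of $\models$, apply the universal properties of $\max$ and $\sup$ in the complete lattice $\Rextp$, and set up the (freshness-guaranteed) bijection between valuations of $\tctxo$ extended with the witness for $i$ and valuations of $\tctxo, i$ satisfying the refinement constraint. The same approach, just written out; the only thing I would add for completeness is the remark you already flag, that the well-formedness side conditions implicit in the statement guarantee the freshness of $i$ in $R$, $S$, and $\cso$ needed for the valuation bijection in the $\ssup{}{}$ and $\scase{}{}{}{}$ cases.
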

\begin{proof}
  These follow from the interpretation of extended sensitivities.
\end{proof}

\begin{lemma} \label{lem:subsub}
  Suppose $\tctxo, i : \kappa; \cso \models \tyo \tless \tyw$ and $i \# \cso$.
  Then for any $\phi \models S : \kappa$, we have
  \[
    \tctxo; \cso \models \tyo[S/i] \tless \tyw[S/i].
  \]
\end{lemma}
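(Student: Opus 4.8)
The plan is to argue by induction on the derivation of $\tctxo, i : \kappa; \cso \vsub \tyo \tless \tyw$ from the rules of \Cref{fig:dfuzz-sub}, after first establishing a substitution lemma at the level of index terms. Concretely, I would prove that for every sensitivity or size expression $R$ well-typed under $\tctxo, i : \kappa$, every $S$ with $\tctxo \models S : \kappa$, and every valuation $\rho \in \mathsf{val}(\tctxo)$, one has $\semu{R[S/i]}_\rho = \semu{R}_{\rho \cup [i = \semu{S}_\rho]}$. This is a direct structural induction on $R$ following the defining clauses of $\semu{\cdot}_\rho$ (the variable, constant, $+$, $\cdot$, and $S+1$ cases). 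Read for constraint environments, the same statement gives $\semu{\cso[S/i]}_\rho = \semu{\cso}_{\rho \cup [i = \semu{S}_\rho]}$; and since $i \# \cso$ we in fact have $\cso[S/i] = \cso$, so $\semu{\cso}_\rho = \semu{\cso}_{\rho \cup [i = \semu{S}_\rho]}$. This last equality is exactly where the hypothesis $i \# \cso$ is used.

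The content of the induction is concentrated in the leaves, the assertions $\tctxo, i:\kappa; \cso \models R_1 \ge R_2$, which by definition mean that $\semu{\cso}_{\rho'} \Rightarrow \semu{R_1}_{\rho'} \ge \semu{R_2}_{\rho'}$ holds for all $\rho' \in \mathsf{val}(\tctxo, i:\kappa)$. To derive $\tctxo; \cso \models R_1[S/i] \ge R_2[S/i]$, take any $\rho \in \mathsf{val}(\tctxo)$ with $\semu{\cso}_\rho$ and set $\rho' := \rho \cup [i = \semu{S}_\rho]$, which is a legitimate valuation for $\tctxo, i:\kappa$ since $\tctxo \models S : \kappa$ forces $\semu{S}_\rho \in \lb\kappa\rb$. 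By the freshness observation, $\semu{\cso}_{\rho'} = \semu{\cso}_\rho$ holds, so instantiating the hypothesis at $\rho'$ yields $\semu{R_1}_{\rho'} \ge \semu{R_2}_{\rho'}$; the index-term substitution lemma then rewrites this as $\semu{R_1[S/i]}_\rho \ge \semu{R_2[S/i]}_\rho$, as required. Leaves of the form $R \le R'$ are symmetric.

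For the remaining, structural, rules: reflexivity is immediate since $\tyo[S/i] = \tyo[S/i]$; and for \struleamp, \struletens and \strulelin, substitution commutes with each type constructor (and, in \strulelin, with the annotation $R$), so it suffices to apply the induction hypothesis to the subtyping premises, use the leaf argument above for the side condition $R \le R'$ of \strulelin, and re-apply the same rule. The rule \struleforall is where one must be careful: I would first rename its bound variable to some $j$ distinct from $i$ and not free in $S$, so that the premise reads $\tctxo, i:\kappa, j:\kappa'; \cso \vsub \tyo_0 \tless \tyw_0$ with $j$ fresh; reordering the index context to $\tctxo, j:\kappa', i:\kappa$ and observing that $i \# \cso$ still holds and that $S$ is still well-typed under $\tctxo, j:\kappa'$, the induction hypothesis applies, and re-applying \struleforall gives $\tctxo; \cso \vsub (\forall j:\kappa'.\,\tyo_0)[S/i] \tless (\forall j:\kappa'.\,\tyw_0)[S/i]$. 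Since the statement concerns only types, no environment subtyping rule needs to be treated.

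I expect the only genuine obstacle to be the bookkeeping in the \struleforall case: making precise the renaming of the quantified variable so that it is distinct both from the substituted variable $i$ and from the free variables of $S$, and checking that well-typedness of $S$ and the disjointness $i \# \cso$ survive passing under the binder — this needs a routine weakening/renaming lemma for subtyping derivations. The index-term substitution lemma, though the technical heart of the argument, is a completely standard structural induction, and all the other cases are mechanical.
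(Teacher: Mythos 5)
Your proof is correct and takes essentially the same approach as the paper: induction on the subtype derivation, with the leaf assertions $\tctxo, i:\kappa;\cso \models R_1 \ge R_2$ handled by instantiating the valuation at $\rho \cup [i = \semu{S}_\rho]$ and appealing to the interpretation of index terms under substitution. The paper's own proof is far terser (it records only the leaf case and declares it "clear from the interpretation"), so your version merely makes explicit the index-term substitution lemma, the use of $i \# \cso$, and the bound-variable bookkeeping in the $\forall$ case that the paper elides.
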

\begin{proof}
  By induction on the subtype derivation. For the base cases, we know
  \[
    \forall \tctxo, i : \kappa.\; (\cso \Rightarrow R \geq R'),
  \]
  and we need to prove
  \[
    \forall \tctxo.\; (\cso \Rightarrow R[S/i] \geq R'[S/i]),
  \]
  but this is clear from the interpretation of $R, R'$.
\end{proof}

\section{The Syntax-Directed system}

The syntax-directed system is presented in
\Cref{fig:dfuzz-syn-typing}. It works over a uniform environment, using
$\Box$ annotations to ``mark'', variables not occurring in the
original \DFuzz derivation.

\newcommand{\sdconst}{
  \inferrule
  { }
  { \tctxo;\cso\mid \mathrm{Ectx}(\ctxskel) \vs  \realone : \R }
  \rname{\ruleconstR}
}
\newcommand{\sdvar}{
  \inferrule
  { }
  { \tctxo;\cso\mid \mathrm{Ectx}(\ctxskel), \cass{x}{1}{\tyo} \vs  x : \tyo }                 \rname{\rulevar}
}
\newcommand{\sditens}{
  \inferrule
  { \tctxo;\cso\mid \ctxo_1 \vs e_1 : \tyo \\
    \tctxo;\cso\mid \ctxo_2 \vs e_2 : \tyw }
  { \tctxo;\cso\mid \ctxo_1 + \ctxo_2
    \vs (e_1, e_2) : \tyo \otimes \tyw }                                \rname{\ruleitens}
}
\newcommand{\sdetensold}{
  \inferrule
  { \tctxo;\cso\mid \ctxw \vs e : \tyo_1' \otimes \tyo_2'  \\
    \tctxo;\cso\mid \ctxo, \cass{x}{R_1}{\tyo_1}, \cass{y}{R_2}{\tyo_2} \vs e' :
    \tyw \\\\
    \tctxo; \cso \models \tyo_1' \tless \tyo_1 \land \tyo_2' \tless \tyo_2 }
  { \tctxo;\cso\mid \ctxo + \smax{\req{R_1}}{\req{R_2}} \cdot \ctxw
    \vs \blet (x : \tyo_1, y : \tyo_2) = e \bin e' : \tyw}          \rname{\ruleetens}
}
\newcommand{\sdetens}{
  \inferrule
  { \tctxo;\cso\mid \ctxw \vs e : \tyo \otimes \tyw  \\
    \tctxo;\cso\mid \ctxo, \cass{x}{R_1}{\tyo}, \cass{y}{R_2}{\tyw} \vs e' : \tyt
  }
  { \tctxo;\cso\mid \ctxo + \smax{\req{R_1}}{\req{R_2}} \cdot \ctxw
    \vs \blet (x, y) = e \bin e' : \tyt}          \rname{\ruleetens}
}
\newcommand{\sdiamp}{
  \inferrule
  { \tctxo;\cso\mid \ctxo_1 \vs e_1 : \tyo \\ \tctxo;\cso\mid \ctxo_2 \vs e_2 : \tyw }
  { \tctxo;\cso\mid \smax{\ctxo_1}{\ctxo_2}
    \vs \pair{e_1}{e_2} : \tyo \amp \tyw }                              \rname{\ruleiamp}
}
\newcommand{\sdeamp}{
  \inferrule
  { \tctxo;\cso\mid \ctxo \vs e : \tyo_1 \amp \tyo_2 }
  { \tctxo;\cso\mid \ctxo  \vs \pi_i e : \tyo _i }                      \rname{\ruleeamp}
}
\newcommand{\sdiapp}{
  \inferrule
    { \tctxo;\cso\mid \ctxo, \cass{x}{R^\bullet}{\tyo} \vs  e : \tyw \and
      \tctxo; \cso \models R \geq \req{R^\bullet} }
  { \tctxo;\cso\mid \ctxo
    \vs  \lambda (\cass{x}{R}{\tyo}).~e : !_R \tyo \multimap \tyw }           \rname{\ruleiapp}
}
\newcommand{\sdeapp}{
  \inferrule
  { \tctxo;\cso\mid \ctxo \vs  e_1 : !_R \tyo \multimap \tyw \\\\
    \tctxo;\cso\mid \ctxw \vs  e_2 : \tyo'
    \\ \tctxo;\cso \vsub \tyo' \tless \tyo }
  { \tctxo;\cso\mid \ctxo + R\cdot\ctxw \vs e_1\;e_2 : \tyw }           \rname{\ruleeapp}
}
\newcommand{\sditapp}{
  \inferrule
    { \tctxo, i : \kappa ; \cso \mid \ctxo \vs e : \tyo \\ \text{$i$ fresh in
        $\cso$} }
  {  \tctxo;\cso \mid \ssup{i}{\ctxo}
    \vs \Lambda i :\kappa.\; e : \forall i : \kappa.\; \tyo }           \rname{\ruleitapp}
}
\newcommand{\sdetapp}{
  \inferrule
    { \tctxo ;\cso\mid \ctxo \vs e : \forall i : \kappa.\; \tyo \\ \tctxo \models
    S : \kappa}
  { \tctxo ;\cso \mid \ctxo \vs e[S] : \tyo[S/i] }                      \rname{\ruleetapp}
}
\newcommand{\sdfix}{
  \inferrule
  { \tctxo; \cso \mid \ctxo, \cass{\varone}{R}{\tyo} \vs e : \tyo' \\
    \tctxo; \cso \models \tyo' \tless \tyo }
  { \tctxo; \cso \mid \infty \cdot \ctxo \vs \bfix \varone : \tyo.\; e : \tyo } \rname{\rulefix}
}
\newcommand{\sdnate}{
  \inferrule
  { \tctxo;\cso \mid \ctxw \vs e : \N[S] \\
    \tctxo ;\cso, S = 0     \mid  \ctxo_0 \vs e_0 : \tyo_0 \\\\
    \tctxo, i : \natone ;\cso, S = i + 1 \mid \ctxo_s, \cass{n}{R}{\N[i]} \vs e_s :
    \tyo_s \\\\
    \tctxo; \cso, S = 0 \models \tyo_0 \tless \tyo \\
    \tctxo, i : \natone; \cso, S = i + 1 \models \tyo_s \tless \tyo }
  { \tctxo;\cso\mid \scase{S}{i}{\ctxo_0}{\ctxo_s} +
    \scase{S}{i}{0}{\req{R}} \cdot \ctxw
    \vs \bcase e \breturn \tyo
    \bof 0 \Rightarrow e_0 \mid n_{[i]} + 1 \Rightarrow e_s : \tyo }    \rname{\rulenate}
}
\begin{figure*}
  \centering
\begin{mathpar}
  \sdconst \and
  \sdvar \and
  \sditens \and
  \sdetens \and
  \sdiamp \and
  \sdeamp \and
  \sdiapp \and
  \sdeapp \and
  \sditapp \and
  \sdetapp \and
  \sdfix \and
  \sdnate \and

    \mathrm{Ectx(\ctxskel)} := \ctxw
    \quad \text{with} \quad
    \left\{
      \begin{array}{ll}
        \dom(\ctxskel) &= \dom(\ctxw) \\
        \ctxw(b) &\equiv \cempty{\_}{\_}
        \qquad
        \text{for all } b \in \dom(\ctxskel) \\
      \end{array}
    \right.
\end{mathpar}
  \caption{\DFuzz Type Judgment, Syntax-directed Version}
  \label{fig:dfuzz-syn-typing}
\end{figure*}

We first prove the system sound with respect the non syntax-directed
one.
\begin{lemma}[Syntax-directed soundness] \label{lem:sd-sound}
  If $\tctxo; \cso \mid \ctxo \vs \tmo : \tyo$ has a derivation, then $\tctxo;
  \cso \mid \ctxo \vd e : \tyo$.
\end{lemma}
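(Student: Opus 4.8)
The plan is to argue by structural induction on the given syntax-directed derivation of $\tctxo;\cso \mid \ctxo \vs \tmo : \tyo$, building a \DFuzz$_\Box$ derivation of the very same judgment (composition with the $\Box$-erasure correspondence proved above then gives soundness with respect to \DFuzz proper). For each syntax-directed rule I would invoke the induction hypothesis on its premises to get \DFuzz$_\Box$ derivations of the subjudgments, and then reassemble a \DFuzz$_\Box$ derivation of the conclusion. The guiding observation is that the syntax-directed system always computes a single ``best'' environment, whereas \DFuzz$_\Box$ has slack in its two subtyping rules (\rulesubl\ on environments, \rulesubr\ on the result type); so the reconstruction in the interesting cases amounts to inserting subtyping steps that absorb the discrepancy. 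The base cases $\ruleconstR$ and $\rulevar$ (together with the analogous rules for numeric literals and $\bsucc$) are immediate, since the corresponding \DFuzz$_\Box$ rules allow an arbitrary ambient environment---in particular the all-$\Box$ environment $\mathrm{Ectx}(\ctxskel)$, possibly extended with $\cass{x}{1}{\tyo}$. Throughout, the reconstruction relies on \Cref{lem:ext} (the ordering facts $\smax{R_1}{R_2}\ge R_1,R_2$, $\ssup{i}{R}\ge R$, and the conditional facts for the $\scase{}{}{}{}$ operator), \Cref{lem:ctx} (environment subtyping is compatible with $+$ and scalar multiplication), and the lenient treatment of $\Box$ in the \DFuzz$_\Box$ environment-subtyping rule (a right-hand binding annotated $\Box$ imposes no constraint).

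For the multiplicative and additive rules the reconstruction is routine. In $\ruleitens$, $\ruleeamp$ and $\ruleetapp$ the conclusion environment already matches a \DFuzz$_\Box$ rule verbatim; in $\ruleeapp$ one only needs a $\rulesubr$ step coercing the argument from $\tyo'$ to $\tyo$ along the premise $\tctxo;\cso\models\tyo'\tless\tyo$. In $\ruleiamp$ the conclusion $\smax{\ctxo_1}{\ctxo_2}$ is obtained by first weakening both induction-hypothesis derivations up to $\smax{\ctxo_1}{\ctxo_2}$ (legal since $\smax{\ctxo_1}{\ctxo_2}\tless\ctxo_k$ by \Cref{lem:ext}) and then applying $\amp$-introduction. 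Similarly $\ruleetens$ widens the two variable annotations $R_1,R_2$ up to the common $\smax{\req{R_1}}{\req{R_2}}$, which is never $\Box$, so the \DFuzz$_\Box$ side-condition is met; and $\ruleiapp$ widens the bound-variable annotation $R^\bullet$ up to the source annotation $R$, which is permitted because $\tctxo;\cso\models R\ge\req{R^\bullet}$ and $R\ne\Box$. For $\ruleitapp$ I would first weaken the induction-hypothesis derivation from $\ctxo$ up to $\ssup{i}{\ctxo}$ (legal under $\tctxo,i:\kappa$ by \Cref{lem:ext}); since $\ssup{i}{\ctxo}$ is $i$-free, the freshness side-condition of \DFuzz$_\Box$'s $\forall$-introduction holds and the rule applies. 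Finally $\rulefix$ is handled by a $\rulesubr$ step along $\tyo'\tless\tyo$ followed by widening the recursive-variable annotation up to $\infty$.

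The main obstacle is the dependent pattern-matching rule $\rulenate$. Its syntax-directed conclusion carries the environment $\scase{S}{i}{\ctxo_0}{\ctxo_s} + \scase{S}{i}{0}{\req{R}}\cdot\ctxw$, while the \DFuzz$_\Box$ rule insists on a \emph{single} environment $\ctxo^\ast$ common to both branches and a \emph{single} pattern-variable annotation $R^\ast$ with $i \# R^\ast$ and $R^\ast\ne\Box$, producing conclusion $\ctxo^\ast + R^\ast\cdot\ctxw$. I would take $\ctxo^\ast := \scase{S}{i}{\ctxo_0}{\ctxo_s}$ and $R^\ast := \scase{S}{i}{0}{\req{R}}$; both are $i$-free because the conditional binds $i$ in its successor component, and $R^\ast\ne\Box$ by the $\Box$-extension of the conditional operator. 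The key fact is that the conditional collapses under the refinements introduced by the rule: under $\cso, S=0$ one has $\lb\scase{S}{i}{\ctxo_0}{\ctxo_s}\rb_\rho = \lb\ctxo_0\rb_\rho$, while under $\cso, S=i+1$ one has $\lb\scase{S}{i}{\ctxo_0}{\ctxo_s}\rb_\rho = \lb\ctxo_s\rb_\rho$ and $\lb\scase{S}{i}{0}{\req{R}}\rb_\rho = \lb\req{R}\rb_\rho$ (this is exactly the semantics of the conditional, since $S = i+1$ forces the scrutinee to $\rho(i)+1 \ge 1$ and shifts the bound variable back to $\rho(i)$). Reading these semantic equalities as environment-subtyping facts in \DFuzz$_\Box$ and combining them with the branch-type premises $\tctxo;\cso,S=0\models\tyo_0\tless\tyo$ and $\tctxo,i;\cso,S=i+1\models\tyo_s\tless\tyo$ and the lenient $\Box$-handling, I can convert the induction-hypothesis derivation for $e_0$ into $\tctxo;\cso,S=0\mid\ctxo^\ast\vd e_0:\tyo$ and the one for $e_s$ into $\tctxo,i:\natone;\cso,S=i+1\mid\ctxo^\ast,\cass{n}{R^\ast}{\N[i]}\vd e_s:\tyo$. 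With the induction-hypothesis derivation $\tctxo;\cso\mid\ctxw\vd e:\N[S]$ for the scrutinee and the side-conditions $i \# R^\ast$ and $R^\ast\ne\Box$ in hand, applying the \DFuzz$_\Box$ rule $\rulenate$ delivers exactly $\tctxo;\cso\mid\ctxo^\ast + R^\ast\cdot\ctxw\vd\bcase e\ldots:\tyo$, which is the judgment we want.

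I expect this last case to demand by far the most care: it simultaneously reproduces the branch-merging of $\amp$-introduction, the avoidance of the freshly introduced index variable $i$ as in $\forall$-introduction, and the conditional bookkeeping on the scrutinee size $S$, and it is the place where the precise semantics of the three extended sensitivity constructs and the $\Box$-extension of the environment operations all have to line up. The other cases are, by contrast, short exercises in inserting a $\rulesubl$ or $\rulesubr$ step and appealing to \Cref{lem:ext} and \Cref{lem:ctx}.
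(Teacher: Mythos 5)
Your proposal is correct and takes essentially the same approach as the paper: structural induction on the syntax-directed derivation, reconstructing a \DFuzz$_\Box$ derivation by inserting $\rulesubl$/$\rulesubr$ steps to absorb the discrepancy between the ``best'' environment computed bottom-up and the one the declarative rule expects, with \Cref{lem:ext} supplying the ordering facts for $\smax{-}{-}$, $\ssup{-}{-}$, and $\scase{-}{-}{-}{-}$, and the $\Box$-extension of those operators guaranteeing the non-$\Box$ side conditions. Your case analysis, including the delicate $\rulenate$ case with $\ctxo^\ast := \scase{S}{i}{\ctxo_0}{\ctxo_s}$ and $R^\ast := \scase{S}{i}{0}{\req{R}}$, matches the paper's proof point for point.
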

\begin{proof}
  By induction on the derivation proving $\tctxo; \cso \mid \ctxo \vs
  \tmo : \tyo$.
  \begin{description}
    \item[Case: $\rulevar$]
      \[
        \sdvar
      \]
      Immediate, the same rule applies.
    \item[Case: \ruleitens]
      \[
        \sditens
      \]
      Immediate by induction; the same rule applies.
    \item[Case: \ruleetens]
      \[
        \sdetens
      \]
      By induction, we have
      \[
        \tctxo;\cso\mid \ctxw \vd e : \tyo \otimes \tyw
        \qand
        \tctxo;\cso\mid \ctxo, \cass{x}{R_1}{\tyo}, \cass{y}{R_2}{\tyo} \vd e'
        : \tyt
      \]
      By \Cref{lem:ext}, $\tctxo; \cso \models
      \smax{\req{R_1}}{\req{R_2}} \geq \req{R_i}$ for $i = 1,
      2$. Abbreviating $R^\bullet := \smax{\req{R_1}}{\req{R_2}}$ and
      applying weakening we have:
      \[
        \tctxo;\cso\mid \ctxo, \cass{x}{R^\bullet}{\tyo},\cass{y}{R^\bullet}{\tyw}  \vd e'
        : \tyt
      \]
      with $R^\bullet \neq \Box$ so we have exactly what we need to apply \ruleetens.
    \item[Case: $\ruleiamp$]
      \[
        \sdiamp
      \]
      By induction, we have
      \[
        \tctxo; \cso \mid \ctxo_1 \vd e_1 : \tyo
        \qand
        \tctxo; \cso \mid \ctxo_2 \vd e_2 : \tyw.
      \]
      By \Cref{lem:ext}, we have
      \[
        \tctxo; \cso \models \smax{\ctxo_1}{\ctxo_2} \tless \ctxo_1
        \qand
        \tctxo; \cso \models \smax{\ctxo_1}{\ctxo_2} \tless \ctxo_2.
      \]
      By weakening, we can derive
      \[
        \tctxo; \cso \mid \smax{\ctxo_1}{\ctxo_2} \vd e_1 : \tyo
        \qand
        \tctxo; \cso \mid \smax{\ctxo_1}{\ctxo_2}  \vd e_2 : \tyw,
      \]
      when we can conclude by \ruleiamp.
    \item[Case: \ruleeamp]
      \[
        \sdeamp
      \]
      Immediate; the same rule applies.
    \item[Case: $\ruleiapp$]
      \[
        \sdiapp
      \]
      By induction, we have
      \[
        \tctxo; \cso \mid \ctxo, \cass{x}{R^\bullet}{\tyo} \vd e : \tyw
      \]
      and we know $R \neq \Box$ and:
      \[
        \tctxo; \cso \models R \geq R^\bullet.
      \]
      By weakening, we have
      \[
        \tctxo; \cso \mid \ctxo, x : !_{R} \tyo \vd e : \tyw,
      \]
      and we can conclude by \ruleiapp.
    \item[Case: $\ruleeapp$]
      \[
        \sdeapp
      \]
      By induction, we have
      \[
        \tctxo; \cso \mid \ctxo \vd e_1 : !_R \tyo \lin \tyw
        \qand
        \tctxo; \cso \mid \ctxw \vd e_2 : \tyo'
      \]
      and we also know
      \[
        \tctxo; \cso \models \tyo' \tless \tyo .
      \]
      By subtyping on the right, we can derive
      \[
        \tctxo; \cso \mid \ctxw \vd e_2 : \tyo,
      \]
      and we can conclude with \ruleeapp.
    \item[Case: $\ruleitapp$]
      \[
        \sditapp
      \]
      By induction, we have
      \[
        \tctxo; i : \kappa; \cso \mid \ctxo \vd e : \tyo
      \]
      and $i$ fresh in $\cso$. By \Cref{lem:ext}, we have
      \[
        \tctxo; \cso \models \ssup{i}{\ctxo} \tless \ctxo,
      \]
      and so by weakening, we have
      \[
        \tctxo, i : \kappa; \cso \mid \ssup{i}{\ctxo} \vd e : \tyo.
      \]
      Now, we can conclude with \ruleitapp.
    \item[Case: $\ruleetapp$]
      \[
        \sdetapp
      \]
      Immediate; the same rule applies.
    \item[Case: \rulefix]
      \[
        \sdfix
      \]
      By induction; we have
      \[
        \tctxo; \cso \mid \ctxo, x : !_R \tyo \vd e : \tyo'.
      \]
      But we also have $\tctxo; \cso \models \tyo' \tless \tyo$. By subtyping,
      we get
      \[
        \tctxo; \cso \mid \ctxo, x : !_R \tyo \vd e : \tyo
      \]
      and we can conclude with \rulefix.
    \item[Case: $\rulenate$]
      \[
        \sdnate
      \]
      By induction, we have
      \begin{align*}
        &\tctxo; \cso \mid \ctxw \vd e : \N[S] \\
        &\tctxo; \cso, S = 0 \mid \ctxo_0 \vd e_0 : \tyo_0 \\
        &\tctxo, i : \natone; \cso, S = i + 1 \mid \ctxo_s, n : !_R \N[i] \vd e_s
        : \tyo_s .
      \end{align*}
      By \Cref{lem:ext}, we have
      \begin{align*}
        &\tctxo; \cso, S = 0 \models \scase{S}{i}{\ctxo_0}{\ctxo_s} \tless \ctxo_0
        \\
        &\tctxo, i : \natone; \cso, S = i + 1 \models \scase{S}{i}{\ctxo_0}{\ctxo_s} \tless \ctxo_s
        \\
        &\tctxo, i : \natone; \cso, S = i + 1 \models \scase{S}{i}{0}{\req{R}} \geq \req{R}
      \end{align*}
      with $\req{R} \neq \Box$, and we also know
      \begin{align*}
        &\tctxo; \cso, S = 0 \models \tyo_0 \tless \tyo
        \\
        &\tctxo, i : \natone; \cso, S = i + 1 \models \tyo_s \tless \tyo.
      \end{align*}

      By subtyping on the left and right, we have
      \begin{align*}
        &\tctxo; \cso \mid \ctxw \vd e : \N[S] \\
        &\tctxo; \cso, S = 0 \mid \scase{S}{i}{\ctxo_0}{\ctxo_s} \vd e_0 : \tyo
        \\
        &\tctxo, i : \natone; \cso, S = i + 1 \mid
        \scase{S}{i}{\ctxo_0}{\ctxo_s}, n : !_{R^\bullet} \N[i] \vd e_s : \tyo,
      \end{align*}
      where $R^\bullet = \scase{S}{i}{0}{\req{R}}$.
      We can then conclude by \rulenate.
      \[
      \dfuzzbnate
      \]
  \end{description}
\end{proof}

We now prove completeness, that is to say, for every derivation in the
original system, the syntax-directed one will have a derivation,
possibly even a better from a subtype point of view.

We first need a few auxiliary lemmas:
\begin{lemma} \label{lem:sd-change}
  Suppose that $\tctxo; \cso \mid \ctxo \vs \tmo : \tyo$ is
  derivable. Then, for any logically equivalent $\csw$ such that
  $\tctxo \models \cso \Leftrightarrow \csw$, there is a derivation of
  $\tctxo; \csw \mid \ctxo \vs \tmo : \tyo$ with the same height.
\end{lemma}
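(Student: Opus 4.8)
The plan is to proceed by a straightforward induction on the derivation of $\tctxo; \cso \mid \ctxo \vs \tmo : \tyo$, replacing $\cso$ by $\csw$ uniformly throughout. The crucial observation is that $\cso$ enters the syntax-directed judgment in only two ways: (i) as the ambient constraint context of semantic side conditions of the shape $\tctxo; \cso \models R \ge R'$ or $\tctxo; \cso \models \tyo' \tless \tyo$ (these occur in \ruleiapp, \ruleeapp, \rulefix{} and \rulenate); and (ii) as something that gets extended, always by conjunction with a fresh equation, in the branches of \rulenate. For (i), recall that such side conditions are defined purely in terms of the denotation $\semu{\cso}_\rho$; the hypothesis $\tctxo \models \cso \Leftrightarrow \csw$ says exactly that $\semu{\cso}_\rho$ and $\semu{\csw}_\rho$ agree for every valuation $\rho \in \mathsf{val}(\tctxo)$, so each side condition holds for $\csw$ iff it holds for $\cso$, and these premises transfer unchanged.

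For (ii), the only rule modifying the constraint context is \rulenate, which type-checks the zero branch under $\cso, S = 0$ and the successor branch under $\cso, S = i+1$ with $i$ fresh. If $\tctxo \models \cso \Leftrightarrow \csw$, then plainly $\tctxo \models (\cso, S = 0) \Leftrightarrow (\csw, S = 0)$ and $\tctxo, i : \natone \models (\cso, S = i + 1) \Leftrightarrow (\csw, S = i + 1)$, since conjoining the same formula to two equivalent constraint sets preserves equivalence. Thus the induction hypothesis applies to the two branch subderivations with the extended contexts, and the accompanying subtyping side conditions $\tctxo; \cso, S = 0 \models \tyo_0 \tless \tyo$, etc., transfer as in case (i). The freshness side conditions (``$i$ fresh in $\cso$'') in \rulenate{} and \ruleitapp{} cause no trouble: any $\csw$ well-formed under $\tctxo$ mentions only variables of $\tctxo$, so a variable chosen fresh for $\tctxo$ is automatically fresh for $\csw$ as well.

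All remaining rules (\ruleconstR, \rulevar, \ruleitens, \ruleetens, \ruleiamp, \ruleeamp, \ruleitapp, \ruleetapp) either do not touch $\cso$ at all or simply thread it unchanged into their premises, so the induction hypothesis applies directly and the same rule is re-applied with $\csw$ in place of $\cso$. Since the transformation is rule-for-rule --- each node of the original derivation is mapped to a node built with the very same rule --- the resulting derivation of $\tctxo; \csw \mid \ctxo \vs \tmo : \tyo$ has exactly the same height. There is no real obstacle here; the only points requiring (very mild) care are checking that the semantic side conditions are insensitive to the syntactic presentation of the constraint context and that context extension preserves logical equivalence, both immediate from the denotational definition of $\models$ given in \Cref{sec:dfuzz}.
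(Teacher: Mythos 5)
Your proof is correct and follows essentially the same strategy as the paper's: induction on the derivation, observing that the constraint context is used only semantically (in side conditions of the form $\tctxo;\cso\models\cdot$) so that logical equivalence lets each premise transfer, and noting that the construction is rule-for-rule so heights are preserved. You are somewhat more thorough than the paper's two-sentence proof — you explicitly handle the extension of $\cso$ in \rulenate{} and the freshness side conditions — but these are exactly the points the paper's proof silently relies on, so the approach is the same.
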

\begin{proof}
  By induction on the derivation. The only place the constraint
  environment is used is when checking constraints of the form
  \[
  \tctxo; \cso \models R \geq R'.
  \]
  But since $\csw$ and $\cso$ are logically equivalent, we evidently
  have
  \[
  \tctxo; \csw \models R \geq R'
  \]
  as well.
\end{proof}

\begin{lemma}[Inner Weakening for the Syntax-directed system]
  \label{lem:inner-weakening}
  Assume a derivation $\ctxo, \cass{x}{R}{\tyo} \vs e : \tyw$, a type
  $\tyo'$ such that $\tyo' \tless \tyo$. Then, there exists a type
  $\tyw'$ and a derivation $\ctxo, \cass{x}{R}{\tyo'} \vs e : \tyw'$
  such that $\tyw' \tless \tyw$.
\end{lemma}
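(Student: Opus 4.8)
The plan is to prove this by induction on the derivation of $\ctxo, \cass{x}{R}{\tyo} \vs e : \tyw$, following the same case analysis as \Cref{lem:sd-sound}. By $\alpha$-renaming we may assume throughout that $x$ is distinct from every variable bound inside $e$, so that replacing the type of $x$ does not disturb the freshness side-conditions of the binder rules (\ruleiapp, \ruleitapp, \ruleetens, \rulefix, \rulenate); note also that $x$ need not occur in $e$ at all.

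For the base cases, \ruleconstR{} is immediate since the conclusion type is independent of $x$, so we take $\tyw' = \tyw$. For \rulevar{} there are two sub-cases: if the looked-up variable is $x$ itself, then $\tyw = \tyo$ and after the substitution we obtain $\ctxo, \cass{x}{1}{\tyo'} \vs x : \tyo'$, and we take $\tyw' = \tyo'$, which satisfies $\tyw' \tless \tyw$ by assumption; otherwise $x$ is an unused ($\Box$-annotated) entry of the skeleton, its type is irrelevant, and $\tyw' = \tyw$ works.

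For the inductive cases I would, in each rule, apply the induction hypothesis to those premises whose environment contains $x$ (feeding in the type $\tyo'$), thereby obtaining sub-derivations whose output types are subtypes of the original ones, and then re-apply the same rule. Three points need attention. First, in the rules that combine environments (\ruleitens, \ruleiamp, \ruleeapp, \ruleetens, \ruleitapp, \rulefix, \rulenate) the new environments stay composable, since changing only the type of $x$ leaves every output skeleton unchanged — so common variables keep equal types and $x$ is assigned $\tyo'$ everywhere — and the new output environment and type relate to the old ones by monotonicity of $+$, scaling, $\smax{\cdot}{\cdot}$, $\ssup{i}{\cdot}$ and $\scase{\cdot}{\cdot}{\cdot}{\cdot}$ (\Cref{lem:ctx}, \Cref{lem:ext}) together with the congruence rules \struleamp, \struletens, \strulelin, \struleforall. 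Second, in the elimination rules (\ruleeamp, \ruleeapp, \ruleetens, \ruleetapp) the induction hypothesis may hand back a subtype whose head constructor has been refined; here I would use that the subtyping relation of \Cref{fig:dfuzz-sub} is \emph{structural}, i.e.\ a subtype of $\tyo_1 \amp \tyo_2$ (resp.\ $!_R \tyo_1 \lin \tyo_2$, $\tyo_1 \otimes \tyo_2$, $\forall i : \kappa.\,\tyo_1$) is again of that shape with its components related by subtyping (contravariantly in the argument of $\lin$), so that the elimination still fires, possibly with a refined index such as a smaller sensitivity on $\lin$; in the \ruleetapp{} case I would also invoke \Cref{lem:subsub} to commute the instantiation $[S/i]$ with subtyping. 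Third, the rules carrying subtyping side-conditions (\ruleeapp, \rulefix, \rulenate) need those conditions re-established for the refined types, which follows by composing the subtyping facts supplied by the induction hypothesis with the original side-conditions using transitivity of $\tless$.

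The hard part will be the elimination-rule cases: everything there rests on the invertibility (structurality) of \EDFuzz~subtyping, so that a refined subterm type can still drive the same elimination, on tracking how the refined indices introduced by the induction hypothesis propagate into the output environment, and on re-proving the subtyping side-conditions; by contrast, the environment-combining cases and all the remaining bookkeeping (skeletons, freshness, $\Box$-annotations, the congruence steps) are routine.
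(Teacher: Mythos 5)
Your overall strategy — induction on the syntax-directed derivation, using monotonicity of the environment operations ($+$, scaling, $\smax{\cdot}{\cdot}$, $\ssup{i}{\cdot}$, $\scase{\cdot}{\cdot}{\cdot}{\cdot}$), invertibility (structurality) of the subtyping relation to make the elimination rules fire on a refined head type, and transitivity of $\tless$ to re-establish the side-conditions — is the same approach the paper takes. The paper's own proof is a three-sentence sketch that leaves all of these points implicit, so your account is considerably more thorough, and it identifies the genuinely delicate cases (the eliminations) correctly.

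There is, however, one point you half-see but do not flag, and which the paper's sketch misses as well: the lemma as written claims the modified derivation has environment \emph{exactly} $\ctxo, \cass{x}{R}{\tyo'}$, with $\ctxo$ and the annotation $R$ unchanged. That is too strong. If $\tyo$ is a function type $!_{R_0}\tyt_1\lin\tyt_2$ and $x$ is applied inside $e$, then replacing $\tyo$ by a subtype $\tyo' = !_{R_0'}\tyt_1'\lin\tyt_2'$ with $R_0' < R_0$ shrinks the scaling factor used in \ruleeapp, so the sensitivities of the \emph{other} bindings in the output environment (and, transitively, of $x$ itself) can strictly decrease; the syntax-directed system is deterministic, so the stronger claim that a derivation with literally the old $\ctxo$ and $R$ exists is simply false in such cases. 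Your phrase ``the new output environment and type relate to the old ones by monotonicity'' is in fact proving the correct, weaker statement — that the new environment is a subtype of $\ctxo, \cass{x}{R}{\tyo'}$ with the same skeleton — but you present it as if it discharged the lemma as stated. You should make the weakened conclusion explicit: it is what \Cref{lem:sd-comp} actually needs in the \ruleetens case (the subsequent environment-subtyping check composes through it), but as written your proof would leave a reader believing the exact-environment claim, which does not hold.
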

\begin{proof}
  By induction over the typing derivation. The base cases are
  immediate. In the induction hypothesis we get to pick the
  appropriate type and we get a better type in all the cases.
\end{proof}

\begin{lemma}[Syntax-directed completeness] \label{lem:sd-comp}
  If $\tctxo; \cso \mid \ctxo \vd \tmo : \tyo$ has a derivation, then there
  exists $\ctxo', \tyo'$ such that $\tctxo; \cso \mid \ctxo'
  \vs \tmo : \tyo'$ has a derivation, $\tctxo; \cso \models \ctxo \tless
  \ctxo'$, $\tctxo; \cso \models \tyo' \tless \tyo$.
\end{lemma}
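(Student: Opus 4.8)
The plan is to argue by induction on the derivation of $\tctxo;\cso\mid\ctxo\vd\tmo:\tyo$ in the non-syntax-directed system \DFuzz$_\Box$, producing at each step a ``best'' syntax-directed derivation whose environment is above and whose type is below those of the given one. The two subtyping rules are precisely where this slack is created and absorbed: if the last rule is $\rulesubr$ from $\tctxo;\cso\mid\ctxo\vd\tmo:\tyt$ with $\tctxo;\cso\vsub\tyt\tless\tyo$, the induction hypothesis already yields $\ctxo',\tyt'$ with $\tctxo;\cso\mid\ctxo'\vs\tmo:\tyt'$, $\tctxo;\cso\models\ctxo\tless\ctxo'$ and $\tctxo;\cso\models\tyt'\tless\tyt$, and transitivity of $\tless$ gives $\tyt'\tless\tyo$; the case $\rulesubl$ is symmetric, using transitivity on environments. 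Throughout we will freely use that $\tless$ preserves environment skeletons --- part of \Cref{lem:ctx} --- so that the operations $+$, $R\cdot(-)$, $\smax{-}{-}$, $\ssup{i}{-}$ and $\scase{S}{i}{-}{-}$ appearing in the syntax-directed rules always stay well-defined on the inferred environments.

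For the structural rules the recipe is uniform: apply the induction hypothesis to each premise, rebuild the corresponding syntax-directed rule, and discharge the resulting index-level obligations using the monotonicity of the environment operations. The leaves ($\rulevar$, $\ruleconstR$, $\ruleconstN$) and the rules that touch neither environment nor result type ($\ruleeamp$, $\ruleetapp$) are immediate, with $\ruleetapp$ additionally needing \Cref{lem:subsub} to commute the substitution $[S/i]$ with the subtyping fact from the hypothesis. For $\ruleitens$ and $\ruleeapp$ the inferred environments are combined by addition and scaling, and \Cref{lem:ctx} lifts the subenvironment inequalities through these operations; in $\ruleeapp$ the argument is inferred at some $\tyo'$ that need not match the function's domain, but the syntax-directed $\ruleeapp$ carries $\tyo'\tless\tyo$ as a side condition, so the gap closes by transitivity. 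In $\ruleiamp$ the two premises may be inferred in different environments $\ctxo_1',\ctxo_2'$, and we output $\smax{\ctxo_1'}{\ctxo_2'}$, whose correctness ($\smax{\ctxo_1'}{\ctxo_2'}\tless\ctxo_i'$) is exactly the first clause of \Cref{lem:ext}. The cases $\ruleiapp$, $\rulefix$ and $\ruleetens$ invoke \Cref{lem:inner-weakening}: the body was typed in the original derivation against the \emph{annotated} argument type, but the hypothesis may only supply a weaker $\tyo'\tless\tyo$, so we re-run the body under the annotation and recover a result type still below the original, while the accompanying sensitivity side-conditions (e.g.\ $R\geq\req{R^\bullet}$) hold because the inferred $R^\bullet$ is pointwise below the required sensitivity. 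Finally $\ruleitapp$ is a small instance of the avoidance problem: the hypothesis gives a derivation over $\tctxo,i:\kappa$ in environment $\ctxo'$, we output $\ssup{i}{\ctxo'}$, and \Cref{lem:ext} yields both $\ssup{i}{\ctxo'}\tless\ctxo'$ and --- since $i$ is fresh in $\ctxo$ --- $\ctxo\tless\ssup{i}{\ctxo'}$, with \Cref{lem:sd-change} used to normalise $\cso$ so that $i$ does not occur in it.

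The substantive case is $\rulenate$, which bundles together the difficulties of $\ruleiamp$ and $\ruleitapp$ with the conditional character of dependent matching. From the hypotheses we obtain $\ctxo_0'\vs e_0:\tyo_0'$ under $\cso,S=0$, $\ctxo_s',\cass{n}{R'}{\N[i]}\vs e_s:\tyo_s'$ under $\cso,S=i+1$ (the latter possibly needing \Cref{lem:inner-weakening} to reconcile the inferred type of $n$ with its annotation), and $\ctxw'\vs e:\N[S]$. We then output the environment $\scase{S}{i}{\ctxo_0'}{\ctxo_s'} + \scase{S}{i}{0}{\req{R'}}\cdot\ctxw'$ together with the annotated type $\tyo$, and verify the two conditional subtyping obligations ($\scase{S}{i}{\ctxo_0'}{\ctxo_s'}\tless\ctxo$ under $\cso,S=0$ and under $\cso,S=i+1$) using the third clause of \Cref{lem:ext}, the conditional reasoning principle for $\scase{S}{i}{-}{-}$, with \Cref{lem:sd-change} handling the bookkeeping around the freshness of $i$ and the added constraints. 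I expect this to be the main obstacle, both because the three phenomena --- least upper bound, avoidance of the refinement variable, and conditioning on the match --- must interact correctly, and because one has to check that every environment operation appearing in the output is defined, which again reduces to skeleton preservation under $\tless$ from \Cref{lem:ctx}. The $\ruleiapp$ and $\rulefix$ cases are a distant second, chiefly because \Cref{lem:inner-weakening} is itself a separate induction; everything else reduces to transitivity of $\tless$ together with the monotonicity lemmas already established.
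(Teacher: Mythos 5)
Your overall strategy matches the paper exactly: induction on the \DFuzz$_\Box$ derivation, producing at each node a best syntax-directed derivation with larger environment and smaller type, absorbing the two subtyping rules by transitivity, and discharging the index-level side conditions with \Cref{lem:ctx}, \Cref{lem:ext} and \Cref{lem:subsub}. Your identification of $\smax{-}{-}$, $\ssup{i}{-}$ and $\scase{S}{i}{-}{-}$ as the devices making the ``best'' derivation expressible is also correct.

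However, you misattribute the role of \Cref{lem:inner-weakening}. You claim it is needed in \ruleiapp, \rulefix and \ruleetens because ``the hypothesis may only supply a weaker $\tyo'\tless\tyo$'' for the binding of the bound variable. This contradicts the paper's deliberate restriction of environment subtyping (``in the environments we require subtyping to preserve the internal type,'' \Cref{fig:dfuzz-sub}): the IH always returns a context binding $x$ at exactly the \emph{same} type $\tyo$ as the input skeleton, only with a possibly smaller annotation $R'$. So in \ruleiapp and \rulefix no inner weakening is required at all; one just reads off $R\geq\req{R'}$ (resp.\ $\tyo'\tless\tyo$ for the fix body) from the IH. Inner weakening \emph{is} used in \ruleetens, but for a different reason than you give: the IH re-infers the scrutinee at $\tyo'\otimes\tyw'$ with $\tyo'\tless\tyo,\,\tyw'\tless\tyw$, while the IH for the body gives bindings still at $\tyo,\tyw$; since the syntax-directed \ruleetens insists the binding types match the components of the inferred pair type, one must re-type the body under the smaller binding types, which is where \Cref{lem:inner-weakening} enters and yields a possibly smaller result type $\tyt''\tless\tyt'$. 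The same caveat applies to your claim that \rulenate ``possibly'' needs inner weakening for $n:\N[i]$: it does not, for the same reason. Finally, your account of \Cref{lem:sd-change} is slightly off: in \rulenate its actual purpose is to replace $S$ by $S'$ in the constraint environments of the branch derivations (the IH on the scrutinee only yields $e:\N[S']$ with $\cso\models S=S'$ by inversion), not to normalize away $i$; the paper does not invoke it in \ruleitapp at all. None of these misattributions break the proof strategy, but they indicate a misreading of the restricted environment subtyping, which is precisely what keeps most cases simpler than you expect.
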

\begin{proof}
  By induction on the derivation proving $\tctxo; \cso \mid \ctxo \vd \tmo :
  \tyo$.
  \begin{description}
    \item[Case: $\rulesubl$]
      \[
        \dfuzzbsubl
      \]
      Immediate, by induction; the desired environment is $\ctxw$.
    \item[Case: $\rulesubr$]
      \[
        \dfuzzbsubr
      \]
      Immediate, by induction; the desired subtype is $\tyo$.
    \item[Case: $\rulevar$]
      \[
        \dfuzzbvar
      \]
      Immediate; the same rule applies.
    \item[Case: \ruleitens]
      \[
        \dfuzzbitens
      \]
      By induction, we have $\ctxo_1', \ctxo_2', \tyo', \tyw'$ such that
      \[
        \tctxo; \cso \models \ctxo_1 \tless \ctxo_1' \land \ctxo_2 \tless \ctxo_2'
        \qand
        \tctxo; \cso \models \tyo' \tless \tyo \land \tyw' \tless \tyw
      \]
      and derivations
      \[
        \tctxo; \cso \mid \ctxo_1' \vs e_1 : \tyo'
        \qand
        \tctxo; \cso \mid \ctxo_2' \vs e_2 : \tyw'.
      \]
      Then we can conclude by \ruleitens, since \Cref{lem:ctx} shows
      \[
        \tctxo; \cso \models \ctxo_1 + \ctxo_2 \tless \ctxo_1' + \ctxo_2'
        \qand
        \tctxo; \cso \models \tyo' \otimes \tyw' \tless \tyo \otimes \tyw.
      \]
    \item[Case: \ruleetens]
      \[
        \dfuzzbetens
      \]
      By induction and inversion on the subtype relation, we have
      $\ctxw', \ctxo', \tyo', \tyo'', \tyw', \tyw'', \tyt', R_1, R_2$ such that
      \begin{align*}
        &\tctxo; \cso \models \ctxw \tless \ctxw'\\
        &\tctxo; \cso \models \ctxo, \cass{x}{R}{\tyo}, \cass{y}{R}{\tyw}
        \tless \ctxo', \cass{x}{R_1}{\tyo''}, \cass{y}{R_2}{\tyw''} \\
        &\tctxo; \cso \models \tyo' \tless \tyo \land \tyw' \tless \tyw \\
      \end{align*}
      this implies $\tyo' \tless \tyo''$, $\tyw' \tless \tyw''$, $R
      \ge \req{R_1}$, and $R \ge \req{R_2}$.
      We have derivations:
      \[
        \tctxo;\cso\mid \ctxw' \vs e : \tyo' \otimes \tyw'
        \qand
        \tctxo;\cso\mid \ctxo', \cass{x}{R_1}{\tyo''}, \cass{y}{R_2}{\tyw''} \vs e' :
        \tyt'
      \]
      By \Cref{lem:inner-weakening}, we have a derivation:
      \[
        \tctxo;\cso\mid \ctxo', \cass{x}{R_1}{\tyo'}, \cass{y}{R_2}{\tyw'} \vs e' :
        \tyt''
      \]
      with $\tyt'' \tless \tyt'$.
      Hence, we can produce a syntax-directed derivation now:
      \[
      \tctxo; \cso \mid \ctxo' + \smax{\req{R_1'}}{\req{R_2'}} \cdot
      \ctxw' \vs \blet (x, y) = e \bin e' : \tyt''.
      \]

      By \Cref{lem:ext}, we have that $\tctxo; \cso
      \models R \geq \smax{\req{R_1'}}{\req{R_2}}$ and by \Cref{lem:ctx},
      \[
        \tctxo; \cso \models \ctxo + R \cdot \ctxw \tless \ctxo' +
        \smax{\req{R_1'}}{\req{R_2}} \cdot \ctxw',
      \]
      so we are done: the environment $\ctxo' + \smax{\req{R_1'}}{\req{R_2'}} \cdot \ctxw'$ and
      subtype $\tyw''$ suffice.
    \item[Case: $\ruleiamp$]
      \[
        \dfuzzbiamp
      \]
      By induction, there exists
      \begin{align*}
        \tctxo; \cso \models \ctxo \tless \ctxo_1' &\qand
        \tctxo; \cso \models \ctxo \tless \ctxo_2' \\
        \tctxo; \cso \models \tyo' \tless \tyo &\qand
        \tctxo; \cso \models \tyw' \tless \tyw
      \end{align*}
      such that
      \[
        \tctxo; \cso \mid \ctxo_1' \vs e_1 : \tyo'
        \qand
        \tctxo; \cso \mid \ctxo_2' \vs e_2 : \tyw'.
      \]
      By \ruleiamp, we have
      \[
        \tctxo; \cso \mid \smax{\ctxo_1'}{\ctxo_2'} \vs \pair{e_1}{e_2} : \tyo'
        \amp \tyw'.
      \]
      We are done, since by \Cref{lem:ext,lem:ctx},
      \[
        \tctxo; \cso \models \tyo' \amp \tyw' \tless \tyo \amp \tyw
        \qand
        \tctxo; \cso \models \ctxo \tless \smax{\ctxo_1'}{\ctxo_2'} \tless
        \ctxo_i'.
      \]
      So, the desired environment is $\smax{\ctxo_1'}{\ctxo_2'}$, and the desired
      subtype is $\tyo' \amp \tyw'$.
    \item[Case: \ruleeamp]
      \[
        \dfuzzbeamp
      \]
      Immediate, by induction.
    \item[Case: $\ruleiapp$]
      \[
        \dfuzzbiapp
      \]
      By induction, there exists
      \[
        \tctxo; \cso \models \ctxo, \cass{x}{R}{\tyo} \tless \ctxo', x : !_{R'} \tyo
        \qand
        \tctxo; \cso \models \tyw' \tless \tyw
      \]
      such that
      \[
        \tctxo; \cso \mid \ctxo', \cass{x}{R'}{\tyo} \vs e : \tyw'.
      \]
      By inversion on the subtype relation, we have
      \[
        \tctxo; \cso \models R \geq \req{R'} \land \tyw' \tless \tyw.
      \]
      and we are done, since
      \[
        \tctxo; \cso \models !_{\req{R'}} \tyo \lin \tyw' \tless !_R \tyo \lin \tyw
        \qand
        \tctxo; \cso \models \ctxo \tless \ctxo'.
      \]
      \[
      \sdiapp
      \]
    \item[Case: $\ruleeapp$]
      \[
        \dfuzzbeapp
      \]
      By induction, there exists $\ctxo', \ctxw', R', \tyo', \tyw', \tyo''$
      such that
      \begin{align*}
        &\tctxo; \cso \models \ctxo \tless \ctxo' \\
        &\tctxo; \cso \models \ctxw \tless \ctxw' \\
        &\tctxo; \cso \models !_{R'} \tyo' \lin \tyw' \tless !_R \tyo \lin \tyw
        \\
        &\tctxo; \cso \models \tyo'' \tless \tyo,
      \end{align*}
      and derivations
      \[
        \tctxo; \cso \mid \ctxo' \vs e_1 : !_{R'} \tyo' \lin \tyw'
        \qand
        \tctxo; \cso \mid \ctxw' \vs e_2 : \tyo''.
      \]
      By inversion on the subtype relation, we have
      \[
        \tctxo; \cso \models R \geq R'
        \qand
        \tctxo; \cso \models \tyo'' \tless \tyo \tless \tyo'
        \qand
        \tctxo; \cso \models \tyw' \tless \tyw.
      \]
      By \Cref{lem:ext}, the environment $\ctxo' + R' \cdot \ctxw'$ and subtype
      $\tyw'$ suffice.
    \item[Case: $\ruleitapp$]
      \[
        \dfuzzbitapp
      \]
      By induction, there exist
      \[
        \tctxo, i : \kappa; \cso \models \tyo' \tless \tyo
        \qand
        \tctxo, i : \kappa; \cso \models \ctxo \tless \ctxo'
      \]
      such that
      \[
        \tctxo, i : \kappa ; \cso \mid \ctxo' \vs e : \tyo'.
      \]
      Thus, we have the derivation
      \[
        \tctxo; \cso \mid \ssup{i}{\ctxo'} \vs \Lambda i:\kappa .\; e : \forall
        i:\kappa .\; \tyo'
      \]
      and
      \[
        \tctxo; \cso \models \forall i : \kappa .\; \tyo' \tless \forall i :
        \kappa .\; \tyo.
      \]
      By \Cref{lem:ext}, we actually have
      \[
        \tctxo; \cso \models \ctxo \tless  \ssup{i}{\ctxo'} \tless \ctxo',
      \]
      so the environment $\ssup{i}{\ctxo'}$ and subtype $\forall i:\kappa .\;
      \sigma'$ suffices.
    \item[Case: $\ruleetapp$]
      \[
        \dfuzzbetapp
      \]
      By induction, there exists
      \[
        \tctxo; \cso \models \ctxo \tless \ctxo'
        \qand
        \tctxo; \cso \models \forall i:\kappa.\; \tyo' \tless \forall i : \kappa
        .\; \tyo
      \]
      such that
      \[
        \tctxo; \cso \mid \ctxo' \vs e : \forall i:\kappa .\; \tyo'.
      \]
      So, we have a derivation
      \[
        \tctxo; \cso \mid \ctxo'' \vs e[S/i] : \tyo'[S/i].
      \]
      By \Cref{lem:subsub},
      \[
        \tctxo; \cso \models \tyo'[S/i] \tless \tyo[S/i],
      \]
      so the environment $\ctxo'$ and subtype $\tyo'[S/i]$ suffice.
    \item[Case: \rulefix]
      \[
        \dfuzzbfix
      \]
      By induction, we have
      \[
        \tctxo; \cso \models \ctxo, x :!_\infty \tyo \tless \ctxo', x :!_R
        \tyo
        \qand
        \tctxo; \cso \models \tyo' \tless \tyo
      \]
      such that
      \[
        \tctxo; \cso \mid \ctxo', x : !_R \tyo \vs e : \tyo'.
      \]
      We can then conclude by \rulefix: the desired environment is $\infty \cdot
      \ctxo'$ and the desired type is $\tyo$.
    \item[Case: $\rulenate$]
      \[
        \dfuzzbnate
      \]
      By induction, there exists
      \[
        \tctxo; \cso \models \ctxw \tless \ctxw'
        \qand
        \tctxo; \cso \mid \ctxw' \vs e : \N[S']
        \qand
        \tctxo; \cso \models \N[S'] \tless \N[S].
      \]
      By inversion, $\tctxo; \cso \models S = S'$. Also by induction,
      \begin{align*}
        &\tctxo; \cso, S = 0 \models \ctxo \tless \ctxo_0' \\
        &\tctxo, i : \natone; \cso, S = i + 1 \models \ctxo, n : !_R \N[i] \tless
        \ctxo_s', n : !_{R'} \N[i] \\
        &\tctxo; \cso, S = 0 \models \tyo_0' \tless \tyo \\
        &\tctxo, i : \natone; \cso, S = i + 1 \models \tyo_s' \tless \tyo
      \end{align*}
      such that
      \begin{align*}
        &\tctxo; \cso, S = 0 \mid \ctxo_0' \vs e_0 : \tyo_0' \\
        &\tctxo, i : \natone; \cso, S = i + 1 \mid \ctxo_s', n : !_{R'} \N[i] \vs
        e_s : \tyo_s'.
      \end{align*}
      By \Cref{lem:sd-change}, we also have derivations
      \begin{align*}
        &\tctxo; \cso, S' = 0 \mid \ctxo_0' \vs e_0 : \tyo_0' \\
        &\tctxo, i : \natone; \cso, S' = i + 1 \mid \ctxo_s', n : !_{R'} \N[i] \vs
        e_s : \tyo_s'
      \end{align*}
      since $\tctxo; \cso \models S = S'$.

      Hence, we have a derivation
      \begin{align*}
        &\tctxo ; \cso \mid \scase{S'}{i}{\ctxo_0'}{\ctxo_s'} + R^\bullet \cdot
        \ctxw' \\
        &\vs \bcase e \breturn \tyo \bof 0 \Rightarrow e_0 \mid n_{[i]} + 1
        \Rightarrow e_s : \tyo,
      \end{align*}
      where $R^\bullet$ is $\scase{S'}{i}{0}{\req{R'}}$.   We have
      \begin{align*}
        &\tctxo; \cso, S' = 0 \models \scase{S'}{i}{\ctxo_0'}{\ctxo_s'} \tless
        \ctxo_0' \\
        &\tctxo, i : \natone; \cso, S' = i + 1 \models
        \scase{S'}{i}{\ctxo_0'}{\ctxo_s'} \tless \ctxo_s'
      \end{align*}
      so by \Cref{lem:ext}
      \[
        \tctxo; \cso \models \ctxo \tless \scase{S'}{i}{\ctxo_0'}{\ctxo_s'},
      \]
      and
      \[
        \tctxo, i : \natone; \cso, S' = i + 1 \models R \geq R^\bullet \geq \req{R'}
        \qand
        \tctxo, \cso \models R \geq R^\bullet
      \]
      thanks to $R \neq \Box$.

      By weakening, we have
        \begin{align*}
          &\tctxo; \cso \mid \ctxw' \vs e : \N[S'] \\
          &\tctxo; \cso, S = 0 \mid \scase{S'}{i}{\ctxo_0'}{\ctxo_s'} \vs e_0 :
          \tyo \\
          &\tctxo, i : \natone; \cso, S' = i + 1 \mid
          \scase{S'}{i}{\ctxo_0'}{\ctxo_s'}, n : !_{R^\bullet} \N[i]
          \vs e_s : \tyo,
        \end{align*}
      so we can conlude with \rulenate.  The environment
      $\scase{S'}{i}{\ctxo_0'}{\ctxo_s'} + R^\bullet \cdot \ctxw'$ and type
      $\tyo$ suffice (recall that $\tctxo; \cso \models R \geq R^\bullet$, and
      $\tctxo; \cso \models R \cdot \ctxw \tless R^\bullet \cdot \ctxw'$ by
      \Cref{lem:ctx}).
  \end{description}

\end{proof}

\subsection{Algorithm Proofs}
\begin{theorem}[Algorithmic Soundness]
  Suppose $\algin{\tctxo}{\cso}{\ctxskel}{\tmo} \produces
  \algout{\ectxo}{\cctxo}{\ctxo}{\tyo}$. Then, there is a derivation
  of $\tctxo; \cso;\ctxo \vs \tmo : \tyo$.
\end{theorem}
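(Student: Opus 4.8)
The plan is to prove this by induction on the derivation of $\algin{\tctxo}{\cso}{\ctxskel}{\tmo} \produces \algout{\ectxo}{\cctxo}{\ctxo}{\tyo}$ --- equivalently, by induction on $\tmo$, since the rules of \Cref{fig:dfuzza-alg} are syntax-directed --- and, in each case, to apply the homonymous rule of the syntax-directed system of \Cref{fig:dfuzz-syn-typing}. The underlying observation is that the algorithm was designed exactly as a transcription of that system: for each algorithmic rule, once the side outputs $\ectxo$ and $\cctxo$ are ignored, what remains is literally the matching syntax-directed rule, and the checks $\tctxo;\cso\models\cdots$ that the algorithm performs on a successful run are precisely the $\models$ premises of that rule. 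So in every case the induction hypotheses for the recursive calls, together with those checks, supply exactly the premises needed, and re-applying the rule concludes.

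The cases then fall into three groups. The leaves ($\rulevar$, $\ruleconstR$, and the analogous rules for numerals) are immediate, because $\mathrm{Ectx}(\ctxskel)$ is by definition the all-$\Box$ environment occurring in the corresponding syntax-directed rules. The rules whose shape matches verbatim ($\ruleeapp$, $\ruleetapp$, $\ruleeamp$, $\ruleitens$, the successor rule) need only the induction hypotheses followed by re-application; for $\ruleeapp$, the subtyping premise $\tctxo;\cso\models\tyo'\tless\tyo$ of the syntax-directed rule is exactly the check recorded by the algorithm. For $\ruleiapp$ and $\rulefix$ the only extra point is that the algorithm checks $\tctxo;\cso\models R\ge\req{R'}$, respectively $\tctxo;\cso\models\tyo'\tless\tyo$, which are precisely the side conditions of the corresponding syntax-directed rules (here one also invokes the $\Box$/$\req{\cdot}$ conventions so that the annotation wrapping agrees on both sides). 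The last group is the ``interesting'' one: the rules whose output environments use the extended operations --- $\ruleiamp$ outputs $\smax{\ctxo_1}{\ctxo_2}$, $\ruleitapp$ outputs $\ssup{i}{\ctxo}$, $\ruleetens$ outputs $\ctxo + \smax{\req{R_1}}{\req{R_2}}\cdot\ctxw$, and $\rulenate$ outputs $\scase{S}{\ctxo_0}{i}{\ctxo_s} + \scase{S}{0}{i}{\req{R'}}\cdot\ctxw$. Fortunately these are again literal copies of the syntax-directed rules of the same name, so the proof remains ``apply the induction hypotheses, then re-apply the rule''; what must be discharged is the freshness of $i$ in $\ruleitapp$ and $\rulenate$ (maintained by construction) and the well-definedness of each environment operation appearing in an output, which follows from a skeleton-preservation invariant --- the recursive calls of a given rule run against the same skeleton ($\ctxskel$, or a common extension such as $\ctxskel, x:\tyo$), so the environments they return share a common skeleton, making $+$, $\smax{-}{-}$ and the environment-level $\mathbf{case}$ defined.

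I expect the main obstacle to be organizational rather than mathematical: one must first pin down the algorithm as a formal relation --- the precise accumulation of $\ectxo$ and $\cctxo$, the $\Box$/$\req{\cdot}$ conventions on environment operations, the freshness discipline, and the skeleton-preservation invariant --- so that the correspondence with \Cref{fig:dfuzz-syn-typing} really is rule-for-rule; after that, each inductive case is a one-line appeal to the hypotheses. A secondary subtlety concerns the collected constraint $\cctxo$: if the algorithm is read in its constraint-collecting mode (with no subtyping oracle), the statement should be understood as asserting the existence of a syntax-directed derivation provided $\cctxo$ is valid under $\tctxo;\cso$, in which case each case must additionally extract from $\cctxo$ the sub-conjuncts used at that node --- routine, but possible only once the accumulation of $\cctxo$ is stated precisely.
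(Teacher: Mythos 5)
Your proof is correct and takes essentially the same approach as the paper: an induction on the algorithmic derivation, observing that each algorithmic rule is a direct transcription of the homonymous syntax-directed rule (the paper's appendix proof is exactly this, showing a few representative cases). The caveat you raise about a constraint-collecting reading is unnecessary here, since the algorithm as presented already discharges the $\models$ checks inline as premises, but it does not affect the correctness of your argument.
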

\begin{proof}
  By induction on the algorithmic derivations we see that every
  algorithmic step has an exact correspondence with a syntax-directed
  derivation. We do a few representative cases:
  \begin{description}
  \item[Case \rulevar]
    \[
      \dfuzzaalgvar
    \]

    \[
      \sdvar
    \]
  \item[Case \ruleeapp]
    \[
      \dfuzzaalgeapp
    \]

    \[
      \sdeapp
    \]
  \item[Case \ruleetens]
    \[
      \dfuzzaalgetens
    \]

    \[
      \sdetens
    \]
  \end{description}
\end{proof}
\begin{theorem}[Algorithmic Completeness]
  Suppose
  $\tctxo;\cso; \ctxo \vs \tmo : \tyo$
  is derivable. Then
  $\algin{\tctxo}{\cso}{\ctxskel}{\tmo} \produces
    \algout{\ectxo}{\cctxo}{\ctxo}{\tyo}$.
\end{theorem}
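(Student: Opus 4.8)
The plan is to prove this by induction on the derivation of the syntax-directed judgment $\tctxo;\cso \mid \ctxo \vs \tmo : \tyo$, reading the statement with $\ctxskel = \sclean{\ctxo}$: one shows that the algorithm, run on the annotation-stripped context $\sclean{\ctxo}$ and the (annotated) term $\tmo$, produces exactly the environment $\ctxo$ and the type $\tyo$ (together with whatever extra-context $\ectxo$ and constraint $\cctxo$ the rules dictate). The crucial structural fact, already used in the proof of Algorithmic Soundness, is that the algorithmic system of \Cref{fig:dfuzza-alg} is a mechanical transcription of the syntax-directed system of \Cref{fig:dfuzz-syn-typing}: the rules are in one-to-one correspondence; the inputs fed to each algorithmic premise are the annotation-stripped inputs of the matching syntax-directed premise; and the environment and type assembled in each algorithmic conclusion are given by the very same extended-sensitivity expressions as in the syntax-directed conclusion — e.g. $\ssup{i}{\ctxo}$ in \ruleitapp, $\smax{\ctxo_1}{\ctxo_2}$ in \ruleiamp, $\ctxo + \smax{\req{R_1}}{\req{R_2}} \cdot \ctxw$ in \ruleetens, and $\scase{S}{i}{\ctxo_0}{\ctxo_s} + \scase{S}{i}{0}{\req{R}} \cdot \ctxw$ in \rulenate.

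In the inductive step we case on the last syntax-directed rule. The induction hypothesis applied to each sub-derivation says the algorithm produces the correct environment and type on that subterm, run on the skeleton obtained by stripping the sub-derivation's environment; for the rules that introduce bindings ($\lambda$, $\blet$, $\bcase$) this skeleton is $\sclean{\ctxo}$ extended by the binding skeleton $x:\tyo$ recorded in the term annotation, which is exactly the skeleton appearing in the algorithmic premise. Each algorithmic rule additionally carries oracle side conditions of the form $\tctxo;\cso \models \cdots$ (the subtyping check $\tyo' \tless \tyo$ in \ruleeapp, the kinding $\tctxo \models S : \kappa$ in \ruleetapp, $R \geq \req{R^\bullet}$ in \ruleiapp, the fixpoint subtyping in \rulefix, and the two subtyping checks in \rulenate); each of these occurs verbatim as a premise of the matching syntax-directed rule, hence holds, so the algorithmic rule fires. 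Since the environment and type produced by the algorithmic conclusion are the same functions of the sub-results as in the syntax-directed conclusion, they coincide with $\ctxo$ and $\tyo$. Throughout, we maintain the invariant $\sclean{\ctxo} = \ctxskel$: it holds at the leaves (\rulevar, \ruleconstR, where $\sclean{\mathrm{Ectx}(\ctxskel)} = \ctxskel$) and is preserved because every environment operation — addition, scalar multiplication, $\smax{\cdot}{\cdot}$, $\ssup{i}{\cdot}$, and $\scase{\cdot}{\cdot}{\cdot}{\cdot}$ — preserves skeletons, which also guarantees every composite environment is defined.

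The genuinely delicate part is the bookkeeping around context skeletons and the $\Box$ (absent-annotation) marks. One must verify uniformly, for every construct, that the $\Box$-aware environment operations used by the algorithm agree with those used by the syntax-directed system — including how $\req{\cdot}$ interacts with $\scase{\cdot}{\cdot}{\cdot}{\cdot}$ and $\smax{\cdot}{\cdot}$ on environments — and that padding a skeleton to a full environment via $\mathrm{Ectx}(\ctxskel)$ yields precisely the $\Box$-marked environment a syntax-directed derivation is forced to use at the leaves. Once this correspondence is pinned down, the remainder is a routine case analysis that simply reverses the Algorithmic Soundness argument. A minor additional point: in \ruleitapp, \ruleiamp, \ruleetens and \rulenate the algorithm also emits fresh existential placeholders and constraints relating them to the inferred environments; since the statement says nothing about $\ectxo$ and $\cctxo$, it suffices to observe that the rule produces them as prescribed (and that they are satisfied by the concrete environments coming from the sub-derivations).
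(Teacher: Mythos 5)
Your proposal takes essentially the same route as the paper: induction on the syntax-directed derivation, using the fact that each algorithmic rule is a mechanical transcription of the corresponding syntax-directed rule (with $\ctxskel = \sclean{\ctxo}$ as input and the oracle side conditions appearing verbatim as syntax-directed premises). The paper just spells out a few representative cases (\ruleeapp, \rulefix, \ruleetens, \rulenate) and leaves the skeleton/$\Box$ bookkeeping implicit, which you rightly flag as the only part that deserves closer inspection.
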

\begin{proof}
  By induction on the syntax-directed derivation. The proof is mostly
  direct, we show a few representative cases.
  \begin{description}
    \item[Case \ruleeapp]
      \[
        \sdeapp
      \]
      By induction, we have derivations
      \[
        \algin{\tctxo}{\cso}{\ctxskel}{\tmo_1}  \produces
        \algout{\ectxo_1}{\cctxo_1}{\ctxo}{!_\sensitermone \tyo \lin \tyw}
        \qand
        \algin{\tctxo}{\cso}{\ctxwskel}{\tmo_2}  \produces
        \algout{\ectxo_2}{\cctxo_2}{\ctxw}{\tyo'}.
      \]
      Note that $\ctxskel = \ctxwskel$ for the syntax-directed derivation to be
      defined, so we can apply the algorithmic rule \ruleeapp:
      \[
        \dfuzzaalgeapp
      \]

    \item[Case \rulefix]
      \[
        \sdfix
      \]
      By induction, we have
      \[
        \algin{\tctxo}{\cso}{\ctxskel, \varone : \tyo}{\tmo}  \produces
        \algout{}{}{\ctxo, \cass{x}{R}{\tyo}}{\tyo'}
      \]
      and we can apply the algorithm rule \rulefix:
      \[
        \dfuzzaalgfix
      \]
    \item[Case \ruleetens]
      \begin{align*}
        \sdetens
      \end{align*}
      We know that $\ctxskel = \ctxwskel$.
      By induction, we know that:
      \begin{align*}
        &\algin{\tctxo}{\cso}{\ctxskel}{\tmo} \produces
        \algout{\ectxo}{\cctxo}{\ctxw}{\tyo_1' \otimes \tyo_2'} \\
        &\algin{\tctxo}{\cso}{\ctxskel, \varone_1 : \tyo_1, \varone_2 : \tyo_2}{\tmo'} \produces
        \algout{\ectxo}{\cctxo}{\ctxo, \cass{x}{R_1}{\tyo_1}, \cass{y}{R_2}{\tyo_2}}{\tyw}
      \end{align*}
      and we know $\tctxo; \cso \models \tyo_1' \tless \tyo_1 \land \tyo_2'
      \tless \tyo_2$, so we apply the algorithmic case \ruleetens:
      \[
        \dfuzzaalgetens
      \]
    \item[Case \rulenate]
      \[
        \sdnate
      \]
      We know that $\ctxskel=\ctxwskel$. By induction, we know that:
      \begin{align*}
        &\algin{\tctxo}{\cso}{\ctxskel}{\tmo} \produces
        \algout{}{}{\ctxw}{\N[\sizeitermone]} \\
        &\algin{\tctxo}{\cso, \sizeitermone=0}{\ctxskel}{\tmo_0} \produces
        \algout{}{}{\ctxo_0}{\tyo_0} \\
        &\algin{\tctxo,\sizeivarone : \natone}{\cso,\sizeitermone=\sizeivarone+1}{\ctxskel,
          \varone : \N[i]}{\tmo_s} \produces
        \algout{}{}{\ctxo_s, \cass{\varone}{R'}{\N[i]}}{\tyo_s}
      \end{align*}
      and we know
      \[
        \tctxo; \cso, S = 0 \models \tyo_0 \tless \tyo
        \qand
        \tctxo, i : \natone; \cso, S = i + 1 \models \tyo_s \tless \tyo.
      \]
      We can conclude with the algorithmic rule \rulenate:
      \[
        \dfuzzaalgcasenat
      \]
  \end{description}
\end{proof}

\section{Minimal Types}

\begin{lemma} \label{ex:minimal}
  \DFuzz does not have minimal types.
\end{lemma}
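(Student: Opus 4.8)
The plan is to exhibit a single term $e$ and a fixed free‑variable skeleton such that the set of valid typings of $e$, ordered by subtyping, has no least element; the obstruction will be exactly the absence of a minimal polynomial upper bound for $\max(r^2,r)$ noted in \Cref{sec:challenge}. First I would add to the fixed set of primitives a scaling operation $\mathsf{mul} : \forall s : \realone.\ !_s \R \lin \R$, the real‑valued analogue of the natural‑number scaling $\times$ of \Cref{sec:dfuzz}; like $\mathsf{add\_noise}$ it has an evidently sound interpretation, namely $\mathsf{mul}[c]$ is the function $x \mapsto c\cdot x$, which is $c$‑sensitive. Working in the index context $\phi = (r:\realone)$ with empty constraints $\top$ and the one‑variable skeleton $(x:\R)$, I would take $e = \pair{e_1}{e_2}$ with $e_1 = \mathsf{mul}[r]\,(\mathsf{mul}[r]\,x)$ and $e_2 = \mathsf{mul}[r]\,x$. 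Direct use of the typing rules gives $\phi;\top \mid \cass{x}{r^2}{\R} \vd e_1 : \R$ and $\phi;\top \mid \cass{x}{r}{\R} \vd e_2 : \R$, and completeness of inference shows these are the least environments for $e_1$ and $e_2$.

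Next I would characterize the typings of $e$. Since $\R$ and $\R\amp\R$ have no proper sub‑ or supertypes, the result type of $e$ is forced to be $\R\amp\R$, and I claim that $\phi;\top \mid \cass{x}{R}{\R} \vd e : \R\amp\R$ is \DFuzz-derivable if and only if $\phi;\top \models R \geq r^2$ and $\phi;\top \models R \geq r$; equivalently, $\semu{R}_\rho \geq \max(\rho(r)^2,\rho(r))$ for every valuation $\rho$ of $r$ over $\Rextp$. The ``if'' direction follows from rule $\ruleiamp$ applied to the two sub‑derivations above, after weakening the environment of the first (resp.\ second) along $\phi;\top\models R\geq r^2$ (resp.\ $\phi;\top\models R\geq r$). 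For the ``only if'' direction one can invert the derivation through the $\amp$‑introduction and the two $\mathsf{mul}$‑applications down to the variable rule; more economically, since \DFuzz is contained in \EDFuzz one can invoke Algorithmic Completeness in \EDFuzz, noting that the inference algorithm on input $\phi,\top,(x:\R),e$ returns the environment $\cass{x}{\smax{r^2}{r}}{\R}$ and type $\R\amp\R$, so any derivation forces $\phi;\top \models R \geq \smax{r^2}{r}$. Hence $e$ admits a minimal typing relative to $(\phi,\top,(x:\R))$ exactly when there is a \DFuzz sensitivity expression $R^\star$ over $r$ that pointwise dominates $\max(r^2,r)$ and satisfies $\phi;\top \models R^\star \leq R$ for every other such expression.

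Finally I would rule out such an $R^\star$, which reduces to the elementary degree argument already sketched in \Cref{sec:challenge}. A \DFuzz sensitivity expression over $r$ denotes a polynomial in $r$ with nonnegative coefficients; a minimal one cannot use an $\infty$ coefficient, since it would then be infinite for large $r$ and so fail to lie below $r\cdot r + 1$. Both
\[ r\cdot r + 1 \qand r\cdot r + r \]
are \DFuzz expressions that upper‑bound $\max(r^2,r)$ on $\Rextp$, so a minimal $R^\star$ would satisfy $R^\star \leq r^2+1$ and $R^\star \leq r^2+r$. From $R^\star \leq r^2+1$ and $R^\star \geq r^2$ it follows that $R^\star$ has degree exactly $2$ with leading coefficient $1$, so $R^\star = r^2 + b\,r + c$ with $b,c\geq 0$; then $R^\star \leq r^2+1$ forces $b=0$ and $c\leq 1$, and letting $r\to 0$ in $R^\star \leq r^2+r$ forces $c=0$. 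Thus $R^\star = r^2$, contradicting $R^\star \geq r$ at $r=1/2$. Therefore $e$ has no minimal typing, so \DFuzz has no minimal types. The one delicate point I expect is the ``only if'' half of the characterization in the second step --- excluding any exotic derivation that types $e$ with $x$ at a strictly smaller sensitivity --- which the completeness theorem handles cleanly; the remainder is routine.
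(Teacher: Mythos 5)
Your proof is mathematically correct, and the degree/asymptotics argument at the end is sound. But it takes a genuinely different route from the paper, and there is one point worth flagging.

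The paper's witness term is \emph{definable in core \DFuzz}: it builds a natural-number scaling function $\mathbf{use} : \forall i:\natone.\; !_0 \N[i] \lin !_i \R \lin \R$ via dependent recursion, wraps it in a closed term
$\Lambda i:\natone.\;\lambda e : \N[i], x:\R.\;\pair{x}{\mathbf{use}[i]\;e\;x+\mathbf{use}[i]\;e\;x}$,
and observes that the resulting type $\forall i:\natone.\;!_0\N[i]\lin !_{q}\R\lin\R\amp\R$ would need a minimal $q$ with $q\ge 1$ and $q\ge 2i$; it then rules this out using the two upper bounds $2i+1$ and $i^2+1$ over $\N$. The minimality is phrased with respect to the \emph{result type} of a closed term in an empty context, and the whole argument lives inside the index sort $\natone$ because natural-number scaling \emph{is} expressible via $\bcase$ and $\bfix$.

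You instead postulate a new real-indexed primitive $\mathsf{mul} : \forall s:\realone.\ !_s\R\lin\R$, type an open term in the skeleton $(x:\R)$ with a free index variable $r:\realone$, and run the degree argument over $\realone$ with bounds $r^2+1$ and $r^2+r$. This is essentially the $\max(r^2,r)$ example the paper mentions informally in \Cref{sec:challenge}, promoted to a full proof, and the reduction of ``only if'' to Algorithmic Completeness is a clean touch. The cost is that real-number scaling is \emph{not} definable in \DFuzz---there is no way to recurse over a real index---so your lemma as written only establishes non-existence of minimal types for ``\DFuzz plus $\mathsf{mul}$,'' whereas the paper's proof applies to \DFuzz exactly as presented. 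You can repair this without changing the argument by moving $\mathsf{mul}$ into the environment skeleton as an ordinary free variable $\cass{\mathsf{mul}}{\infty}{\forall s:\realone.\ !_s\R\lin\R}$ (the lemma is syntactic, so no inhabitation is needed); the minimality over the $x$-binding still fails for exactly the reason you give. As written, though, you should either make that move or explicitly weaken the conclusion to the extended system, since the paper goes out of its way to avoid adding any primitive by choosing a $\natone$-indexed witness instead of an $\realone$-indexed one.
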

\begin{proof}
  Using dependent recursion, we can define a function $\mathbf{use} :
  \forall i:\natone.\; !_0 \N[i] \lin !_i \R \lin \R$ that
  multiplies a real number by a natural number. Consider the following
  term $e$:
  \[
    \Lambda i:\natone.\; \lambda e : \N[i], x:\R.\;
    \pair{x}{\mathbf{use}[i]\; e\; x+\mathbf{use}[i]\; e \;x}.
  \]
  Evidently, the minimal type should have the form
  \[
    \emptyset; \emptyset \mid \emptyset \vd  e : \forall i:\natone.\;
    !_0 \N[i] \lin !_{q} \R
    \lin \R \amp \R
  \]
  for some sensitivity expression $q$. What should $q$ be? Note that
  $q(i)$ can be \emph{a priori} a polynomial in $i$ with positive,
  real coefficients. By inspecting the typing rules, we find that
  \[
    i : \natone; \emptyset \models q(i) \geq 1 \land q(i) \geq 2i.
  \]
  Furthermore, the subtyping judgments show that
  \[
    \forall i:\natone.\; !_0 \N[i] \lin !_{a} \R \lin \R \amp \R \tless \forall i:\natone.\;
    !_0 \N[i] \lin !_{b} \R \lin \R \amp \R
  \]
  is equivalent to $i : \natone; \emptyset \models a \leq b$.  Suppose that
  $q(i)$ is the minimal such polynomial for the sensitivity in the type of $e$.
  If the degree of $q$ is strictly greater than $1$, then the polynomial $2i + 1$
  satisfies $2i + 1 \geq 1 \land 2i + 1 \geq 2i$, and is eventually smaller than
  $q$ for large $i$ (since $q$ has higher degree and has non-negative
  coefficients).

  On the other hand, $q$ can't have degree $0$ since it must be larger
  than $2i$ for all $i$. If $q$ has degree $1$, then its leading
  coefficient must be at least 2. Now, the polynomial $i^2 + 1$
  satisfies $i^2 + 1 \geq 1 \land i^2 + 1 \geq 2i$. Finally, note
  \[
    q \geq 2i + 1 \geq i^2 + 1
  \]
  for $i \in \{0, 1\}$. Hence, there is no minimal sensitivity $q$, and hence no
  minimal type for $e$.
\end{proof}

\section{Auxiliary Lemmas}
\begin{lemma}[Standard Annotations]
  %
  Assume annotations in a term $e$ range over regular
  sensitivities and $\tctxo; \cso \mid \ctxo \vs e : \tyo$.
  Then:
  \begin{itemize}
    \item $\tyo$ has no extended sensitivities; and
    \item all the constraints are of the form $\tctxo;\cso\models R
      \geq R'$ where $R$ is a standard sensitivity term.
  \end{itemize}
  This directly implies \Cref{lem:annot}.
\end{lemma}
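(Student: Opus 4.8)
The plan is to prove the first bullet by structural induction on the syntax-directed derivation $\tctxo;\cso\mid\ctxo\vs e:\tyo$ of \Cref{fig:dfuzz-syn-typing}, reading off the second bullet as we go. I would strengthen the induction hypothesis to also assume that every internal type occurring in $\ctxo$ is standard --- i.e.\ free of $\smax{-}{-}$, $\ssup{i}{-}$ and the index-level $\mathbf{case}$ --- and to conclude both that $\tyo$ is standard and that every sensitivity assertion checked anywhere in the derivation (including the leaves obtained by decomposing subtyping judgments via \Cref{fig:dfuzz-sub}) has the form $\tctxo';\cso'\models R_1\geq R_2$ with $R_1$ standard. The strengthened environment assumption holds at the root of any well-formed instance, since the leaves use the environment $\mathrm{Ectx}(\ctxskel)$, whose internal types are exactly those of the context skeleton.

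The key structural fact I would lean on is that every environment operation in the rules --- addition $\ctxo_1+\ctxo_2$, scaling $R\cdot\ctxo$, and the extended $\smax{\ctxo_1}{\ctxo_2}$, $\ssup{i}{\ctxo}$, and index-level $\mathbf{case}$ on environments --- acts only on the sensitivity annotations, never on the internal types. Hence the strengthened hypothesis propagates to every premise, and any type that a rule inserts into an environment of a premise (the argument type in \ruleiapp, the two components in \ruleetens, the type $\N[i]$ in \rulenate) is either a standard annotation of $e$ or a subtype that is standard by induction. Dually, every output type is assembled from induction-hypothesis-standard subtypes, from standard annotations (the argument type in \ruleiapp, the fixpoint type in \rulefix, the $\bcase$ return annotation in \rulenate), and from literal base types; the only non-compositional step, \ruleetapp, substitutes a size term $S$, and since size terms contain none of the three extended constructs, $\tyo[S/i]$ stays standard. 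This disposes of the first bullet.

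For the second bullet I would observe that the only syntax-directed rule checking a bare sensitivity assertion is \ruleiapp, where it reads $\tctxo;\cso\models R\geq\req{R^\bullet}$ with $R$ the $\lambda$'s sensitivity annotation, hence standard. All remaining obligations are subtyping judgments $\tctxo';\cso'\vsub\tyo'\tless\tyo$ (in \ruleeapp, \rulefix and \rulenate), and by the previous paragraph both compared types are standard there (one is an annotation or the expected type, the other an inferred subtype). A short secondary induction over \Cref{fig:dfuzz-sub} then shows that decomposing a subtyping judgment between standard types yields only leaves $R_1\geq R_2$ with both sides standard --- the only base rule emitting a sensitivity leaf is \strulelin, whose $R$, $R'$ are exactly the labels on the two function types compared. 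Then \Cref{lem:annot} follows at once, because the algorithm of \Cref{fig:dfuzza-alg} faithfully transcribes the syntax-directed system (Algorithmic Soundness and Completeness), so the type it outputs and the constraints it emits coincide with those of the matching syntax-directed derivation.

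The obstacle I expect is keeping the strengthened invariant honest: I must check carefully that no environment operation lets an extended sensitivity --- which may legitimately sit in an \emph{annotation} produced by $\smax$, $\ssup$, or the index $\mathbf{case}$ --- leak into an internal type, and that the subtyping decomposition respects variance, so that in a contravariant argument position of a function type the left-hand side of the resulting $\geq$ is still drawn from a standard subterm. The remaining cases are routine once those two points are pinned down.
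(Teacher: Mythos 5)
Your proposal is correct and takes essentially the same approach as the paper's proof: structural induction on the syntax-directed derivation of \Cref{fig:dfuzz-syn-typing}, together with the observation that the only bare sensitivity assertion is the one in \ruleiapp, whose left-hand side is the user-supplied (hence standard) annotation. You are more careful than the paper's very terse proof (``clear by inspecting the rules''): you make explicit the invariant that internal types in the environment are standard, and you spell out the secondary induction over the subtyping rules of \Cref{fig:dfuzz-sub} showing that decomposing a subtyping judgment between two standard types yields only leaves with standard sensitivities on both sides --- details the paper leaves implicit.
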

\begin{proof}
  The first point is clear by inspecting the rules in
  \Cref{fig:dfuzz-syn-typing}: by induction, the type of any
  expression has only regular sensitivities. The second point is also
  clear: in all subtype checks in \Cref{fig:dfuzz-syn-typing}, both
  types have no extended sensitivities by the first point. The only
  place where we check against an extended sensitivity is in rule
  \ruleiapp, with constraint
  \[
    \tctxo;\cso\models R \geq R'.
  \]
  Here, the $R$ is a standard sensitivity term since it is an
  annotation, but the $R'$ may be an extended sensitivity.
\end{proof}

\aa{There were left out of the appendix thanks to some stupid
  mistake... We should make sure they fit nice in here}

\begin{lemma}[Constraint Simplification]
  Suppose $Q \cred Q'$, and suppose $\phi \vdash Q$ and $\phi \vdash Q'$.  Then,
  for any standard valuation $\rho \in\mathsf{val}(\phi)$, we have $\llb Q
  \rrb_\rho = \llb Q' \rrb_\rho$.
\end{lemma}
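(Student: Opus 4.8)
The plan is to induct on the derivation of $Q \cred Q'$, following the rule cases of \Cref{fig:constr-red}. Three of the six rules---$\text{Plus}$, $\text{Mult}$, $\text{Red}$---are congruence rules, and for these the result is immediate. After inverting the well-formedness judgments $\phi \vdash Q$ and $\phi \vdash Q'$ I obtain well-formedness of the immediate subterms, so the induction hypothesis applies; and since $\llb \cdot \rrb_\rho$ is defined compositionally on $+$, $\cdot$, and on each $\bclub$ branch, substituting the semantically-equal subterm leaves the value unchanged. The only wrinkle is $\text{Red}$, where the reduction is inside a branch $\csens{\phi_0}{\Phi_0}{Q}$: there I apply the induction hypothesis at the \emph{extended} context $\phi \cup \phi_0$ (legitimate because $\phi \vdash \bclub\{\dots\}$ gives $\phi, \phi_0 \vdash Q$ and $\phi, \phi_0 \vdash Q'$), getting $\llb Q \rrb_{\rho \cup \rho_0} = \llb Q' \rrb_{\rho \cup \rho_0}$ for every $\rho_0 \in \mathsf{val}(\phi_0)$, which is exactly what is needed to conclude that the two $\bclub$ terms agree.

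The substantive cases are $\text{Flat}$, $\text{CPlus}$, and $\text{CMult}$. For each, I would expand both sides using the definition of $\llb \bclub\{ \csens{\phi_i}{\Phi_i}{R_i} \}_i \rrb_\rho$ as the maximum (with $\max \emptyset := 0$) of the value sets $S_i := \{ \llb R_i \rrb_{\rho \cup \rho_i} \mid \rho_i \in \mathsf{val}(\phi_i),\ \llb \Phi_i \rrb_{\rho \cup \rho_i} \}$, then match the two sides via three ingredients. First, well-formedness of the $\bclub$ term forces the bound-variable environments $\phi, \phi_1, \dots, \phi_n$ to be pairwise disjoint, so a valuation of a union $\phi_i \cup \phi_j'$ factors uniquely as a pair $(\rho_i, \rho_j')$ and $\llb \Phi_i \land \Phi_j' \rrb_{\rho \cup \rho_i \cup \rho_j'} \iff \llb \Phi_i \rrb_{\rho \cup \rho_i} \land \llb \Phi_j' \rrb_{\rho \cup \rho_j'}$. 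Second, each $R_i$ mentions only variables of $\phi$ and $\phi_i$, so enlarging the valuation by $\rho_j'$ does not change $\llb R_i \rrb$. Third, for nonnegative extended reals one has $(\max_i a_i) + (\max_j b_j) = \max_{i,j}(a_i + b_j)$, $(\max_i a_i)\cdot(\max_j b_j) = \max_{i,j}(a_i \cdot b_j)$, and nested maxima flatten, $\max_{\rho_0}\bigl(\max_{i,\rho_i}(\cdots)\bigr) = \max_{i,\rho_0,\rho_i}(\cdots)$. These rewrite the left value into the right value: for $\text{CPlus}$ and $\text{CMult}$ the sum/product of the two maxima becomes a single maximum over the merged branch set, and for $\text{Flat}$ the $\bclub$ nested in a branch unfolds into branches whose environments and constraints are exactly the unions and conjunctions in the rule.

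The main obstacle is bookkeeping around the convention $\max \emptyset := 0$, compounded by the paper's convention $\infty \cdot 0 = \infty$: if a branch constraint $\Phi_i$ is unsatisfiable under $\rho$ then $S_i = \emptyset$, and a naive application of the algebraic identities above can fail (for instance $0 \cdot \infty$ on one side against $\max\emptyset = 0$ on the other). I would discharge this by carrying, as an auxiliary invariant along the induction, the fact that every $\bclub$ occurring in $Q$ has, under every valuation in scope, at least one branch with a satisfiable constraint. This holds for the $\bclub$ terms produced by $C$ (the $\smax$ and $\ssup$ translations use the trivial constraint, and $\scase$ splits on $S = 0$ versus $S = i + 1$, exactly one of which holds for each natural-number value of $S$), it is preserved by \Cref{lem:club-push}, and it is preserved by each reduction rule---the crucial point being that two independently satisfiable constraints over disjoint variable environments are jointly satisfiable, so the branch-merging in $\text{Flat}$, $\text{CPlus}$, $\text{CMult}$ never destroys satisfiability. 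Under this invariant every relevant value set is nonempty, the nonnegative-max arithmetic applies without exceptional cases, and the identities of the previous paragraph go through verbatim. I expect checking these corner cases---and, in particular, confirming that $\text{CPlus}$ and $\text{CMult}$ do not change the value when some branch is vacuous---to be where essentially all of the care in the proof goes.
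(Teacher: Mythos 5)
Your proposal follows the same overall strategy as the paper's own proof: induction on the derivation of $Q \cred Q'$, with the congruence rules handled by the induction hypothesis (the paper does not spell out the need to extend $\phi$ in the $\text{Red}$ case, but that is implicit) and the three substantive rules $\text{Flat}$, $\text{CPlus}$, $\text{CMult}$ discharged by unfolding the $\bclub$ semantics and applying max-algebra identities over the nonnegative extended reals.

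The noteworthy difference is that you are more careful than the paper about the interaction between $\max\emptyset := 0$ and the convention $\infty \cdot 0 = \infty$. The paper's $\text{CPlus}$ argument begins ``the first maximum is achieved at some $i^*$,'' and then asserts that the sum of the two selected branch values is bounded by the merged branch's value. This step is not literally correct when, say, $a_{i^*}$'s inner valuation set is empty (so $a_{i^*}=0$ but the merged branch $(i^*,j^*)$ is vacuous and contributes $0$, not $b_{j^*}$); the paper's handling of the reverse direction acknowledges empty sets, but the forward direction silently assumes a witness valuation exists at the maximizing branch. Your auxiliary invariant---that every $\bclub$ in scope has at least one satisfiable branch under every relevant valuation, established at the source via the shape of $C$ and preserved by $\text{Flat}$, $\text{CPlus}$, $\text{CMult}$ and by \Cref{lem:club-push}---is exactly what is needed to pick the maximizing index among witnessed branches, so this is a genuine tightening of the paper's argument rather than a detour. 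One caution: the invariant by itself does not fully insulate $\text{CMult}$ from the $\infty \cdot 0$ convention, because a branch whose values are unbounded over $\N$ has supremum $\infty$ that is never \emph{attained} by any valuation; in that situation $0 \cdot \sup\{n\}$ and $\sup\{0\cdot n\}$ can disagree. You should either argue that such shapes cannot occur among the $Q$s the algorithm actually emits (plausible, given the annotation restriction and \Cref{lem:club-push} already absorbing the standard factors), or strengthen the invariant accordingly; the paper's proof does not address this corner at all, so flagging it is a contribution, not a defect of your write-up.
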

\begin{proof}
  \aa{What does $\phi_i; \Phi_i\models\rho$ mean?}
  By induction on the derivation of $Q \cred Q'$. The cases $\text{Plus}$,
  $\text{Mult}$ and $\text{Red}$ are immediate by induction. The other cases all
  follow by the semantics of $\bclub$.
  \begin{description}
    \item[Case $\text{Flat}$:]
      The semantics of $Q$ under valuation $\rho$ is equivalent
      to the larger of
      \[
        \max_{\rho'} \{ \max_{i, \rho''} \{ \lb R_i \rb_{\rho \cup \rho' \cup
          \rho''} \st \phi_i; \Phi_i \models \rho' \} \st \phi; \Phi \models
        \rho' \}
      \]
      and $N = \llb \club{V} \rrb_{\rho}$. The first expression can be seen to
      be
      \[
        M = \max_{i, \rho', \rho''} \{ \lb R_i \rb_{\rho \cup \rho' \cup
          \rho''} \st \phi, \phi_i; \Phi \land \Phi_i \models \rho', \rho''
        \},
      \]
      and the semantics of $Q'$ under the valuation can be seen to be
      $\max(M, N)$, as desired.
    \item[Case $\text{CPlus}$:]
      The interpretation of $Q$ under valuation $\rho$ is
      \[
        \max_i \max \{ \lb R_i \rb_{\rho \cup \rho_i} \st \phi_i; \Phi_i
        \models \rho_i \}
        +
        \max_j \max \{ \lb R_j' \rb_{\rho \cup \rho_j'} \st \phi_j'; \Phi_j'
        \models \rho_j' \}
      \]
      The first maximum is achieved at some $i^*$, and the second maximum is
      achieved at $j^*$. Then,
      \[
        \max \{ \lb R_{i^*} \rb_{\rho \cup \rho_i} \st \phi_{i^*}; \Phi_{i^*}
        \models \rho_{i} \}
        +
        \max \{ \lb R_{j^*}' \rb_{\rho \cup \rho_j'} \st \phi_{j^*}';
        \Phi_{j^*}' \models \rho_j' \}
      \]
      is at most
      \[
        \max \{ \lb R_{i^*} + R_{j^*}' \rb_{\rho \cup \rho_i \cup \rho_j'}
        \st \phi_{i^*}, \phi_{j^*}'; \Phi_{i^*} \land \Phi_{j^*}' \models
        \rho_i, \rho_j' \}
        \leq
        \llb \club{ \csens{\phi_i \cup \phi_j'}{\Phi_i \land \Phi_j'}{R_i +
            R_j'} }_{ij} \rrb_\rho
      \]
      since $\phi_{i^*}, \phi_{j^*}$ are assumed to be disjoint. For the reverse
      direction, consider the semantics of $Q'$:
      \[
        \max_{ij} \max \{ \lb R_i + R_j'\rb_{\rho \cup \rho_i \cup \rho_j'}
        \st \phi_i, \phi_j'; \Phi_i \land \Phi_j' \models \rho_i, \rho_j' \}
      \]
      If there are no valuations such that $\phi_i \cup \phi_j';
      \Phi_i \land \Phi_j' \models \rho_i, \rho_j'$, then we are done (we've
      defined the max of an empty set to be $0$). If the maximum is achieved at
      some $\rho, \rho'$ at $i^*, j^*$, then we know $\phi_{i^*}; \Phi_{i^*} \models
      \rho$ and $\phi_{j^*}; \Phi_{j^*} \models \rho'$, when the maximum is at
      most $\llb Q \rrb_\rho$.
    \item[Case $\text{CMult}$:]
      This case follows like the previous case.
  \end{description}
\end{proof}

\end{document}
